\let\chapter\section 
\pgfplotsset{compat=1.6}
\pgfplotsset{colormap={myCmap}{rgb255=(247,251,255) rgb255=(222,235,247) rgb255=(198,219,239) rgb255=(158,202,225) rgb255=(107,174,214) rgb255=(66,146,198) rgb255=(33,113,181) rgb255=(8,81,156) rgb255=(8,48,107)}}
\definecolor{nutsColor}{rgb}{0.9941,    0.7754,    0.6172}%
\definecolor{tmdrColor}{rgb}{0.5645,    0.7207,    0.8535}%
\definecolor{dramColor}{rgb}{0.84766,0.27734,0.00391}%
\newcommand{\real}{\mathbb{R}}
\def\argmin{\mathop{\mathrm{argmin}}}
\newcommand{\Varsimp}{\mathbb{V}\mathrm{ar}}
\newcommand{\eqd}{\buildrel d \over =}
\definecolor{nutsColor}{rgb}{0.9941,    0.7754,    0.6172}%
\definecolor{tmdrColor}{rgb}{0.5645,    0.7207,    0.8535}%
\definecolor{dramColor}{rgb}{0.84766,0.27734,0.00391}%
\newcommand{\trv}{\theta}
\newcommand{\rrv}{r}
\newcommand{\pd}{n}
\newcommand{\td}{\pi} 
\newcommand{\rd}{p}
\newcommand{\tmeas}{\mu_\trv}
\newcommand{\rmeas}{\mu_\rrv}
\newcommand{\emap}{T} 
\newcommand{\demap}{{\nabla T}}
\newcommand{\imap}{{\widetilde{T}}} 
\newcommand{\dimap}{{\nabla \widetilde{T} }}
\newcommand{\bspace}{\mathcal{B}(\real^D)} 
\newcommand{\tk}{P} 
\newcommand{\mapp}{\gamma} 
\newcommand{\mappa}{\bar{\mapp}}
\numberwithin{theorem}{section}
\newcommand{\TheTitle}{Transport map accelerated Markov chain Monte Carlo} 
\newcommand{\TheAuthors}{M.\ D.\ Parno and Y.\ M.\ Marzouk }
\headers{\TheTitle}{\TheAuthors}
\title{{\TheTitle}\thanks{Submitted to the editors 9 June 2017.
\funding{This work was supported by the US Department of Energy, Office of Advanced Scientific Computing Research (ASCR), under grant number DE-SC0009297, as part of the DiaMonD Multifaceted Mathematics Integrated Capability Center.}}}
\author{
  Matthew D.\ Parno\thanks{US Army Engineer Research and Development Center (ERDC), Hanover, NH
    (\email{Matthew.D.Parno@usace.army.mil}).}
  \and
  Youssef M.\ Marzouk\thanks{Massachusetts Institute of Technology, Cambridge, MA (\email{ymarz@mit.edu}).}
}
\begin{document}
\maketitle

\begin{abstract}
We introduce a new framework for efficient sampling from complex probability distributions, using a combination of transport maps and the Metropolis-Hastings rule. The core idea is to use deterministic couplings to transform typical Metropolis proposal mechanisms (e.g., random walks, Langevin methods) into non-Gaussian proposal distributions that can more effectively explore the target density. Our approach adaptively constructs a lower triangular transport map---an approximation of the Knothe-Rosenblatt rearrangement---using information from previous MCMC states, via the solution of an optimization problem. This optimization problem is convex regardless of the form of the target distribution. It is solved efficiently using a Newton method that requires no gradient information from the target probability distribution; the target distribution is instead represented via samples. Sequential updates enable efficient and parallelizable adaptation of the map even for large numbers of samples. We show that this approach uses inexact or truncated maps to produce an adaptive MCMC algorithm that is ergodic for the exact target distribution. Numerical demonstrations on a range of parameter inference problems show order-of-magnitude speedups over standard MCMC techniques, measured by the number of effectively independent samples produced per target density evaluation and per unit of wallclock time.

\end{abstract}

\begin{keywords}Adaptive MCMC, Bayesian inference, measure transformation, optimal transport, Knothe-Rosenblatt rearrangement\end{keywords}

\begin{AMS}
62F15, 65C05, 65C40
\end{AMS}

\section{Introduction}
\label{sec:intro}

Markov chain Monte Carlo (MCMC) algorithms provide an enormously flexible approach for sampling from complex target probability distributions, using only evaluations of an unnormalized probability density \cite{Gelman2003, RobertBook2004, Liu2004, Brooks2011}. Within this general framework, the Metropolis-Hastings algorithm \cite{Metropolis1953,Hastings1970} is one of the most broadly applicable and well studied sampling strategies.
%
It combines a simple proposal distribution with an accept/reject step to create the transition kernel for a Markov chain that has the desired target as its stationary distribution. Under some additional technical conditions on the proposal $q_\trv$ and target density $\td$, the Markov chain is also ergodic \cite{Roberts2004}.

This paper introduces a new approach to the design of Metropolis-Hastings algorithms, based on the adaptive construction of \textit{transport maps} between the target probability distribution and a simple reference distribution. These maps are monotone and typically nonlinear transformations of the target distribution that render it easier to sample, much like a preconditioner expedites the solution of a linear system. To put our approach in context, we first recall some challenges underlying MCMC sampling and current methods for addressing them.



Effective MCMC proposal mechanisms seek to make successive iterates of the Markov chain as independent as possible. When estimating an expectation over the target distribution, efficient ``mixing'' in this sense reduces the variance of estimates computed from the MCMC samples. A useful intuition is that effective MCMC proposals aim to approximate the target distribution at least locally (e.g., in the case of random-walk Metropolis or Langevin proposals) or perhaps globally (e.g., in the case of Metropolis independence samplers). Consider, for example, a Gaussian proposal density centered at the current state of the chain, as in a random-walk Metropolis algorithm. The adaptive Metropolis scheme of \cite{Haario2001} sequentially updates the covariance of this proposal in order to reflect the covariance of $\td$. In a similar fashion, \cite{Atchade2006} uses the empirical covariance of the target to scale proposals in a Metropolis-adjusted Langevin algorithm (MALA), which also uses the gradient of $\td$ to push the proposal mean towards regions of higher target density.

Many other MCMC algorithms use local derivative information to improve sampling 
of the target distribution. Hamiltonian Monte Carlo methods (HMC), as in 
\cite{Neal2011} and \cite{Hoffman2011}, propose samples via trajectories of a 
Hamiltonian dynamical system defined on an augmented state space. Computing 
these trajectories requires many evaluations of the gradient of the target 
density, but can produce large s.pdf that have high acceptance probability. The 
stochastic Newton method of \cite{Martin2012} uses higher-order derivative 
information, in the form of approximate Hessians of the local 
log-posterior, to scale a Gaussian proposal in high dimensions. The 
geometrically-motivated approach of \cite{Girolami2011} also uses 
higher-order derivative information to define a local metric for both 
Langevin proposals and Hamiltonian dynamics on a Riemannian manifold.
Contrasting with these schemes but also related to our work are adaptive Metropolis independence samplers \cite{Andrieu2006}, which 
construct a global approximation of the target using, for example, Gaussian mixtures. This approximation is updated recursively from past MCMC samples using a stochastic approximation scheme.

The theory of optimal transport has a rich history that is somewhat separate 
from stochastic simulation and MCMC. The notion of an optimal transport map 
dates back to \cite{Monge1781}, who sought a deterministic coupling between 
(probability) measures that is optimal in the sense of minimizing an expected 
transport cost. This cost is defined by a function $c(\theta, r)$ 
that can be interpreted as the cost of transporting one unit of mass from 
$\theta$ to $r$. A relaxation of the Monge problem to more general couplings 
was introduced by Kantorovich \cite{Kantorovich1942, Vershik2013}; yet under 
certain conditions, a minimizer of the Kantorovich formulation also solves the 
Monge problem, i.e., is an optimal transport map. For a contemporary development 
of this subject, see \cite{Villani2009, Villani2003} and \cite{Rachev1998}. 
Optimal transport between discrete measures has been used for Bayesian inference in \cite{Reich2013}, where the solution of a discrete assignment problem yields a consistent ensemble transformation scheme to replace resampling, in the context of a Bayesian filter. This problem differs from those considered here, however, as we focus on transport between continuous probability measures. 
\cite{Moselhy2011} introduced the idea of continuous transport maps that characterize the Bayesian posterior distribution as a pushforward of the prior distribution. In this formulation, the transport map is used to generate independent samples from a distribution that in principle can be made arbitrarily close to $\td$. However, constructing sufficiently accurate maps can be computationally taxing.
The implicit sampling approach of \cite{chorin2009implicit,Chorin2010,Morzfeld2012} and the randomize-then-optimize approach of \cite{Bardsley2014} compute the action of certain transport maps sample-by-sample, without representing the maps explicitly. But these samples do not come from $\td$ and thus require reweighing in order to represent the target. 
Implementing either of these approaches requires access to gradients of $\td$. 

In this paper, we will use \textit{approximate} transport maps to achieve \textit{exact} sampling from the target distribution, by integrating 
transport maps with MCMC. We \textit{reverse} the direction of the maps computed in \cite{Moselhy2011}, and adaptively construct our maps (now from the target to a simple reference distribution)
by solving an optimization problem based on MCMC samples. We will show that the optimization problem has a remarkably simple structure: it is convex regardless of the form of the target distribution and separable across dimensions of the parameter space; it also affords substantial opportunities for parallel computation and efficient sequential updating.  Moreover, computing derivatives of the optimization objective requires no derivative information from the target probability density.  We will analyze the scheme from the theoretical perspective of adaptive MCMC, allowing us to establish ergodicity of the resulting chain.
The transport map constructed in this way aims to represent the entire target distribution as the pullback of a Gaussian reference measure, and in that sense our approach is a global one. Unlike adaptive Metropolis independence samplers, however, we approximate the target density not by choosing from a particular family of densities, but by building an invertible transformation between the target distribution and a reference distribution. Critically, this structure enables us to use both local proposals and global/independence proposals, and to transition naturally between the two as the transport map becomes more accurate. The transport map is not tied to any particular type of MCMC proposal; it instead provides a framework for improving many standard proposal schemes.

The remainder of this paper is organized as follows. Section~\ref{sec:maps} will provide relevant background on transport maps and explain how suitable maps can be constructed from samples. Section~\ref{sec:mapmcmc} will formulate the map-based MCMC approach, while Section~\ref{sec:adaptmcmc} will introduce adaptive strategies. A theoretical convergence analysis is provided Section~\ref{sec:convergence}. Section~\ref{sec:perf} compares the performance of map-based MCMC with that of existing state-of-the-art samplers on a range of test problems.

\section{Construction of transport maps}
\label{sec:maps}
Transport maps will be used in Sections \ref{sec:mapmcmc} and \ref{sec:adaptmcmc} to define a new class of MCMC methods.  This section first introduces transport maps in the context of optimal transportation (Section~\ref{sec:maps:overview}) and then describes a practical method for constructing maps from samples (Section~\ref{sec:maps:construct}).

\subsection{Optimal transportation}
\label{sec:maps:overview}

Consider two Borel probability measures on $\real^{\pd}$, $\tmeas$ and $\rmeas$. We will refer to these as the \textit{target} and \textit{reference} measures, respectively, and associate them with random variables $\trv \sim \tmeas$ and $\rrv \sim \rmeas$. A transport map $\emap: \real^{\pd} \rightarrow \real^{\pd}$ is a deterministic transformation that pushes forward $\tmeas$ to $\rmeas$, yielding 
\begin{equation}
\rmeas = \emap_{\sharp} \tmeas .
\label{eq:measconst}
\end{equation}
In other words, $\rmeas(A) = \tmeas \left ( \emap^{-1}(A) \right )$ for any Borel set $A \subseteq \real^{\pd}$. In terms of the random variables, we write $\rrv \eqd \emap (\trv)$, where $\eqd$ denotes equality in distribution. The transport map induces a \textit{deterministic coupling} of probability measures \cite{Villani2009}. 

Of course, there can be infinitely many transport maps between two probability 
measures. On the other hand, it is possible that no transport map exists: 
consider the case where $\tmeas$ has an atom but $\rmeas$ does not. If a 
transport map exists, one way of regularizing the problem and finding a unique 
map is to introduce a cost function $c(\trv,\rrv)$ on $\real^{\pd} \times 
\real^{\pd}$ that represents the work needed to move one unit of mass from 
$\trv$ to $\rrv$. Using this cost function, the total cost of pushing $\tmeas$ 
to $\rmeas$ is
\begin{equation}
  C(\emap) = \int_{\real^{\pd}} c\left(\trv,\emap(\trv)\right) \, d \tmeas(\trv) \label{eq:mongecost}.
\end{equation}
Minimization of this cost subject to the constraint $\rmeas = \emap_{\sharp} \tmeas$ is called the Monge problem, after \cite{Monge1781}.  A transport map satisfying the measure constraint \eqref{eq:measconst} and minimizing the cost in \eqref{eq:mongecost} is an \textit{optimal} transport map. The celebrated result of \cite{Brenier1991}, later generalized by \cite{McCann1995}, shows that this map exists, is unique, and is monotone $\tmeas$-a.e.\ when $\tmeas$ is atomless and the cost function $c(\trv,\rrv)$ is quadratic. Generalizations of this result to other cost functions and spaces have been established in \cite{Champion2011, Ambrosio2013, Feyel2004, Bernard2004}.

The choice of cost function in \eqref{eq:mongecost} naturally influences the structure of the map. For illustration, consider the Gaussian case of $\trv\sim N(0,I)$ and $\rrv\sim N(0,\Sigma)$ for some positive definite covariance matrix $\Sigma$.  The associated transport map is linear: $\emap = S \trv$, where the matrix $S$ is any square root of $\Sigma$. When the transport cost is quadratic, $c(\trv,\rrv) = | \trv - \rrv |^2$, $S$ is the symmetric square root obtained from the eigendecomposition of $\Sigma$, $\Sigma = V \Lambda V^\top$ and $S = V \Lambda^{1/2} V^\top$ \cite{Olkin1982}. If the cost is instead taken to be the following weighted quadratic
\begin{equation}
  c(\trv, \rrv) = \sum_{i=1}^{\pd} t^{i-1} | \trv_i- \rrv_i |^2, \ t > 0,
  \label{eq:rosencost}
\end{equation}
then, as $t \rightarrow 0$, the optimal map becomes lower triangular and equal to
 the Cholesky factor of $\Sigma$. Generalizing to non-Gaussian 
$\tmeas$ and $\rmeas$, as $t \rightarrow 0$,  the optimal maps $\emap_t$ obtained with the cost function 
\eqref{eq:rosencost} are shown by \cite{Carlier2010} and \cite{Bonnotte2013} to 
converge to the \textit{Knothe-Rosenblatt} (KR) rearrangement 
\cite{Rosenblatt1952, Knothe1957} between probability measures. The KR map exists and is 
uniquely defined if $\tmeas$ is absolutely continuous with respect to Lebesgue 
measure.  The KR map also has several useful properties: the Jacobian matrix 
of $\emap$ is lower triangular and has positive diagonal entries $\tmeas$-a.e. 
Because of this triangular structure, the Jacobian determinant and the inverse 
of the map are easy to evaluate. This is an important computational 
advantage that we exploit in Section \ref{sec:maps:construct}.  

We will employ lower triangular maps in our MCMC construction, but 
without directly appealing to the transport cost in \eqref{eq:rosencost}. While 
this cost is meaningful for theoretical analysis and even numerical 
continuation schemes \cite{Carlier2010}, we find that for small $t$, the sequence of weights 
$\{t^{i}\}$ quickly produces numerical underflow as the parameter 
dimension $\pd$ increases. Instead, we will directly impose the lower triangular 
structure and search for a map $\imap$ that \textit{approximately} satisfies the 
measure constraint, i.e., for which $\rmeas\approx\imap_{\sharp}\tmeas$. This 
approach is a key difference between our construction and standard optimal 
transportation. 

Numerical challenges with \eqref{eq:rosencost} are not the only reason to seek approximate maps.
Suppose that the target measure $\tmeas$ is a Bayesian posterior or some other intractable distribution, but let the reference $\rmeas$ be something simpler, e.g., a Gaussian distribution with identity covariance. In this case, the complex structure of $\tmeas$ is captured by the map $\emap$. Sampling and other tasks can then be performed with the simple reference distribution instead of the more complicated distribution. In particular, if a map exactly satisfying \eqref{eq:measconst} were available, sampling the target distribution $\tmeas$ would simply require drawing a sample $\rrv^\prime \sim \rmeas$ and pushing it to the target space with $\theta^\prime = \emap^{-1}(\rrv^\prime)$. This concept was employed by \cite{Moselhy2011} for posterior sampling.
 Depending on the structure of the reference and the target, however, finding an exact map may be computationally challenging. In particular, if the target contains many nonlinear dependencies that are not present in the reference distribution, the \textit{representation} of the map $\emap$ (e.g., in some canonical basis) can become quite complex.
 Hence, it is desirable to work with approximations to $\emap$. Below we will demonstrate that even approximate maps can capture the key structure of the target distribution and thus be used to construct more efficient MCMC proposals.

Another reason for seeking approximate transport maps is regularity. There is an 
extensive theory on the regularity of optimal transport---with much that is 
understood, along with some open questions \cite{Caffarelli1992}. Since we are only concerned with approximate measure transformations, we can impose regularity conditions that may not hold for the optimal map or the KR map. In particular, we will require that $\imap$ and its inverse have continuous derivatives on $\real^{\pd}$, i.e., that $\imap$ be a $C^1$-diffeomorphism. Later we will impose additional constraints on the derivatives of $\imap$, which will prove useful for our theoretical analysis of map-based MCMC.

\subsection{Constructing maps from samples}\label{sec:maps:construct}

As noted above, we will seek transport maps that have a lower triangular structure, i.e.,
\begin{equation}
\emap(\trv_1,\trv_2,\ldots ,\trv_\pd) = \left[\begin{array}{l}\emap_1(\trv_1)\\ \emap_2(\trv_1,\trv_2)\\ \vdots \\ \emap_\pd(\trv_1,\trv_2,\ldots ,\trv_\pd)\end{array}\right],\label{eq:lowtriform}
\end{equation}
where $\trv_i$ denotes the $i$th component of $\trv$ and $\emap_i:\real^i 
\rightarrow \real$ is $i$th component of the map $\emap$. For simplicity, we 
assume that both the target and reference measures are absolutely 
continuous on $\real^\pd$, with densities $\pi$ and $\rd$, respectively. This 
assumption precludes the existence of atoms in $\tmeas$ and thus makes the KR 
coupling well-defined. To find a useful approximation of the KR coupling, we 
will define a map-induced density $\tilde{\td}(\trv)$ and minimize the distance 
between this map-induced density and the target density $\td(\trv)$. The next 
three subsections describe the setup of this optimization problem.

Note that when the reference measure is a standard Gaussian (as we shall prescribe below), the construction of a map from target samples to the reference is a goal shared by the iterative Gaussianization scheme of \cite{Laparra2011} and the density estimation schemes of \cite{Tabak2010,Tabak2013}. Both of these approaches \textit{compose} a series of simple maps (e.g., sigmoid-type functions of one variable in \cite{Tabak2013}) in order to achieve the desired transformation, but can require a large number of such layers in order to converge. Also, the resulting maps are not triangular. Here, we seek to develop a more expressive all-at-once approximation of the triangular KR map.

\subsubsection{Optimization objective}

Let $\rd$ be the probability density associated with the reference measure $\rmeas$, and consider a transformation $\imap(\trv)$ that is monotone and differentiable $\tmeas$-almost everywhere. (In Section~\ref{sec:constraints} we will discuss constraints to ensure monotonicity; moreover, we will employ maps that are everywhere differentiable by construction.) Now consider the pullback of $\rmeas$ through $\imap$. The density of this pullback measure is
\begin{equation}
\tilde{\td}(\trv) = \rd ( \imap(\trv) ) | \det \dimap(\trv) | \label{eq:mapInducedTarget},
\end{equation}
where $\demap(\trv)$ is the Jacobian of the map, evaluated at $\trv$, and $| \det \dimap(\trv) |$ is the absolute value of the Jacobian determinant. We call $\tilde{\td}$ the \textit{map-induced density}.

If the measure constraint $\rmeas = \imap_{\sharp} \tmeas$ were exactly satisfied, the map-induced density $\tilde{\td}$ would equal the target density $\td$.  This suggests finding $\imap$ by minimizing a distance or divergence; to this end, we use the Kullback-Leibler (KL) divergence,
\begin{eqnarray}
D_{\text{KL}}(\td \| \tilde{\td}) & = & \mathbb{E}_{\td}\left[\log{\left(\frac{\td(\trv)}{\tilde{\td}(\trv)}\right)}\right] \label{eq:klexpr} \\
& = & \mathbb{E}_{\td} \bigg[\log{\td(\trv)} - \log{\rd\left(\imap(\trv)\right)} - \log{\left|\det \dimap(\trv) \right|}\bigg].\nonumber
\end{eqnarray}
We can then find transport maps by solving the following optimization problem:
\begin{equation}
\min_{T \in \mathcal{T}} \mathbb{E}_{\td} \bigg[ - \log{\rd\left(\emap(\trv)\right)} - \log{\left|\det \demap(\trv) \right|}\bigg],
\label{eq:opt}
\end{equation}
where $\mathcal{T}$ is some space of lower-triangular functions from $\real^\pd$ to $\real^\pd$. If $\mathcal{T}$ is large enough to include the KR map, then the solution of this optimization problem will exactly satisfy \eqref{eq:measconst}. Note that we have removed the $\log{\td(\trv)}$ term in \eqref{eq:klexpr} from the optimization objective \eqref{eq:opt}, as it is independent of $T$. 
If the exact coupling condition is satisfied, however, then the quantity inside the expectation in \eqref{eq:klexpr} becomes constant in $\trv$. If $\td$ is unnormalized, this constant is in fact the log of the normalizing constant of $\td$.

%

Note that the KL divergence is not symmetric.  We choose the direction above so that we can use Monte Carlo samples (in particular, MCMC samples) to approximate the expectation with respect to $\td$. Furthermore, as we will show below, this direction allows us to dramatically simplify the solution of \eqref{eq:opt} when $\rd$ is Gaussian.
Suppose that we have $K$ samples from $\td$, denoted by $\{\trv^{(1)},\trv^{(2)},\ldots ,\trv^{(K)}\}$. Taking a sample-average approximation (SAA) \cite{Kleywegt2002}, we replace the objective in \eqref{eq:opt} with its Monte Carlo estimate
and, for this fixed set of samples, solve the corresponding deterministic optimization problem:
\begin{equation}
\imap = \arg \min_{T \in\mathcal{T}} \frac{1}{K}\sum_{k=1}^{K} \Bigg[ -\log{\rd\left(\emap(\trv^{(k)})\right)} - \log{\left|\det \demap(\trv^{(k)}) \right|}\Bigg].
\label{eq:mckl}
\end{equation}
The solution $\imap$ is an approximation to the exact transport map for two reasons: first, we have used an approximation of the expectation operator; and second, we have restricted the feasible domain of the optimization problem to $\mathcal{T}$. The specification of $\mathcal{T}$ is the result of constraints, discussed in Section~\ref{sec:constraints}, and of the finite-dimensional parameterization of the map, discussed in Section~\ref{sec:maps:param}.

%








\subsubsection{Constraints}
\label{sec:constraints}
To write the map-induced density $\tilde{\td}$ as in 
\eqref{eq:mapInducedTarget}, it is sufficient that $\imap$ be differentiable and 
monotone, i.e., $( \trv^\prime - \trv) ^\top ( \imap(\trv^\prime) - 
\imap(\trv)  )\geq 0$ for distinct points $\trv, \trv^\prime \in \real^\pd$. 
Since we assume that $\tmeas$ has no atoms, to ensure that the pushforward 
$\imap_\sharp \tmeas$ also has no atoms we only need to require that $\imap$ be strictly monotone.
%
%
To show ergodicity of the MCMC samplers constructed in Sections~\ref{sec:mapmcmc} and \ref{sec:adaptmcmc}, however, we will need to impose the stricter condition that $\imap$ be bi-Lipschitz,
\begin{eqnarray}
 \lambda_{\text{min}}  \|\trv^\prime - \trv \| \leq \|\imap(\trv^\prime) - \imap(\trv)\| \leq  \lambda_{\text{max}} \|\trv^\prime - \trv \| \label{eq:dUB},
\end{eqnarray}
for some $0 < \lambda_{\text{min}} \leq \lambda_{\text{max}} < \infty$. This condition implies that $\imap$ is differentiable almost everywhere. But the maps we will employ are, by construction, everywhere differentiable and lower triangular, and hence the lower Lipschitz condition in \eqref{eq:dUB} is equivalent to a lower bound on the map derivative,
\begin{equation}
\frac{\partial \imap_i}{ \partial \trv_i} \geq \lambda_{\text{min}} \label{eq:derivLB}, \ i=1 \ldots \pd. 
\end{equation}
Since $\imap$ is lower triangular, the Jacobian $\dimap$ is also lower triangular, and \eqref{eq:derivLB} ensures that the Jacobian is positive definite. Because the Jacobian determinant is then positive, we can remove the absolute value from the determinant terms in \eqref{eq:opt}, \eqref{eq:mckl}, and related expressions. This is an important step towards arriving at a convex optimization problem (see Section~\ref{subsubsec:convex}).  We stress that while a nonzero $\lambda_{\text{min}}$ is required for our theoretical analysis, it does not need to be tuned in order to apply the algorithm in practice; typically we just choose a very small value, e.g.,  $\lambda_{\text{min}} = 10^{-8}$.  An explicit value for $\lambda_{\text{max}}$ can also be prescribed, but can instead be defined implicitly through the construction described next.

Many representations of $\imap$ (e.g., polynomial expansions) will yield maps with unbounded derivatives as $\|\trv\|\rightarrow \infty$.  Clearly, such maps would not satisfy the upper bound in \eqref{eq:dUB}.  Fortunately, a simple correction ensures \eqref{eq:dUB} is satisfied. 
Let $\imap: \real^\pd \rightarrow \real^\pd$ be a continuously differentiable function whose derivatives grow without bound as $\|\trv\|\rightarrow \infty$, but are finite within a ball $B(0,R)$ of radius $R<\infty$.  We can satisfy \eqref{eq:dUB} by setting $\imap^R(\trv) = \imap(\trv)$ over $B(0,R)$ and forcing $\imap^R(\trv)$ to be linear outside of this ball.  More precisely, let $w(\trv) := R\frac{\trv}{\|\trv\|}$ be the projection of $\trv$ to the closest point in $B(0,R)$ and let $d(\trv): = \frac{\trv}{\|\trv\|}\cdot \nabla \imap(w(\trv))$ be the directional derivative of $\imap$ at the ball boundary. We then define $\imap^R(\trv)$ in terms of $\imap(\trv)$ as
\begin{equation}
\imap^R(\trv) = \left\{ \begin{array}{ll} \imap(\trv) & \quad \|\trv\|\leq R\\ \imap\left(w(\trv)\right) + d(\trv)(\trv-w(\trv)) &  \quad \|\trv\|> R \end{array}\right. .\label{eq:ubMap}
\end{equation}
Note that a continuously differentiable $\imap(\trv)$ will yield a continuously differentiable $\imap^R(\trv)$. Moreover, if $\imap(\trv)$ satisfies the lower bound in \eqref{eq:dUB}, $\imap^R(\trv)$ will satisfy both the lower and upper bounds in \eqref{eq:dUB}. 


When a finite number of samples are used in the Monte Carlo sum of \eqref{eq:mckl}, $R$ can usually be chosen so that all the samples lie in $B(0,R)$, and hence $\imap$ can be evaluated directly.  In this setting, a value of $R$ need not be explicitly prescribed. However, our asymptotic convergence theory requires finite derivatives of the map as $\|\trv\|\rightarrow \infty$ in order to achieve the correct tail behavior, which is guaranteed by using $\imap^R$ as in \eqref{eq:ubMap}.

Unfortunately, we cannot generally enforce the lower bound in \eqref{eq:derivLB} over the entire support of the target measure.  A weaker, but practically enforceable, alternative is to require the map to be increasing at each sample used to approximate the KL divergence. In other words, we use the constraints
\begin{equation}
\left.\frac{\partial \imap_i}{\partial \trv_i}\right|_{\trv^{(k)}} \geq \lambda_{\text{min}} \quad \forall  i \in\{1,2,\ldots, \pd \}, \ \forall k \in\{1,2,\ldots, K\}\label{eq:mono2}.
\end{equation}
Practically, we have found that \eqref{eq:mono2} is sufficient to ensure the monotonicity of a map represented by a finite basis expansion.

\subsubsection{Convexity and separability of the optimization problem}
\label{subsubsec:convex}

Now we consider the task of minimizing the objective in \eqref{eq:mckl}.  The $1/K$ term can immediately be discarded, and the derivative constraints above let us remove the absolute value from the determinant term. 
While one could tackle the resulting minimization problem directly, we can simplify it further by exploiting the structure of the reference density and the triangular map.

First, we let $\rrv\sim N(0,I)$.  This choice of reference distribution yields
\begin{equation}
\log\rd(\rrv) = -\frac{\pd}{2}\log(2\pi) - \frac12\sum_{i=1}^\pd\rrv_i^2. \label{eq:gaussPart}
\end{equation}
Next, the lower triangular Jacobian $\dimap$ simplifies the determinant term in \eqref{eq:mckl} to give
\begin{equation}
\log{\left|\det \dimap(\trv) \right|} = 
\log{ ( \det\dimap(\trv) )} = \log\left(\prod_{i=1}^{\pd} \frac{\partial \imap_i}{\partial \trv_i}\right) = \sum_{i=1}^{\pd}\log\frac{\partial \imap_i}{\partial \trv_i} \label{eq:detexp}.
\end{equation}
The objective function in \eqref{eq:mckl} now becomes
\begin{equation}
C(\imap) = \sum_{i=1}^{\pd} \sum_{k=1}^{K}\left[  \frac12 \imap_i^2(\trv^{(k)}) - \log\left.\frac{\partial \imap_i}{\partial \trv_i}\right|_{\trv^{(k)}}\right]\label{eq:fullopt}.
\end{equation}
This objective is \textit{separable}: it is a sum of $\pd$ terms, each involving a single component $\imap_i$ of the map. The constraints in \eqref{eq:mono2} are also separable; there are $K$ constraints for each $\imap_i$, and no constraint involves multiple components of the map. Hence the entire optimization problem separates into $\pd$ individual optimization problems, one for each dimension of the parameter space. Moreover, each optimization problem is \textit{convex}: the objective is convex and the feasible domain is closed (note the $\geq$ operator in the linear constraints \eqref{eq:mono2}) and convex.



In practice, we must solve the optimization problem over some finite-dimensional space of candidate maps. Let each component of the map be written as $\imap_i(\trv;\mapp_i)$, $i=1\ldots \pd$, where $\mapp_i \in \real^{M_i}$ is a vector of parameters, e.g., coordinates in some basis. 
The complete map is then defined by the parameters $\mappa = [\mapp_1, \mapp_2,\ldots , \mapp_{\pd}]$. Note that there are distinct parameter vectors for each component of the map. The optimization problem over the parameters remains separable, with each of the $\pd$ different subproblems given by:
\begin{equation}
\begin{aligned}
& \underset{\mapp_i}{\min}
& & \sum_{k=1}^{K}\left[  \frac12 \imap^2_i(\trv^{(k)}; \mapp_i) - \log\left.\frac{\partial \imap_i(\trv;\mapp_i)}{\partial \trv_i}\right|_{\trv^{(k)}}\right]\\
& \text{s.t.}
& & \left.\frac{\partial \imap_i(\trv;\mapp_i)}{\partial \trv_i}\right|_{\trv^{(k)}} \geq \lambda_{\text{min}}, \quad k\in\{1,2,\ldots ,K\},
\end{aligned}\label{eq:basicopt}
\end{equation}
for $i=1\ldots \pd$. All of these optimization subproblems can be solved in parallel without evaluating the target density $\td(\trv)$. Since the map components $\imap_i$ are linear in the coefficients $\mapp_i$, each finite-dimensional problem is still convex.  

\subsection{Map parameterization}
\label{sec:maps:param}
In this work, we parameterize each component of the map $\imap_i$ with a multivariate polynomial expansion.  Each multivariate polynomial $\psi_{\mathbf{j}}$ is defined as
\begin{equation}
\psi_{\mathbf{j}}(\trv) = \prod_{i=1}^{\pd} \varphi_{j_i}(\trv_i).
\end{equation}
where $\mathbf{j} = (j_1,j_2, \ldots , j_{\pd} ) \in \mathbb{N}_0^{\pd}$ is a multi-index and $\varphi_{j_i}$ is a univariate polynomial of degree $j_i$. The univariate polynomials can be chosen from any family of orthogonal polynomials (e.g., Hermite, Legendre, Jacobi); even monomials are sufficient for the present purposes.\footnote{In principle, there is some advantage to choosing polynomials that are orthogonal with respect to the input distribution $\tmeas$, as in polynomial chaos approaches \cite{GhanemBook,LeMaitre2010}.
In the present context, however, we only have samples from $\tmeas$, and this distribution is almost certainly not one of the canonical distributions found in the Wiener-Askey scheme \cite{Xiu2002}. Thus $\tmeas$-orthogonal polynomials are not readily available, and there is little reason to be picky about the choice of polynomial basis.}
Using these multivariate polynomials, we express the map as a finite expansion of the form
\begin{equation}
\imap_i(\trv; \mapp_i) = \sum_{\mathbf{j}\in\mathcal{J}_i} \mapp_{i,\mathbf{j}} \, \psi_{\mathbf{j}}(\trv) \label{eq:polyexpand},
\end{equation}
where $\mathcal{J}_i$ is a set of multi-indices defining the polynomial terms in the expansion.  Notice that the cardinality of the multi-index set defines the dimension of each parameter vector $\mapp_i$, i.e., $M_i = |\mathcal{J}_i|$.  An appropriate choice of each multi-index set $\mathcal{J}_i$ will force the entire map $\imap$ to be lower triangular.  

One simple choice of the multi-index set corresponds to a total-order polynomial basis, where the maximum degree of each multivariate polynomial is bounded by some integer $p \geq 0$:
\[
\mathcal{J}^{TO}_{i} = \{ \mathbf{j} : \| \mathbf{j}\|_1 \leq p, \ j_k=0 \ \forall k > i \}.
\]
The first constraint in this set limits the polynomial order, while the second 
constraint, $j_k=0 \ \forall k > i$, applied over all $i=1\ldots \pd$ components 
of the map, forces $\imap$ to be lower triangular.  
A  smaller multi-index set for large $\pd$ can be obtained by removing all the 
mixed terms in the basis:
\[
  \mathcal{J}^{NM}_{i} = \{ \mathbf{j} : \|\mathbf{j}\|_1 \leq p, \  j_k j_m=0 \ \forall k \neq m,  \  j_k=0 \ \forall  k > i \}.
\] 
An even more parsimonious option is to use diagonal maps, via the multi-index sets
\[
  \mathcal{J}^{D}_{i} = \{ \mathbf{j} : \|\mathbf{j}\|_1 \leq p, \ j_k=0 \  \forall k \neq  i\}.
\] 
We will occasionally use a union of low degree $\mathcal{J}^{TO}_{i}$ and high degree $\mathcal{J}^{D}_{i}$ to define expressive map expansions with a tractable number of terms.

Finally, we emphasize that \textit{any} parameterization of the map that is linear in the coefficients $\mappa$ can be used in the optimization problems defined earlier. While the examples in this paper will focus on polynomial maps, we have also had good success representing the map as a summation of linear terms and radial basis functions \cite{Parno2014thesis}.

\subsection{Solving the map optimization problem}\label{sec:maps:efficient}
Since the map $\imap_i(\trv; \mapp_i)$ is linear in the expansion coefficients $\mapp_i$, the objective in \eqref{eq:fullopt} can be evaluated using efficient matrix-matrix and matrix-vector operations.  We first construct matrices $F_i, G_i\in \real^{K\times M_i}$ with components defined by $[F_{i}]_{k, \mathbf{j}} = \psi_{\mathbf{j}}(\trv^{(k)})$ and $[G_{i}]_{k, \mathbf{j}}  = \left . \frac{\partial \psi_{\mathbf{j}}}{\partial \trv_i}  \right |_{\trv^{(k)}}$ for all $\mathbf{j}\in \mathcal{J}_i$.  Recall that $K$ is the number of samples in our Monte Carlo approximation of the optimization objective.  Using these matrices and the expansion \eqref{eq:polyexpand}, we can rewrite \eqref{eq:fullopt} as
\begin{equation}
\begin{aligned}
& \underset{\mapp_i}{\min}
& & \frac{1}{2} \mapp_i^\top(F_i^\top F_i)\mapp_i - c^\top\log(G_i\mapp_i)\\
& \text{s.t.}
& & G_i\mapp_i \geq \lambda_{\text{min}}, 
\end{aligned}\label{eq:basicopt2}
\end{equation}
where $c$ is a $K$-dimensional vector of ones and the $\log$ is taken componentwise.  Clearly, the objective can be evaluated with efficient numerical linear algebra routines.
 
%
%
%
%
%

Beyond efficient evaluations, the only difference between \eqref{eq:basicopt2} and a simple quadratic program is the log term in the objective.  However, the quadratic term often dominates the log term, making a standard Newton optimizer with backtracking line search quite efficient.  In practice, starting with an identity map, we usually observe convergence in fewer than ten Newton iterations.
Notice also that the log term in \eqref{eq:basicopt2} acts as a barrier function for the constraints.

\section{Map-based MCMC proposals}
\label{sec:mapmcmc}
%

Now we will show how a transport map can be used to modify the Metropolis-Hastings algorithm by equivalently transforming either the target distribution or the proposal mechanism. In this section, we assume that a {fixed} transport map $\imap$ is in hand. Of course, this map must somehow be constructed, and hence the fixed-map approach described here is just an intermediate step in our exposition. The next section (Section~\ref{sec:adaptmcmc}) will use the optimization approaches of Section~\ref{sec:maps} to iteratively build such a map in an adaptive MCMC framework.

A simple Metropolis-Hastings algorithm \cite{Hastings1970,Metropolis1953} generates a new state $\trv^{(k+1)}$ from the current state $\trv^{(k)}$ in two s.pdf.  First, a sample $\trv'$ is drawn from a proposal density $q_{\trv,\bar{\mapp}}( \cdot |\trv^{(k)})$.  Then, an accept-reject step is performed: $\trv^{(k+1)}$ is set to $\trv^\prime$ with probability $\alpha(\trv^\prime,\trv^{(k)})$ and to $\trv^{(k)}$ with probability $1-\alpha(\trv^\prime,\trv^{(k)})$, where
\begin{equation}
\alpha\left(\trv^\prime,\trv^{(k)}\right) = 
\min\left\{1,\frac{\td(\trv^\prime)q_{\theta,\bar{\mapp}}(\trv^{(k)} | 
\trv^\prime)}{\td(\trv^{(k)})q_{\theta,\bar{\mapp}}(\trv^\prime | \trv^{(k)})} 
\right\}\label{eq:basicmh}.
\end{equation}  
The choice of proposal $q_{\trv,\bar{\mapp}}$ controls the dependence between successive states in the MCMC chain through both the acceptance rate and the step size. Knowledge of the target density $\td$ is helpful in designing proposals to make large moves that simultaneously have a high acceptance probability. The scheme presented here encodes information about the target distribution via a transport map $\imap$.

\subsection{MCMC with a fixed transport map}
\label{sec:mapmcmc:proposal}
Assume that we have an approximate transport map $\imap$ between a standard Gaussian reference and the target measure $\tmeas$, i.e., $\rmeas\approx\imap_{\sharp}\tmeas$. The pushforward of the target measure through this map will not be Gaussian. But a map that reduces the optimization objective of Section~\ref{sec:maps} will make the pushforward closer  (in this particular sense) to a standard Gaussian than the original target. We will then use MCMC to sample this pushforward distribution, with a proposal $q_r(\rrv' | \rrv)$. The proposal $q_r$ may be chosen quite freely, and examples below will encompass both local and independence proposals. Equivalently, one can view this process from the perspective of the target space by considering the pullback through the map $\imap$ of the proposal $q_r$; this map-induced proposal is applied to the original target density $\td$. Below we will describe our algorithm from this second perspective, but the first perspective of  transforming or ``preconditioning'' the target density may also provide useful intuition.

Let  $q_r(\rrv^\prime | \rrv)$ be a standard Metropolis-Hastings proposal on the reference space. The pullback of this proposal through $\imap$ induces a target-space proposal density written as
\begin{equation}
q_{\trv,\bar{\mapp}}(\trv^\prime | \trv) = q_\rrv \left ( \imap(\trv^\prime) | \imap(\trv)\right) \left | \det \dimap(\trv^\prime) \right | , \label{eq:mapprop}
\end{equation}
where $\bar{\mapp}$ denotes the dependency of this proposal on the map parameters.  To perform MCMC, we need the ability to evaluate this proposal density and to draw samples from it.  The expression \eqref{eq:mapprop} provides an easy way of evaluating the proposal density. Sampling from the proposal $q_{\trv,\bar{\mapp}}( \cdot | \trv)$ involves three s.pdf: (1) use the current target state $\trv$ to compute the current reference state, $r=\imap(\trv)$; (2) draw a sample $\rrv^\prime \sim q_\rrv(\rrv^\prime | \rrv)$ from the reference proposal; and (3) evaluate the inverse map at $\rrv^\prime$ to obtain a sample from the target proposal: $\theta^\prime = \imap^{-1}(\rrv^\prime)$. These steps are given as lines 4--6 of Algorithm \ref{alg:adaptalg} and illustrated in Figure \ref{fig:propProcess}. Ignoring the adaptation in lines 9--13, Algorithm \ref{alg:adaptalg} is equivalent to a standard Metropolis-Hastings algorithm on the target distribution, using $q_{\trv,\bar{\mapp}}(\trv^\prime | \trv)$ as a proposal.


Because of the map's lower triangular structure, evaluating the inverse map $\imap^{-1}(\rrv)$ only requires $\pd$ one-dimensional nonlinear solves. These one-dimensional problems can be tackled efficiently with a simple Newton method or, if the map is represented with polynomials, with a bisection solver based on Sturm sequences \cite{Wilf1978}.  We utilize the latter approach because of its robustness.  

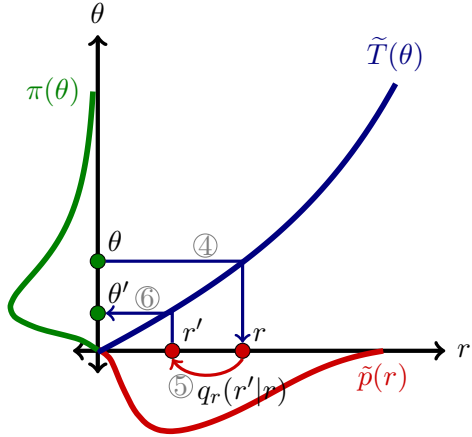
\begin{figure}
\centering


\colorlet{mapColor}{blue!50!black}
\colorlet{targetColor}{green!50!black}
\colorlet{refColor}{red!80!black}

\begin{tikzpicture}[
  input/.style={minimum size=0.2cm,draw=black,fill=red,circle,inner sep=0pt},
  output/.style={minimum size=0.2cm,draw=black,fill=green!70!black,circle,inner sep=0pt},
  joint/.style={draw=black,fill=blue!70!black,circle,inner sep=0pt},
  yscale=1, xscale = 1.1  ]

 draw axes
\draw[<->,ultra thick] (-0.3,0)--(4.2,0) node[right] {$\rrv$};
\draw[<->,ultra thick] (0,-0.3)--(0,4.2) node[above] {$\trv$};

\draw[line width=0.75mm, refColor, domain=0.01:3.45, smooth] plot (\x,{0.125-(0.8/\x)*exp(-0.9*(ln(0.8*\x))*(ln(0.8*\x)))-0.5*exp(-0.7*((\x-1.5))*((\x-1.5)))}) node[below]{$\tilde{p}(\rrv)$};

\draw[line width=0.75mm, mapColor, domain=0:3.6, smooth] plot (\x,{-0.125+0.5*\x+exp(0.8*\x-2.25)}) node[above]{$\imap(\trv)$};

\draw[line width=0.75mm, targetColor, domain=0.01:3.45, smooth] plot ({-(0.8/\x)*exp(-0.9*(ln(\x))*(ln(\x)))},\x) node[left] {$\td(\trv)$};

\node[circle, draw=black,fill=targetColor, minimum size=2mm,inner sep=0pt] (t1) at (0,1.192049908620754) {};
\node[above right] at (t1) {$\trv$};

\node[circle, draw=black, fill=refColor,minimum size=2mm,inner sep=0pt] (r1) at (1.75,0) {};
\node[above right] at (r1) {$\rrv$};
\node[circle,draw=black!50,inner sep=0pt,minimum size=3mm] at (1.3,1.4) {\color{black!50}$4$};
\draw[->,very thick, mapColor] (t1.east) -| (r1.north);

\node[circle, draw=black, fill=refColor,minimum size=2mm,inner sep=0pt] (r2) at (0.9,0) {};
\node[above right] at (r2) {$\rrv^\prime$};
\draw[->,very thick, refColor] (r1.south) to[out=-120,in=-60] (r2.south);
\node[below] at($ (r1.south) - (0,1mm) $) {$q_r(\rrv^\prime|\rrv)$};
\node[below right,circle,draw=black!50,inner sep=0pt,minimum size=3mm] at ($ (r2.south) - (0,2.5mm) $) {\color{black!50}$5$};

\node[circle, draw=black, fill=targetColor,minimum size=2mm,inner sep=0pt] (t2) at (0,0.5) {};
\node[above right] at (t2) {$\trv^\prime$};
\draw[->,very thick, mapColor] (r2.north) |- (t2.east);
\node[circle,draw=black!50,inner sep=0pt,minimum size=3mm] at (0.6,0.7) {\color{black!50}$6$};

\end{tikzpicture}


\caption[MCMC proposal process using transport maps.]{Illustration of the Metropolis-Hastings proposal process in transport map accelerated MCMC.  
The gray circled numbers on each arrow correspond to the line number in Algorithm \ref{alg:adaptalg}.}
\label{fig:propProcess}
\end{figure}


\subsection{Derivative-based proposals}
\label{sec:mapmcmc:deriv}
An important feature of our approach is that the map-induced proposal $q_{\trv,\bar{\mapp}}(\trv^\prime | \trv)$ requires derivative information from the target density $\td(\trv)$ if and only if the reference proposal $q_\rrv(\rrv^\prime | \rrv)$ explicitly requires derivative information.  
%
We also note that Algorithm \ref{alg:adaptalg} does not require $\td(\trv)$ to take any particular form (e.g., to be a Bayesian posterior or to result from a Gaussian prior). The ability to work with arbitrary target distributions for which derivative information may not be available is a distinction from many recent sampling approaches, such as Riemannian manifold MCMC \cite{Girolami2011}, the No-U-Turn Sampler of \cite{Hoffman2011}, or optimization-based samplers such as implicit sampling or RTO \cite{Morzfeld2012,Bardsley2014}.  
That said, though our approach can perform quite well without derivative information, we can still accommodate proposals that employ it.  



The reference proposal $q_\rrv$ is applied to the pushforward distribution of the target $\pi$ through the map $\imap$. Let  $\tilde{\rd}$ denote the corresponding pushforward density. 
Taking advantage of the map's lower triangular structure,  we can write the logarithm of this density as
\begin{equation}
\log\tilde{\rd}(\rrv) = \log\td \left (\imap^{-1}(\rrv) \right ) + \sum_{i=1}^\pd\log\frac{\partial \widetilde{T}_i^{-1}}{\partial \rrv_i} \label{eq:referenceTarget}.
\end{equation}
We will use the chain rule to obtain the gradient of this expression.  First, make the substitution $\rrv=\imap(\trv)$ and take the gradient with respect to $\trv$:
\begin{equation}
\nabla_\trv \log\tilde{\rd} \left (\imap(\trv) \right ) =  \nabla_\trv\log\td(\trv) -\sum_{i=1}^\pd \left(\frac{\partial \imap_i}{\partial \trv_i}\right)^{-1} H_i(\trv),
\end{equation}
where $H_i$ is a row vector of second derivatives coming from the determinant term: $H_i(\trv) = \left[\begin{array}{cccc}\frac{\partial^2 \imap_i}{\partial \trv_1 \partial \trv_i} & \frac{\partial^2 \imap_i}{\partial \trv_2 \partial \trv_i}  & \cdots & \frac{\partial^2 \imap_i}{\partial \trv_\pd \partial \theta_i}\end{array}\right]$.  Accounting for our change of variables, we now have an expression for the reference gradient given by
\begin{equation}
\nabla_r \log\tilde{\rd}(\rrv) =  \bigg(\nabla_\trv\log\td(\trv) -\sum_{i=1}^\pd \left(\frac{\partial \imap_i}{\partial \trv_i}\right)^{-1} H_i(\trv)\bigg) \left [ \nabla \imap(\trv) \right ]^{-1} \label{eq:refgrad}.
\end{equation}  
Note that this expression is only valid at $\trv=\imap^{-1}(\rrv)$. 

The lower triangular structure allows us not only to expand the determinant and obtain \eqref{eq:refgrad}, but also to apply the inverse Jacobian $(\nabla T(\trv))^{-1}$ easily through forward substitution.  Furthermore, computing the Jacobian $\nabla \imap(\trv)$  or the second derivatives in $H_i(\trv)$ is trivial when polynomials or other standard basis functions are used to parameterize the map.

\section{Adaptive transport map MCMC}
\label{sec:adaptmcmc}

Given more samples of the target distribution, we can construct a more accurate transport map, which in turn yields a more efficient map-accelerated proposal. Hence, we adaptively construct the the map $\imap$ as the MCMC chain progresses.

\subsection{Adaptive algorithm overview}
\label{sec:adaptmcmc:overview}

In our adaptive MCMC approach, we initialize the sampler with a simple map $\imap_0$ and update the map every $K_U$ s.pdf using the previous states of the MCMC chain. The map update uses these samples to define the optimization problem \eqref{eq:basicopt}, the solution of which yields a new map. This approach is conceptually similar to the adaptive Metropolis algorithm of \cite{Haario2001}.  In \cite{Haario2001}, however, previous states are used to update the covariance matrix of a Gaussian proposal; in the present case, previous states are used to construct a nonlinear transport map that yields more general non-Gaussian proposals. 

The most straightforward version of our adaptive algorithm would find the coefficients $\mapp_i$ for each component of the map by solving \eqref{eq:basicopt} directly.  However, when the number of existing samples $K$ is small or if the initial s.pdf of the chain mix poorly, the Monte Carlo sum in \eqref{eq:basicopt} will be a poor approximation of the true integral, producing maps that do not capture the structure of $\td$.  This is a standard issue in adaptive MCMC. One way to overcome this problem is to start adapting the map only after some initial exploration of the parameter space, i.e., after drawing a sufficient number of MCMC samples using the initial map $\imap_0$. A more efficient alternative, however, is to introduce a regularization term $g(\mapp_i)$ into the objective, allowing the map to start adapting much earlier. The purpose of this term is to ensure that the map does not prematurely collapse onto one region of the target space; such a collapse would make it difficult for the chain to efficiently explore the entire support of $\td$. Regularization yields the following modified objective:
\begin{equation}
\begin{aligned}
& \underset{\mapp_i}{\min}
& & g(\mapp_i) + \sum_{k=1}^{K}\left[  \frac12\imap^2_i(\trv^{(k)}; \mapp_i) - \log\left.\frac{\partial \imap_i(\trv;\mapp_i)}{\partial \trv_i}\right|_{\trv^{(k)}}\right]\\
& \text{s.t.}
& & \left.\frac{\partial \imap_i(\trv;\mapp_i)}{\partial \trv_i}\right|_{\trv^{(k)}} \geq  \lambda_{\text{min}}, \quad \forall k\in\{1,2,\ldots,K\} .
\end{aligned}\label{eq:regopt}
\end{equation}
In practice, we choose $g(\mapp_i)$ to prevent $\imap$ from deviating too strongly from the identity map, particularly when $K$ is small.  If additional problem structure such as the covariance of $\td$ were known, it could also be incorporated into the regularization term. But in the typical case, we use a simple quadratic penalty function centered on the coefficients of the identity map: letting $\mapp_i^{\text{Id}}$ denote the coefficients of the identity map, we put $g(\mapp_i) = k_R \|\mapp_i-\mapp_i^{\text{Id}}\|^2$ where $k_R$ is a user-defined regularization parameter. We have found that on most problems, small values of $k_R$ yield similar performance. (In the numerical examples below, we mostly set $k_R=10^{-4}$.)  Because we have discarded the $\frac{1}{K}$ coefficient scaling the Monte Carlo sum in \eqref{eq:regopt}, the second term of the objective overwhelms the regularization term as the number of samples grows, and the value of $k_R$ eventually becomes unimportant.

Lines 9--13 of Algorithm \ref{alg:adaptalg} show how we incorporate the map update into our adaptive MCMC framework.

\begin{algorithm}[h!]
\DontPrintSemicolon
\KwIn{Initial state $\trv_0$, initial vector of transport map parameters $\mappa_0$, reference proposal $q_\rrv( \cdot \vert \rrv^{(k)})$, number of s.pdf $K_U$ between map adaptations, total number of steps $L$.}
\KwOut{MCMC samples of the target distribution, $\left\{\trv^{(1)},\trv^{(2)},\ldots,\trv^{(L)}\right\}$}
Set state $\trv^{(1)} = \trv_0$\;
Set parameters $\mappa^{(1)} = \mappa_0$\;
\For{$k \gets 1 \ldots L-1$}{
Compute the reference state, $\rrv^{(k)} = \imap(\trv^{(k)}; \mappa^{(k)})$\;
Sample the reference proposal, $\rrv^\prime\sim q_\rrv(\cdot\vert\rrv^{(k)})$\;
Compute the target proposal sample, $\trv^\prime = \imap^{-1}(\rrv^\prime; \mappa^{(k)})$\;
Calculate the acceptance probability:
\[
\alpha = \min\left\{1, \frac{\td(\imap^{-1}(\rrv^\prime; \mappa^{(k)}))}{\pi(\imap^{-1}(\rrv^{(k)}; \mappa^{(k)}))} \frac{q_r\left(\rrv^{(k)}\vert\rrv^\prime\right)}{q_r\left(\rrv^\prime\vert\rrv^{(k)}\right)}\frac{\det[D\imap^{-1}(\rrv^\prime; \mappa^{(k)})]}{\det[D\imap^{-1}(\rrv^{(k)}; \mappa^{(k)})]}\right\}
\]\;
Set $\trv^{(k+1)}$ to $\trv^\prime$ with probability $\alpha$; else set $\trv^{(k+1)} = \trv^{(k)}$.\;
\eIf{$(k\bmod{K_U}) = 0$}{
\For{$i\gets 1 \textbf{ to } \pd$}{
Update $\mapp_i^{(k+1)}$ by solving (\ref{eq:regopt}) with $\left\{\trv^{(1)},\trv^{(2)},\ldots,\trv^{(k+1)}\right\}$\;
}
}{
$\mappa^{(k+1)} = \mappa^{(k)}$\;
}
}
\Return{\textrm{Target samples} $\left\{\trv^{(1)},\trv^{(2)},\ldots,\trv^{(L)}\right\}$}\;
\caption{MCMC algorithm with adaptive map}
\label{alg:adaptalg}
\end{algorithm}

\subsection{Sequential map updates}
\label{sec:adaptmcmc:mapupdate}

At first glance, updating the map every $K_U$ MCMC iterations might seem computationally taxing.  Fortunately, the form of the optimization problem in \eqref{eq:regopt} allows for efficient updates.  When $K_U$ is small relative to the current number of s.pdf $K$, the objective function in \eqref{eq:regopt} changes little between updates and the previous map coefficients provide a good initial guess for the new optimization problem.  Thus new optimal coefficients can be found in only a few Newton iterations, sometimes only one or two. As the timing results in Section \ref{sec:perf} show, even for long chains (large $K$), the advantage of using the map to define $q_{\trv,\bar{\mapp}}$ greatly outweighs the computational costs of sequential map updates.  

We also note that the optimization could be performed with stochastic approximation (SA) techniques \cite{Kushner2003,Andrieu2006}, in which case each map update would use only a portion of the chain, and would have a cost independent of $K$. Our tests with $K$ up to $\num{5e5}$ have shown SAA to be more efficient, but even longer chains might favor an SA approach.

%

\subsection{Monitoring map convergence}
\label{sec:adaptmcmc:convg}
As the map in Algorithm \ref{alg:adaptalg} is adapted,  the pushforward of $\td$ through the map becomes closer to the reference Gaussian, and the best choice of reference proposal $q_\rrv(\rrv \vert \rrv^\prime)$ will evolve as well.  A small-scale random walk proposal may be appropriate at early iterations, but a larger and perhaps position-independent proposal may be advantageous as the map captures more of the target distribution's structure.  By monitoring the difference between $\tilde{\rd}$ \eqref{eq:referenceTarget} and the uncorrelated standard Gaussian density, we can adapt the reference proposal $q_\rrv$ to better explore the changing $\tilde{\rd}$.

To this end, it is important to have an indicator of the map's current accuracy. In the discussion following \eqref{eq:klexpr}, we noted that $\log \td - \log p \circ \imap - \log | \det \dimap |$ becomes a constant function of $\trv$ when an exact transformation \eqref{eq:measconst} between the target and reference is achieved. 
A useful way to monitor the map's convergence is then to calculate the variance of this quantity,
\begin{equation}
\sigma^2_M = \Varsimp_{\trv} {\left[\log \td(\trv) -  \log p\left(\imap(\trv)\right) -\log \left | \det \dimap(\trv) \right | \right] } .
\label{eq:mapvar}
\end{equation}
A variance of zero indicates that the map is exact: $\tilde{\rd}$ is a standard Gaussian. 
Asymptotically, as $\sigma^2_M \rightarrow 0$, the KL divergence \eqref{eq:klexpr} becomes $2 \sigma^2_M$ \cite{Moselhy2011}. 



\subsection{Choice of reference proposal}
\label{sec:adaptmcmc:refprop}

Until now, we have left the choice of reference proposal $q_\rrv(\rrv^\prime \vert \rrv)$  rather open.  Indeed, any non-adaptive proposal, including both independence proposals and random walk proposals, could be used within our framework.  Figure \ref{fig:propTypes} shows some typical proposals on both the reference space and the target space.  In this section, we describe a few reference proposals that we will use in our numerical demonstrations, with particular attention to how they are implemented within the transport map framework. This selection is far from exhaustive, and is only intended to indicate how the transport map can dictate the choice of reference proposal.

\newcommand{\propScale}{0.22}
\begin{figure}[h]
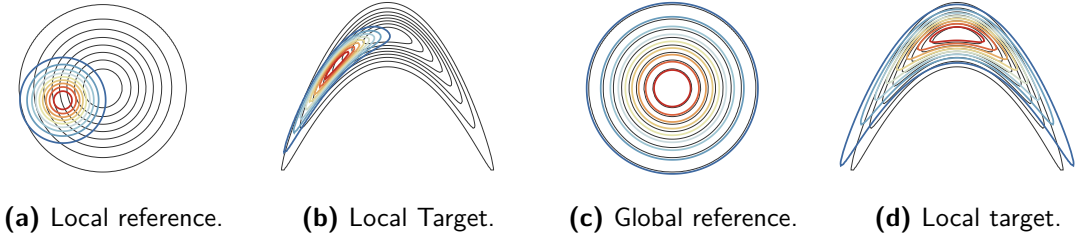

\centering

\begin{subfigure}{0.24\textwidth}
\input{pics/mcmc/LocalMapReference.tex}
\caption{Local reference.}
\end{subfigure}
\begin{subfigure}{0.24\textwidth}
\input{pics/mcmc/LocalMapBanana.tex}
\caption{Local Target.}
\end{subfigure}
\begin{subfigure}{0.24\textwidth}
\input{pics/mcmc/GlobalMapReference.tex}
\caption{Global reference.}
\end{subfigure}
\begin{subfigure}{0.24\textwidth}
\input{pics/mcmc/GlobalMapBanana.tex}
\caption{Local target.}
\end{subfigure}


 \caption[Illustration of map-induced proposals.]{Example proposals in the reference space and the target space.  Plots of both $q_\rrv(\rrv^\prime \vert \rrv)$ and $q_\trv(\trv^\prime\vert\rrv)=q_\rrv(\rrv^\prime \vert \rrv) |D\imap(\trv^\prime)|$ are shown for local and independence (global) proposals. The black contours depict the target distributions while the colored contours illustrate the proposal densities.}
 \label{fig:propTypes}
 \end{figure}
 
\textbf{Metropolis-adjusted Langevin (MALA) proposal:}  Discretizing an appropriate Langevin equation yields a proposal of the form:
\begin{equation}
q_{\text{MALA}}(\rrv^\prime \vert \rrv) = \mathcal{N} \left( \rrv + \frac{(\Delta \tau)^2}{2} \nabla_{\rrv} \log \tilde{\rd}(\rrv), \, (\Delta \tau)^2 \, I \right),
\label{eq:malaprop}
\end{equation}
with a s.pdfize $(\Delta \tau)^2$ and a symmetric positive definite matrix $\Sigma$ \cite{Roberts1996}.  

\textbf{Delayed rejection proposals:} The delayed-rejection (DR) MCMC scheme of \cite{Mira2001} allows several proposals to be attempted during each MCMC step. With such a multi-stage proposal, we can try a larger or more aggressive proposal at the first stage, followed by more conservative proposals likely to produce accepted moves.  We use this scheme to define $q_\rrv(\rrv^\prime \vert \rrv)$ in two ways.  

Our first instantiation of DR employs a standard Gaussian as an independence proposal in the first stage, followed by a Gaussian random walk proposal in the second stage. Our motivation for this global-then-local strategy is the evolving nature of $\tilde{\rd}(\rrv)$.  Initially, $\tilde{\rd}(\rrv)$ will resemble the target density, which is more efficiently sampled by the random walk proposal; we need samples to be accepted in order to build a good map. As the map adapts, however, $\tilde{\rd}(\rrv)$ will approach a standard normal density, which can be efficiently explored by the position-independent first stage. DR naturally trades off between these alternatives.
%
%
Figure~\ref{fig:propTypes} illustrates the difference between local and independence proposals for a simple banana-shaped distribution.
Our second instantiation of DR employs two symmetric random-walk proposals, the first with a larger variance and the second with a smaller variance.

\section{Convergence analysis}
\label{sec:convergence}
This section investigates conditions under which our adaptive algorithm yields an ergodic chain. Proofs of the lemmas are deferred to Appendix \ref{sec:detailedConvergence}.

\subsection{The need for bounded derivatives}
\label{sec:convergence:bounds}
Consider a random walk proposal on the reference space, $q_\rrv(\rrv^\prime | \rrv) = N(\rrv, \sigma^2 I)$ with some fixed variance $\sigma^2$.  For illustration, assume that the target density is a standard normal distribution: $\td(\trv) = N(0,I)$.  The RWM algorithm is geometrically ergodic for any density satisfying the following two conditions (see Theorem 4.3 of \cite{Jarner2000}):
\begin{equation}
\limsup_{\| \trv \|\rightarrow \infty} \frac{\trv}{\| \trv \|} \cdot \nabla\log\td(\trv) = -\infty,\label{eq:superLight}
\end{equation}
and
\begin{equation}
\lim_{\| \trv \|\rightarrow \infty} \frac{\trv}{\| \trv \|} \cdot \frac{\nabla\log\td(\trv)}{\|\nabla\log\td(\trv)\|} < 0.\label{eq:curveCond}
\end{equation}
Densities that satisfy \eqref{eq:superLight} are called super-exponentially light.
   It is easy to show that our example Gaussian density satisfies these conditions.  In Algorithm \ref{alg:adaptalg}, however, instead of applying the RWM proposal to $\td$ directly, we apply the RWM proposal to the map-induced density in \eqref{eq:referenceTarget}.
If the conditions in (\ref{eq:ubMap}) are not satisfied, we can show that even when $\td$ is Gaussian, any monotone {polynomial} map with degree greater than one results in a density $\tilde{\rd}(\rrv)$ that is no longer super-exponentially light.  For example, let $\imap$ have a maximum polynomial degree of $M>1$, with $M$ odd.  Then:
\begin{eqnarray}
\limsup_{\| \rrv \|\rightarrow \infty} \frac{\rrv}{\| \rrv \|} \cdot \nabla\log\tilde{\rd}(\rrv) &=& \limsup_{\| \rrv \|\rightarrow \infty}\frac{1}{\|r\|}\sum_{i=1}^\pd r_i \left(\frac{\partial \imap^{-1}_i}{\partial \rrv_i}\right)^{-1}\frac{\partial^2 \imap^{-1}_i}{\partial \rrv_i^2} \nonumber \\
&=&\limsup_{\| \rrv \|\rightarrow \infty}\frac{\pd}{\|r\|}\left(\frac{1}{M}-1\right)= 0 .\label{eq:zeroexplight}
\end{eqnarray}
%
%
Clearly, the map-induced density is not super-exponentially light.  We have therefore jeopardized the geometric ergodicity of our sampler on a simple Gaussian target.  Additional restrictions on the map are needed to ensure convergence. 

The loss of geometric ergodicity in (\ref{eq:zeroexplight}) is due to the unbounded derivatives of nonlinear polynomial maps, which do not satisfy (\ref{eq:dUB}). Unbounded derivatives of $\imap$ imply that $\imap^{-1}$ has derivatives that approach zero as $\| \rrv \|\rightarrow \infty$, which leads to (\ref{eq:zeroexplight}).  More intuitively, without an upper bound on their derivatives, polynomial maps move too much weight to the tails of $\tilde{\rd}$.  In the next section, we show that the conditions in \eqref{eq:dUB} ensure the ergodicity of Algorithm \ref{alg:adaptalg}, even with map adaptation.

\subsection{Convergence of the adaptive algorithm}
Our goal in this section is to show that the adaptive Algorithm \ref{alg:adaptalg} produces samples that can be used in Monte Carlo approximations.  We thus need to show that Algorithm \ref{alg:adaptalg} is ergodic for the target density $\td(\trv)$.  

Assume that the target density is finite, continuous, and super-exponentially light.  (Note that certain densities which are not super-exponentially light can be transformed to super-exponentially light 
densities using the techniques from \cite{Johnson2012}.) Also assume that the reference proposal $q_\rrv(\rrv^\prime \vert \rrv)$ is Gaussian with bounded mean.
Furthermore, let $\Gamma$ be the space of the map parameters $\bar{\mapp}$ such that $\imap(\trv;\bar{\mapp})$ satisfies the bi-Lipschitz condition given by (\ref{eq:dUB}).

The map at iteration $k$ of the MCMC chain is defined by the coefficients $\bar{\mapp}^{(k)}$.  Let $\tk_{\bar{\mapp}^{(k)}}$ be the transition kernel of the chain at iteration $k$, constructed from the map $\imap(\trv;\bar{\mapp}^{(k)})$, the target space proposal in (\ref{eq:mapprop}), and the Metropolis-Hastings kernel:
\begin{equation} 
\tk_{\bar{\mapp}^{(k)}}(\trv,\mathcal{A}) = \int_{\mathcal{A}} \left (
  \alpha(\trv^\prime, \trv)q_{\trv,\bar{\mapp}^{(k)}}(\trv^\prime |
  \trv) +\left(1-r(\trv)\right)\delta_\trv(\trv^\prime) \right ) d\trv^\prime \label{eq:mhkernel}.
\end{equation} 
Here  $q_{\trv,\bar{\mapp}^{(k)}}$ is the map-induced proposal density from 
(\ref{eq:mapprop}),  $\alpha(\trv^\prime, \trv)$ is the acceptance probability 
defined in (\ref{eq:basicmh}), and $r(\trv) = \int \alpha(\trv^\prime, \trv) 
q_{\trv,\bar{\mapp}^{(k)}}(\trv^\prime | \trv) d\trv^\prime$.   Now, following 
\cite{Roberts2007} and \cite{Bai2009}, we can establish the ergodicity of our adaptive 
algorithm by showing that it satisfies two conditions: diminishing adaptation 
and containment. Diminishing adaptation is defined as follows:

\vspace{0.3cm}
\begin{definition}[Diminishing adaptation]\label{def:dim}
For any starting point $x^{(0)}$ and initial set of map parameters
$\bar{\mapp}^{(0)}$, a transition kernel $\tk_{\bar{\mapp}^{(k)}}$
satisfies the diminishing adaptation condition when
\begin{equation}
\lim_{k\rightarrow \infty}\sup_{x \in \real^\pd}\left\| \tk_{\bar{\mapp}^{(k)}}(x,\cdot) - \tk_{\bar{\mapp}^{(k+1)}}(x,\cdot)\right\|_{TV} = 0 \quad\text{  in probability}
\end{equation}
where $\|\cdot\|_{TV}$ denotes the total variation norm.  
\end{definition}
\vspace{0.3cm}

Instead of working with the containment condition directly (see \cite{Bai2009} or \cite{Roberts2007}), we will show that our adaptive MCMC algorithm instead satisfies the simultaneous strongly aperiodic geometric ergodicity (SSAGE) condition.

\vspace{0.3cm} 
\begin{definition}[SSAGE]
\label{def:ssage}
Simultaneous strongly aperiodic geometric ergodicity (SSAGE) is the condition that there exist a measurable set $C\in \bspace$, a drift function $V:\real^\pd\rightarrow [1,\infty)$, and scalars $\delta>0$, $\lambda<1$, and $b<\infty$ such that $\sup_{x\in C} V(x) < \infty$ and the following two conditions hold:
\begin{enumerate}
\item \emph{(Minorization)} For each vector of map parameters $\bar{\mapp}\in\Gamma$, there is a probability measure $\nu_{\bar{\mapp}}(\cdot)$ defined on $C\subset \real^\pd$ with $\tk_\mapp(x,\cdot)\geq \delta \nu_{\bar{\mapp}}(\cdot)$ for all $x\in C$.
\item \emph{(Simultaneous drift)} $\int_{\real^\pd} V(x) \tk_{\bar{\mapp}}(x,dx) \leq \lambda V(x) + b I_C(x)$ for all $\bar{\mapp}\in\Gamma$ and $x\in\real^\pd$.
\end{enumerate}
\end{definition}
 \vspace{0.3cm}
 
By Theorem 3 of \cite{Roberts2007}, SSAGE ensures the containment condition.  The following three lemmas establish diminishing adaptation and SSAGE.  
 In the following, let $C=B(0,R_C)$ be a ball of radius $R_C>0$ and let $V(x) = k_v\td^{-\alpha}(x)$ for some $\alpha\in(0,1)$ and $k_v = \sup_x\td^{\alpha}(x)$.   Also, assume that $\td(x)>0$ for all $x\in C$.  For this choice of $V(x)$ and our assumption that $\td(x)>0$ for $x\in C$, we have that $\sup_{x\in C} V(x) < \infty$. 
 
Because the reference proposal is Gaussian with bounded mean, we can find two 
scalars $k_1$ and $k_2$, and two zero-mean Gaussian densities $g_1$ and 
$g_2$, such that the reference proposal is bounded as
 \begin{equation}
 k_1g_1(\rrv^\prime-\rrv)\leq q_\rrv(\rrv^\prime | \rrv) \leq k_2g_2(\rrv^\prime - \rrv)\label{eq:qrBounds}.
 \end{equation}
The bounds in (\ref{eq:dUB}) then imply that the target space proposal can also be bounded.  This result is captured in Lemma \ref{lem:propBnds}.

\vspace{0.3cm}
  \begin{lemma}[Bounded target space proposal]
  \label{lem:propBnds}
 For any map coefficients $\bar{\mapp} \in \Gamma$, the map-induced proposal $q_{\trv,\bar{\mapp}}(\trv^\prime | \trv)$ is bounded as
 \begin{equation}
 k_L g_L (\trv^\prime-\trv)\leq q_{\trv,\bar{\mapp}}(\trv^\prime | \trv) \leq k_Ug_U(\trv^\prime - \trv)\label{eq:qtBounds},
 \end{equation}
   where $k_L = k_1 \lambda_{\mathrm{min}}^\pd$, $k_U = k_2 \lambda_{\mathrm{max}}^\pd$,  $g_L(x) = g_1( \lambda_{\mathrm{max}} x)$, and $g_U(x) = g_2( \lambda_{\mathrm{min}} x)$.  
   \end{lemma}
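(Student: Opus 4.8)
The plan is to insert the definition \eqref{eq:mapprop} of the map-induced proposal and bound its two factors — the reference proposal evaluated at the mapped points and the Jacobian determinant — separately, then recombine. First I would apply the reference-proposal bounds \eqref{eq:qrBounds} with $\rrv = \imap(\trv)$ and $\rrv^\prime = \imap(\trv^\prime)$, obtaining
\[
k_1\, g_1\bigl(\imap(\trv^\prime) - \imap(\trv)\bigr) \;\le\; q_\rrv\bigl(\imap(\trv^\prime)\,|\,\imap(\trv)\bigr) \;\le\; k_2\, g_2\bigl(\imap(\trv^\prime) - \imap(\trv)\bigr).
\]
The point that makes \eqref{eq:qrBounds} usable here is that, since $q_\rrv$ is Gaussian with bounded mean, the bounding densities $g_1,g_2$ may be taken \emph{isotropic} — of the form $g_j(y) = h_j(\|y\|)$ with $h_j$ strictly decreasing — by bounding the quadratic form in the exponent using the extreme eigenvalues of the (fixed) proposal covariance together with the bound on the mean shift. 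I would make this reduction explicit at the start of the proof.

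Radial monotonicity of $h_1,h_2$ is precisely what lets me feed in the bi-Lipschitz estimate \eqref{eq:dUB}. From $\|\imap(\trv^\prime)-\imap(\trv)\| \ge \lambda_{\mathrm{min}}\|\trv^\prime-\trv\|$ and $h_2$ decreasing, $g_2(\imap(\trv^\prime)-\imap(\trv)) \le g_2(\lambda_{\mathrm{min}}(\trv^\prime-\trv)) = g_U(\trv^\prime-\trv)$; symmetrically, from the upper Lipschitz bound $\|\imap(\trv^\prime)-\imap(\trv)\| \le \lambda_{\mathrm{max}}\|\trv^\prime-\trv\|$ and $h_1$ decreasing, $g_1(\imap(\trv^\prime)-\imap(\trv)) \ge g_1(\lambda_{\mathrm{max}}(\trv^\prime-\trv)) = g_L(\trv^\prime-\trv)$.

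Next I would bound the determinant factor. Because every map in the family is lower triangular and, for $\bar{\mapp}\in\Gamma$, satisfies the diagonal-derivative lower bound \eqref{eq:derivLB} (which, for lower-triangular maps, is the lower Lipschitz condition of \eqref{eq:dUB} — a fact I would recall at this point), the Jacobian $\dimap(\trv^\prime)$ is lower triangular with positive diagonal, so $|\det\dimap(\trv^\prime)| = \prod_{i=1}^{\pd} \partial\imap_i/\partial\trv_i$ and the absolute value can be dropped. Membership in $\Gamma$ gives $\partial\imap_i/\partial\trv_i \ge \lambda_{\mathrm{min}}$; for the matching upper bound I would use that the upper Lipschitz constant bounds the operator norm of $\dimap(\trv^\prime)$ at every point, hence each diagonal entry, $\partial\imap_i/\partial\trv_i = e_i^\top \dimap(\trv^\prime)\, e_i \le \|\dimap(\trv^\prime)\|_{\mathrm{op}} \le \lambda_{\mathrm{max}}$. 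Multiplying over $i=1,\dots,\pd$ yields $\lambda_{\mathrm{min}}^{\pd} \le \det\dimap(\trv^\prime) \le \lambda_{\mathrm{max}}^{\pd}$.

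Combining the two estimates then finishes it: $q_{\trv,\bar{\mapp}}(\trv^\prime|\trv) \le k_2\, g_2(\imap(\trv^\prime)-\imap(\trv))\,\lambda_{\mathrm{max}}^{\pd} \le (k_2\lambda_{\mathrm{max}}^{\pd})\, g_2(\lambda_{\mathrm{min}}(\trv^\prime-\trv)) = k_U\, g_U(\trv^\prime-\trv)$, and likewise $q_{\trv,\bar{\mapp}}(\trv^\prime|\trv) \ge k_1\, g_1(\imap(\trv^\prime)-\imap(\trv))\,\lambda_{\mathrm{min}}^{\pd} \ge (k_1\lambda_{\mathrm{min}}^{\pd})\, g_1(\lambda_{\mathrm{max}}(\trv^\prime-\trv)) = k_L\, g_L(\trv^\prime-\trv)$, which is \eqref{eq:qtBounds} with exactly the stated constants. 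The only step requiring any real care is the reduction to isotropic bounding Gaussians; once $g_1,g_2$ are radially monotone, the rest is a direct substitution of \eqref{eq:dUB} and the triangular structure, so I expect that reduction — and stating the operator-norm/diagonal-entry comparison cleanly for the determinant — to be where the write-up should be most explicit.
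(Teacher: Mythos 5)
Your proof is correct and follows essentially the same route as the paper's: factor the proposal via \eqref{eq:mapprop}, bound the triangular Jacobian determinant between $\lambda_{\mathrm{min}}^\pd$ and $\lambda_{\mathrm{max}}^\pd$, sandwich the reference proposal with \eqref{eq:qrBounds}, and use radial monotonicity of the zero-mean bounding Gaussians together with the bi-Lipschitz condition \eqref{eq:dUB} to pass from $\imap(\trv^\prime)-\imap(\trv)$ to $\trv^\prime-\trv$. Your explicit remarks on taking $g_1,g_2$ isotropic and on the operator-norm bound for the diagonal entries merely spell out steps the paper leaves implicit.
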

   \vspace{0.3cm}

The upper and lower bounds in (\ref{eq:qtBounds}) are key to our proof of convergence.  In fact, with these bounds, the proofs of Lemma \ref{lem:minor} and Lemma \ref{lem:drift} below closely follow the proof of Proposition 2.1 in \cite{Atchade2006}.  Again, proofs of these results are left to the appendix.
  
 \vspace{0.3cm}
\begin{lemma}[Diminishing adaptation of Algorithm \ref{alg:adaptalg}]
\label{lem:dimAdapt}
Let the map parameters $\bar{\mapp}$ be restricted to a compact subset of $\Gamma$.  Then, the sequence of transition kernels defined by the update step in lines 9--13 of Algorithm \ref{alg:adaptalg} satisfies the diminishing adaptation condition. 
\end{lemma}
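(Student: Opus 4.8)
<br>

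The plan is to establish diminishing adaptation by showing that the map-induced proposal density $q_{\trv,\bar{\mapp}}(\trv^\prime \mid \trv)$ depends continuously on the map parameters $\bar{\mapp}$, and then to exploit the fact that consecutive parameter vectors $\bar{\mapp}^{(k)}$ and $\bar{\mapp}^{(k+1)}$ become arbitrarily close as $k\to\infty$. First I would recall from the algorithm that the map is updated every $K_U$ steps by solving the regularized optimization problem \eqref{eq:regopt} using all states seen so far; between updates the parameters are unchanged, so $\bar{\mapp}^{(k+1)}=\bar{\mapp}^{(k)}$ on all but a vanishing fraction of iterations, and when an update does occur the objective function changes by only an $O(1/k)$ amount (one additional sample, out of $k$, enters the Monte Carlo sum, and the $\tfrac1K$ scaling was dropped so the regularization becomes negligible but the per-sample perturbation of the objective is $O(1)$ while the Hessian grows like $k$). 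The key quantitative input is that the solution map of this strongly convex program is Lipschitz in the data, so $\|\bar{\mapp}^{(k+1)}-\bar{\mapp}^{(k)}\| \to 0$ almost surely.

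Next I would convert parameter closeness into kernel closeness in total variation. Since $\imap_i(\trv;\mapp_i)$ is linear in $\mapp_i$ with fixed polynomial basis functions, both $\imap(\cdot;\bar{\mapp})$ and its Jacobian $\dimap(\cdot;\bar{\mapp})$ — and, because of the bi-Lipschitz restriction to the compact subset of $\Gamma$, the inverse map $\imap^{-1}$ and the inverse Jacobian as well — are jointly continuous in $(\trv,\bar{\mapp})$ and locally Lipschitz in $\bar{\mapp}$ uniformly on the compact parameter set. Plugging into the explicit formula \eqref{eq:mapprop}, $q_{\trv,\bar{\mapp}}(\trv^\prime\mid\trv) = q_\rrv(\imap(\trv^\prime)\mid\imap(\trv))\,|\det\dimap(\trv^\prime)|$, and using that $q_\rrv$ is Gaussian (hence Lipschitz in its arguments on bounded sets and with Gaussian tails), I would show that $\sup_{\trv}\int |q_{\trv,\bar{\mapp}^\prime}(\trv^\prime\mid\trv) - q_{\trv,\bar{\mapp}}(\trv^\prime\mid\trv)|\,d\trv^\prime \le L\,\|\bar{\mapp}^\prime - \bar{\mapp}\|$ for some constant $L$ depending only on the compact parameter set, $\lambda_{\min}$, $\lambda_{\max}$, and the proposal covariance. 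The bounds of Lemma \ref{lem:propBnds}, which provide a Gaussian envelope $k_Ug_U$ independent of $\bar{\mapp}$, give the dominating function needed to control the tail of this integral uniformly in $\trv$.

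Finally I would combine these pieces: the Metropolis–Hastings kernel $\tk_{\bar{\mapp}}$ in \eqref{eq:mhkernel} is built from $q_{\trv,\bar{\mapp}}$ and the acceptance ratio $\alpha$, and the map $q \mapsto \tk$ (proposal to MH-kernel) is itself non-expansive in the appropriate sense, so $\|\tk_{\bar{\mapp}^{(k+1)}}(\trv,\cdot) - \tk_{\bar{\mapp}^{(k)}}(\trv,\cdot)\|_{TV} \le C\,\|\bar{\mapp}^{(k+1)}-\bar{\mapp}^{(k)}\|$ uniformly in $\trv$; taking $k\to\infty$ and using $\|\bar{\mapp}^{(k+1)}-\bar{\mapp}^{(k)}\|\to 0$ in probability yields the claim. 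The main obstacle I anticipate is the uniformity over $\trv\in\real^\pd$: the naive Lipschitz estimate for $q_\rrv(\imap(\trv^\prime)\mid\imap(\trv))$ degrades as $\|\trv\|\to\infty$ because the Gaussian argument $\imap(\trv)$ can be large, so I would need the bi-Lipschitz/linear-tail correction $\imap^R$ of \eqref{eq:ubMap} to control how $\imap$ distorts distances at infinity, together with the Gaussian-envelope bound of Lemma \ref{lem:propBnds}, to get a genuinely $\trv$-uniform total-variation bound rather than one that blows up in the tails. Establishing the Lipschitz dependence of the optimizer of \eqref{eq:regopt} on the sample set (a standard sensitivity result for strongly convex programs with a barrier term, but requiring that the constraints remain strictly feasible, which the authors note the log term guarantees) is the other place where care is needed.
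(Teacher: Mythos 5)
Your overall architecture is sound, and the second half of your argument (converting closeness of $\bar{\mapp}^{(k)}$ and $\bar{\mapp}^{(k+1)}$ into a $\trv$-uniform total-variation bound on the kernels, using the Gaussian envelope of Lemma \ref{lem:propBnds} and the linear-tail correction \eqref{eq:ubMap}) is actually more careful than the paper, which simply asserts that continuity of $\imap$ in $\bar{\mapp}$ reduces diminishing adaptation to $\|\bar{\mapp}^{(k)}-\bar{\mapp}^{(k+K_U)}\|\to 0$ in probability. Where you genuinely diverge is in how that parameter convergence is obtained. The paper does not use a perturbation/stability bound at all: it argues that as $k\to\infty$ the Monte Carlo objective in \eqref{eq:regopt} converges to the exact KL objective, and then invokes a sample-average-approximation consistency result (Proposition 2.2 of Homem-de-Mello) that tolerates correlated MCMC samples, so the successive minimizers converge in probability to the minimizer of the exact problem and consecutive differences vanish. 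Your route instead tries to show directly that adding $K_U$ new terms to a sum whose curvature grows like $k$ can move the constrained optimizer by only $O(1/k)$.

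That is where the gap lies. Your Lipschitz-stability claim needs two quantitative inputs you do not establish: (i) the Hessian of the accumulated objective must grow like $k$ uniformly over the compact parameter set, i.e.\ the empirical Gram matrix $F_i^\top F_i/k$ of the polynomial basis evaluated at the chain's states must remain uniformly positive definite — this is not automatic for an adaptive chain that mixes poorly early on or whose visited states are nearly degenerate with respect to some basis directions; and (ii) the gradients of the newly added per-sample terms must be bounded uniformly in $k$, whereas the basis functions are polynomials and a single state landing far in the tails makes the new term's gradient arbitrarily large, so at best you get convergence in probability (your claim of almost-sure convergence is stronger than what this argument, or the paper, delivers). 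Without (i) and (ii) the argmin-stability constant does not scale like $1/k$ and the conclusion $\|\bar{\mapp}^{(k+1)}-\bar{\mapp}^{(k)}\|\to 0$ does not follow. To repair your version you would need to add an assumption ensuring uniform positive definiteness of the limiting Gram matrix and control the tail contributions of new samples (e.g.\ using that $\td$ is super-exponentially light together with a truncation or Markov-inequality step); otherwise the cleaner path is the paper's SAA-consistency argument, which handles the stochastic, correlated-sample aspect in one stroke.
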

\vspace{0.3cm}

\begin{lemma}[Minorization condition for Algorithm \ref{alg:adaptalg}]
\label{lem:minor}
There is a scalar $\delta$ and a set of probability measures $\nu_{\bar{\mapp}}$ defined on $C$ such that $\tk_{\bar{\mapp}}(x,\cdot) \geq \delta \nu_{\bar{\mapp}}(\cdot)$ for all $x\in C$ and $\bar{\mapp}\in\Gamma$.
\end{lemma}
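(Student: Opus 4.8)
\textbf{Proof plan for Lemma \ref{lem:minor} (minorization).}

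The plan is to establish the minorization condition by exploiting the lower bound on the target-space proposal from Lemma \ref{lem:propBnds} together with the acceptance probability being bounded away from zero on a bounded set. First I would restrict attention to the ball $C = B(0,R_C)$ and observe that, for $x \in C$ and $\trv^\prime \in C$, both $\td(x)$ and $\td(\trv^\prime)$ are bounded above and below by positive constants (using continuity and positivity of $\td$ on the compact set $C$). Likewise, by Lemma \ref{lem:propBnds}, the proposal ratio $q_{\trv,\bar{\mapp}}(\trv^{(k)} \mid \trv^\prime)/q_{\trv,\bar{\mapp}}(\trv^\prime \mid \trv)$ appearing in the acceptance probability $\alpha(\trv^\prime,\trv)$ is bounded below by a positive constant that is \emph{uniform} in $\bar{\mapp} \in \Gamma$, because the bounds $k_L g_L, k_U g_U$ do not depend on the particular map. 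Hence there exists $\alpha_0 > 0$, independent of $\bar{\mapp}$, such that $\alpha(\trv^\prime,\trv) \geq \alpha_0$ for all $\trv,\trv^\prime \in C$.

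Next I would combine these observations to bound the absolutely continuous part of the kernel \eqref{eq:mhkernel} from below. For any measurable $A \subseteq C$ and any $x \in C$,
\begin{equation}
\tk_{\bar{\mapp}}(x,A) \geq \int_A \alpha(\trv^\prime,x)\, q_{\trv,\bar{\mapp}}(\trv^\prime \mid x)\, d\trv^\prime \geq \alpha_0 k_L \int_A g_L(\trv^\prime - x)\, d\trv^\prime.
\end{equation}
Since $g_L$ is a (scaled) Gaussian density and $A \subseteq C$ with $x \in C$, the quantity $g_L(\trv^\prime - x)$ is bounded below by $\inf_{y,z \in C} g_L(y - z) =: g_{\min} > 0$ over the bounded region $C$. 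Therefore $\tk_{\bar{\mapp}}(x,A) \geq \alpha_0 k_L g_{\min}\, \lebMeas(A)$ for all $x \in C$. Defining $\nu_{\bar{\mapp}}(\cdot)$ to be Lebesgue measure restricted and normalized to $C$, i.e., $\nu_{\bar{\mapp}}(A) = \lebMeas(A \cap C)/\lebMeas(C)$ (which is in fact independent of $\bar{\mapp}$), and setting $\delta = \alpha_0 k_L g_{\min} \lebMeas(C) > 0$, we obtain $\tk_{\bar{\mapp}}(x,\cdot) \geq \delta \nu_{\bar{\mapp}}(\cdot)$ on $C$, uniformly over $\bar{\mapp} \in \Gamma$, as required.

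The main obstacle I anticipate is making the uniform-in-$\bar{\mapp}$ lower bound on the acceptance probability fully rigorous: one must track the Jacobian determinant factor $|\det \dimap(\trv^\prime)|$ that enters $q_{\trv,\bar{\mapp}}$ and verify it stays in a bounded positive range on $C$ uniformly over $\Gamma$. This is exactly what the bi-Lipschitz condition \eqref{eq:dUB} buys us — the diagonal derivatives lie in $[\lambda_{\min}, \lambda_{\max}]$, so the determinant lies in $[\lambda_{\min}^\pd, \lambda_{\max}^\pd]$ — and it is precisely why Lemma \ref{lem:propBnds} is stated with map-independent constants. The remaining bookkeeping (continuity of $\td$ on $C$, the Gaussian density being bounded below on a bounded set) is routine, and as noted in the text this portion of the argument parallels the proof of Proposition 2.1 in \cite{Atchade2006}.
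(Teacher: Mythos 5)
Your proof is correct and follows essentially the same route as the paper, which simply invokes the bounds of Lemma \ref{lem:propBnds} and defers to the proof of Lemma 6.1 in \cite{Atchade2006}; that cited argument is exactly the standard one you wrote out (acceptance probability bounded below on the compact set $C$ via the map-independent envelopes $k_L g_L \leq q_{\trv,\bar{\mapp}} \leq k_U g_U$ and positivity of $\td$ on $C$, then minorization against normalized Lebesgue measure on $C$). Your explicit handling of the Jacobian factor through the bi-Lipschitz bounds matches the paper's reasoning as well.
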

\vspace{0.3cm}

\begin{lemma}[Drift condition for Algorithm \ref{alg:adaptalg}]
\label{lem:drift}
For all points $x\in \real^\pd$ and all feasible map parameters $\bar{\mapp}\in\Gamma$, there are scalars $\lambda$ and $b$ such that $\int_{\real^\pd} V(x) \tk_{\bar{\mapp}}(x,dx) \leq \lambda V(x) + b I_C(x)$
\end{lemma}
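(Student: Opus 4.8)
\textbf{Proof proposal for Lemma~\ref{lem:drift}.}

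The plan is to establish the simultaneous drift condition by exploiting the bounds on the map-induced proposal from Lemma~\ref{lem:propBnds}, which reduce the problem to a drift estimate for a random walk with a Gaussian-type proposal on a super-exponentially light target --- exactly the situation analyzed in Proposition~2.1 of \cite{Atchade2006}. With $V(x) = k_v\,\td^{-\alpha}(x)$ for $\alpha\in(0,1)$, the integral $\int_{\real^\pd} V(x')\,\tk_{\bar{\mapp}}(x,dx')$ splits, via the structure of the Metropolis--Hastings kernel \eqref{eq:mhkernel}, into the ``accepted move'' contribution $\int \alpha(x',x)\,q_{\trv,\bar{\mapp}}(x'|x)\,V(x')\,dx'$ and the ``stay'' contribution $(1-r(x))V(x)$. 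The goal is to show the ratio
\[
\frac{\int_{\real^\pd} V(x')\,\tk_{\bar{\mapp}}(x,dx')}{V(x)}
= \int_{\real^\pd} \alpha(x',x)\,q_{\trv,\bar{\mapp}}(x'|x)\,\frac{V(x')}{V(x)}\,dx' + 1 - r(x)
\]
is bounded above by some $\lambda<1$ for all $\|x\|$ sufficiently large (outside $C = B(0,R_C)$ for suitably chosen $R_C$), uniformly in $\bar{\mapp}\in\Gamma$; on the compact set $C$ the trivial bound $\int V\,d\tk \le \sup_{x\in C}V(x) \cdot (\text{const})$ absorbs into the $b\,I_C(x)$ term.

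The key steps, in order, are: (i) write $V(x')/V(x) = (\td(x)/\td(x'))^{\alpha}$ and, using $\alpha(x',x) \ge \min\{1, \td(x')q_{\trv,\bar{\mapp}}(x|x')/(\td(x)q_{\trv,\bar{\mapp}}(x'|x))\}$ together with the sandwich $k_L g_L(x'-x) \le q_{\trv,\bar{\mapp}}(x'|x) \le k_U g_U(x'-x)$ from Lemma~\ref{lem:propBnds}, replace the map-induced proposal everywhere by the fixed Gaussian densities $g_L, g_U$ at the cost of harmless constant factors and a deterministic rescaling of the increment; (ii) decompose $\real^\pd$ (relative to the current $x$) into the acceptance region $A(x) = \{x' : \td(x') \ge \td(x)\}$ where $\alpha \equiv 1$ up to the $q$-ratio, its complement, and a ``interior cone'' region pointing toward higher density --- this is the standard partition used in \cite{Jarner2000} and \cite{Atchade2006}; (iii) on the region pointing away from the origin, use super-exponential lightness \eqref{eq:superLight} to show $\td(x')/\td(x) \to 0$ fast enough that the corresponding integral contribution vanishes as $\|x\|\to\infty$; on the region pointing toward the origin, the contribution is bounded but the ``stay'' probability $1-r(x)$ compensates because moves into higher-density regions are accepted with probability bounded below; (iv) combine to get $\limsup_{\|x\|\to\infty} \int V\,d\tk / V(x) \le \lambda_0$ for some $\lambda_0 < 1$, then fix $\lambda \in (\lambda_0,1)$ and choose $R_C$ large enough that the bound holds for $\|x\| > R_C$, and choose $b$ to cover $C$.

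The main obstacle is \textbf{the uniformity in $\bar{\mapp}\in\Gamma$}: the constants $\lambda$ and $b$, and the radius $R_C$, must not depend on the particular map. This is precisely what the bi-Lipschitz bounds buy us --- because $\lambda_{\min}$ and $\lambda_{\max}$ are fixed parameters of $\Gamma$, the bounding Gaussians $g_L(x) = g_1(\lambda_{\max}x)$ and $g_U(x) = g_2(\lambda_{\min}x)$ and the constants $k_L, k_U$ are \emph{the same for every} $\bar{\mapp}$, so the entire drift computation is carried out once against these fixed envelopes. A secondary subtlety is that the $q$-ratio $q_{\trv,\bar{\mapp}}(x|x')/q_{\trv,\bar{\mapp}}(x'|x)$ appearing inside $\alpha$ is \emph{not} symmetric (the map-induced proposal carries a Jacobian factor), but it too is pinched between $k_L g_L / (k_U g_U)$ and its reciprocal, so it contributes only a bounded multiplicative distortion that does not affect the $\|x\|\to\infty$ asymptotics governed by the super-exponential decay of $\td$. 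Once the proposal is replaced by its Gaussian envelopes, the remaining argument is essentially verbatim that of \cite{Atchade2006} and \cite{Jarner2000}, so I would cite those for the routine estimates rather than reproduce them. Full details are deferred to Appendix~\ref{sec:detailedConvergence}.
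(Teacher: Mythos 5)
Your proposal follows essentially the same route as the paper's proof: bound the map-induced proposal between fixed Gaussian envelopes via Lemma~\ref{lem:propBnds} (so all constants are uniform over $\Gamma$ through $\lambda_{\min},\lambda_{\max}$ alone), decompose the Metropolis--Hastings kernel over the acceptance and possible-rejection regions, use super-exponential lightness plus a cone of guaranteed acceptance to force $\limsup_{\|x\|\to\infty}\sup_{\bar{\mapp}}\int V\,d\tk_{\bar{\mapp}}/V < 1$, and defer the routine estimates to \cite{Atchade2006} and \cite{Jarner2000}. This matches the paper's argument (its Lemmas on the asymptotic ratio and on nonzero acceptance probability, combined with the finiteness bound from the Gaussian envelopes), so no substantive differences to flag.
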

\vspace{0.3cm}


\noindent With Lemmas \ref{lem:dimAdapt}--\ref{lem:drift} in hand, Theorem \ref{thm:ergodic} finally yields the ergodicity of our adaptive algorithm.
\vspace{0.3cm}

\begin{theorem}[Ergodicity of Algorithm \ref{alg:adaptalg}]
\label{thm:ergodic}
Algorithm \ref{alg:adaptalg} is ergodic for the target distribution $\td(\trv)$ 
when $\bar{\mapp}$ is constrained to a compact set within which $\imap(\trv;\bar{\mapp})$ is 
guaranteed to satisfy (\ref{eq:dUB}) for all $\trv \in \real^\pd$.
\end{theorem}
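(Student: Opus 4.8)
The plan is to invoke the standard adaptive MCMC convergence result of Roberts and Rosenthal, which states that an adaptive algorithm is ergodic for its target if it satisfies both \emph{diminishing adaptation} and \emph{containment}. The excerpt has already set up exactly the pieces needed for this: Lemma~\ref{lem:dimAdapt} gives diminishing adaptation when $\bar{\mapp}$ is restricted to a compact subset of $\Gamma$, and Lemmas~\ref{lem:minor} and~\ref{lem:drift} together establish the two clauses of the SSAGE condition (minorization on $C$, and the simultaneous geometric drift inequality for $V(x) = k_v \td^{-\alpha}(x)$). By Theorem~3 of \cite{Roberts2007}, SSAGE implies containment. So the core of the proof is a short assembly argument.

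First I would fix the hypotheses precisely: the target $\td$ is finite, continuous, and super-exponentially light; the reference proposal $q_\rrv$ is Gaussian with bounded mean; and $\bar{\mapp}$ is confined to a compact set $\mathcal{K} \subset \Gamma$ on which the bi-Lipschitz bounds \eqref{eq:dUB} hold uniformly (with common constants $\lambda_{\mathrm{min}}, \lambda_{\mathrm{max}}$). Second, I would observe that under these hypotheses Lemma~\ref{lem:propBnds} applies, giving the uniform two-sided Gaussian envelope \eqref{eq:qtBounds} on the map-induced proposal $q_{\trv,\bar{\mapp}}$, with constants depending only on $\lambda_{\mathrm{min}}, \lambda_{\mathrm{max}}$ and the envelope constants of $q_\rrv$ — crucially \emph{not} on the particular $\bar{\mapp} \in \mathcal{K}$. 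Third, I would quote Lemmas~\ref{lem:minor} and~\ref{lem:drift} to conclude that the family $\{\tk_{\bar{\mapp}} : \bar{\mapp} \in \mathcal{K}\}$ satisfies SSAGE with the choices $C = B(0,R_C)$ and $V = k_v \td^{-\alpha}$ already specified (checking $\sup_{x \in C} V(x) < \infty$, which holds since $\td$ is continuous and strictly positive on the compact $C$). Then Theorem~3 of \cite{Roberts2007} delivers containment. Fourth, I would combine containment with diminishing adaptation (Lemma~\ref{lem:dimAdapt}) and apply the master ergodicity theorem for adaptive MCMC (Theorem~13 of \cite{Roberts2007}, or the version in \cite{Bai2009}) to conclude that the law of $\trv^{(k)}$ converges in total variation to $\td$ and that a weak law of large numbers holds, i.e. Algorithm~\ref{alg:adaptalg} is ergodic for $\td$.

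One subtlety worth flagging rather than glossing: the update rule in lines 9--13 of Algorithm~\ref{alg:adaptalg} solves the regularized optimization \eqref{eq:regopt}, and nothing in that optimization a priori forces the solution to land in the prescribed compact set $\mathcal{K} \subset \Gamma$. The theorem statement sidesteps this by \emph{assuming} $\bar{\mapp}$ is constrained to such a set; in a rigorous writeup I would note that in practice one enforces this by applying the tail-linearization correction \eqref{eq:ubMap} (which guarantees both bounds in \eqref{eq:dUB} hold with explicit constants once $R$ and $\lambda_{\mathrm{min}}$ are fixed) and, if necessary, projecting the coefficient vector back into $\mathcal{K}$ — a standard device in adaptive MCMC that preserves diminishing adaptation. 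I would make explicit that $\lambda_{\mathrm{min}} > 0$ is what keeps $q_{\trv,\bar{\mapp}}$ from degenerating (lower bound in \eqref{eq:qtBounds}) and $\lambda_{\mathrm{max}} < \infty$ is what preserves the super-exponential light tails of $\tilde{\rd}$ (contrast with the failure mode computed in \eqref{eq:zeroexplight}), so that the drift function $V = k_v \td^{-\alpha}$ genuinely works.

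The main obstacle is not in this assembly step — that is essentially bookkeeping — but is concentrated entirely in Lemma~\ref{lem:drift}, whose proof (deferred to the appendix, and following Proposition~2.1 of \cite{Atchade2006}) must show that the geometric drift inequality holds \emph{with a single pair} $(\lambda, b)$ valid for all $\bar{\mapp} \in \Gamma$. That uniformity is exactly what the envelope \eqref{eq:qtBounds} buys us: because $q_{\trv,\bar{\mapp}}$ is sandwiched between two fixed Gaussians independent of $\bar{\mapp}$, the integral $\int V(x')\, q_{\trv,\bar{\mapp}}(x'|x)\, dx'$ can be bounded above using only the upper envelope $k_U g_U$, reducing the simultaneous drift condition for the adaptive family to the (known) drift condition for a fixed-proposal random-walk Metropolis sampler on a super-exponentially light target. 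Hence at the level of Theorem~\ref{thm:ergodic} itself there is no real obstacle once the three lemmas are granted — the theorem is a corollary of \cite{Roberts2007} applied to the verified SSAGE and diminishing-adaptation conditions.
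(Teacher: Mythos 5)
Your proposal is correct and follows essentially the same route as the paper: the paper's proof of Theorem \ref{thm:ergodic} likewise assembles Lemmas \ref{lem:minor} and \ref{lem:drift} into SSAGE, uses Theorem 3 of \cite{Roberts2007} to obtain containment, and combines this with the diminishing adaptation of Lemma \ref{lem:dimAdapt} to conclude ergodicity. Your added remarks on the uniformity of the envelope \eqref{eq:qtBounds} over the compact set and on how the constraint on $\bar{\mapp}$ is enforced in practice are consistent with, and slightly more explicit than, the paper's brief argument.
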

\begin{proof} 
Lemmas \ref{lem:minor} and \ref{lem:drift} ensure that SSAGE is satisfied, which subsequently ensures containment.  The diminishing adaptation property from Lemma \ref{lem:dimAdapt} combined with SSAGE implies ergodicity by Theorem 3 of \cite{Roberts2007}.
\end{proof}

\section{Numerical examples}
\label{sec:perf}
Here we compare the performance of Algorithm \ref{alg:adaptalg} with that of several existing MCMC methods, including delayed rejection adaptive Metropolis (DRAM) \cite{Haario2006}, simplified manifold MALA (sMMALA) \cite{Girolami2011}, adaptive MALA (AMALA) \cite{Atchade2006}, and the No-U-Turn Sampler (NUTS) \cite{Hoffman2011}.
For a full comparison, we will pair transport maps with several different reference proposal mechanisms: a random walk (TM+RW), both varieties of delayed rejection discussed in Section~\ref{sec:adaptmcmc:refprop} (denoted by TM+DRG for the global/independence proposal and TM+DRL for local proposals), and a MALA proposal (TM+LA).  To explore the strengths and weaknesses of each algorithm, we consider three test problems that provide a range of target distributions.

Throughout our results, the minimum effective sample size (ESS) over all parameter dimensions is used to evaluate MCMC performance. We run \emph{multiple} independent chains for each sampler and extract the median integrated autocorrelation time for each dimension, then take the worst case over dimensions; details on this ESS evaluation are provided in Appendix \ref{sec:esscalc}. Larger effective sample sizes correspond to smaller variances of estimates computed from MCMC samples. To illustrate the computational cost of each method, we also report the ESS normalized by run time and by the number of function evaluations.  Posterior density evaluations and gradient evaluations are summed when normalizing by ``function evaluation.''

\subsection{Biochemical oxygen demand model}
\label{sec:perf:bod}
In water quality monitoring, the simple biochemical oxygen demand model given by $B(t) = \trv_0(1-\exp(-\trv_1 t))$ is often fit to observations of $B(t)$ at early times (e.g., $t<5$)~\cite{Sullivan2010}.  In this example, we wish to infer $\trv_1$ and $\trv_2$ given $N$ observations at times $\{t_1, t_2, \ldots, t_N\}$. We use $20$ observations evenly spread over $[1,5]$, with additive Gaussian errors, $y(t_i) = \trv_0(1-\exp(-\trv_1 t_i)) + e$, where $e\sim N(0,\sigma_B^2)$ and $\sigma_B^2=2 \times 10^{-4}$. 

Our synthetic data come from evaluating $B(t_i)$ with $\trv_0=1$ and $\trv_1=0.1$ and sampling $e$.  Using a uniform improper prior over $\real^2$, we have the target posterior given by
\begin{equation}
\log\td(\trv_0,\trv_1) =  - 2\pi \sigma_B^2  - \frac12 \sum_{i=1}^2 \left[\trv_0(1-\exp(-\trv_1 t_i))- y(t_i)\right]^2
\end{equation}
It is easy to obtain gradients of the posterior density, allowing us to again compare many different MCMC algorithms. For each algorithm, we run 30 independent chains starting at the posterior mode; each chain is run for $\num{7.5e4}$ iterations, with the first $\num{1e4}$ iterations discarded as burn-in. Results are shown in Table \ref{tab:bodPerf1}.

\begin{table}
\caption[Comparison of MCMC performance on BOD problem.]{\label{tab:bodPerf1} Performance of MCMC samplers on the BOD problem. $\tau_{\text{max}}$ is the maximum integrated autocorrelation time, where the maximum is taken over all dimensions; ESS is the corresponding minimum effective sample size. Result are averaged over multiple independent runs of each sampler, and $\sigma_{\tau}$ is the empirical standard deviation of $\tau_{\text{max}}$ over these runs.}
\centering

\small
\fbox{\begin{tabular}{l|rr|rrr|D{.}{.}{4}D{.}{.}{2}}
Method & $\tau_{\max}$ & $\sigma_\tau$ & ESS & ESS/sec & ESS/eval &  \begin{minipage}{0.75cm}\centering Rel.\\[-0.15cm] ESS/sec\end{minipage} & \begin{minipage}{0.75cm}\centering Rel.\\[-0.15cm] ESS/eval\end{minipage} \\\hline
DRAM &  59.2 & 24.6 & 551 & 1.04e-01 & 4.23e-03 & 1.00 & 1.00\\
NUTS &  14.7 &  1.0 & 2214 & 4.97e-02 & 1.20e-03 & 0.48 & 0.28\\
sMMALA &  84.4 & 14.4 & 385 & 1.05e-03 & 2.57e-03 & 0.01 & 0.61\\
AMALA &  42.1 & 11.7 & 771 & 1.46e-01 & 5.14e-03 & 1.40 & 1.22\\\hline
TM+DRG &   2.1 &  0.7 & 15660 & 1.44e+00 & 1.61e-01 & 13.85 & 38.06\\
TM+DRL &   4.5 &  0.5 & 7174 & 6.13e-01 & 5.90e-02 & 5.89 & 13.95\\
TM+RWM &   5.0 &  0.2 & 6558 & 7.98e-01 & 8.73e-02 & 7.67 & 20.64\\
TM+LA &  854.9 & 340.3 & 38 & 2.94e-03 & 2.53e-04 & 0.03 & 0.06\\
\end{tabular}}

\end{table}

%
%

In this example, we represent the map with total-order Hermite polynomials of degree three.  The additional nonlinear terms help capture the changing posterior correlation structure shown in Figure \ref{fig:bodPosterior}, which is challenging for standard samplers to explore.  Methods like DRAM and AMALA may capture the global covariance, but this covariance is often not representative of the local structure and does not provide enough information for efficient posterior sampling. Other methods like sMMALA and NUTS use derivative information to capture local geometry, but the local geometry varies considerably and is not sufficiently representative of the global structure, making it difficult for these samples to take large jumps through the parameter space.  Our transport map proposals, on the other hand, are capable of capturing the global non-Gaussian structure of Figure \ref{fig:bodPosterior}; in fact, the pushforward of this target density through the map becomes much more Gaussian, as shown in Figure \ref{fig:bodPosteriorMap}.  Map-based methods with global independence proposals (e.g., TM+DRG) can then efficiently ``jump'' across the entire parameter space, yielding the much shorter integrated autocorrelation times shown in Table \ref{tab:bodPerf1}. 

Another interesting result in Table \ref{tab:bodPerf1} is the poor performance of TM+LA.  In this example, the basic MALA algorithm was not able to sufficiently explore the space on its own (or equivalently, with an initial identity map); hence, poor exploration in the early stages of Algorithm~\ref{alg:adaptalg} hindered good adaptation and resulted in the inefficient sampling shown here.  Because of this poor performance, the TM+LA algorithm will not be employed in our other test problems.

\begin{figure}[h]
\centering
\begin{subfigure}{0.40\textwidth}
\centering
\begin{tikzpicture}

\begin{axis}[%
width=0.9\textwidth,
height=0.9\textwidth,
xmin=0.4,
xmax=1.3,
ymin=0.08,
ymax=0.28,
xlabel={$\trv_1$},
ylabel={$\trv_2$},
axis on top=true,
]
\addplot graphics[xmin=0.4,ymin=0.08,xmax=1.3,ymax=0.28] {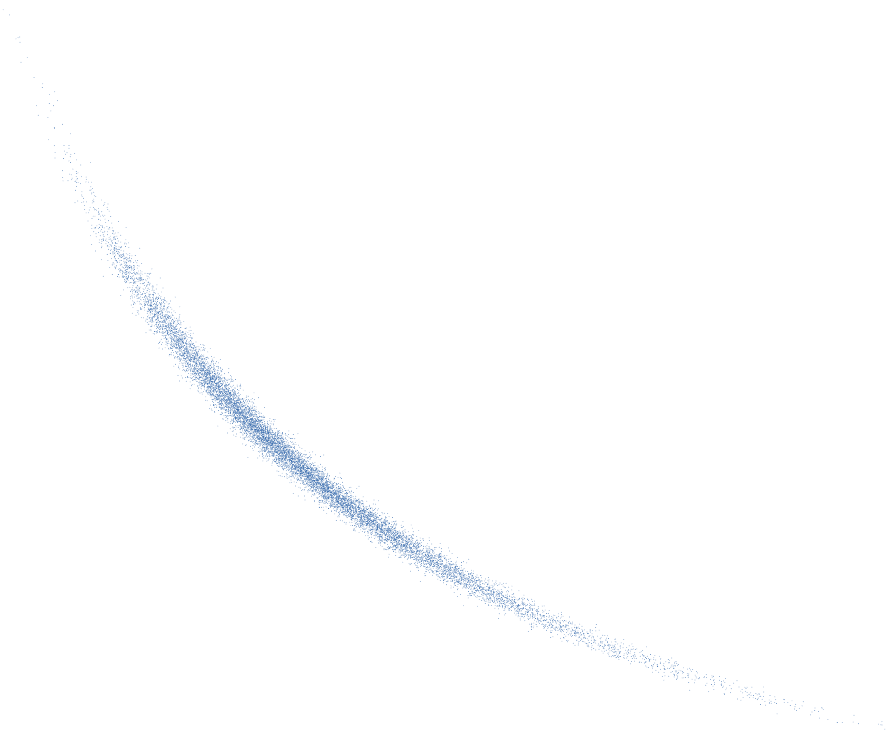};
\end{axis}
\end{tikzpicture}%
\caption{Scatter plot of BOD posterior samples.}
\label{fig:bodPosterior}
\end{subfigure}
\begin{subfigure}{0.40\textwidth}
\centering
\input{pics/mcmc/BOD_Final_Reference.tex}
\caption{Contours of the map-induced density in the reference space $\tilde{\rd}(\rrv)$.}
\label{fig:bodPosteriorMap}
\end{subfigure}

\caption[Comparison of BOD posterior and map-preconditioned density.]{The narrow high-density region and changing correlation structure of the target distribution on the left is difficult for many samplers.  The transport map approach, after adaptation, pushes forward the original target to the distribution shown on the right, which can be sampled much more effectively.}

\end{figure}

\subsection{Predator-prey system}
\label{sec:perf:predprey}
The previous example has a posterior density whose derivatives are easy to evaluate in closed form.  However, many realistic inference problems involve complex likelihoods for which derivative information is expensive to compute. This example illustrates such a situation; we consider parameter inference in an ODE model of a predator-prey system,
 \begin{eqnarray}
  \frac{dP}{dt} & = & rP\left(1-\frac{P}{K}\right) - s\frac{PQ}{a+P}\nonumber\\
  \frac{dQ}{dt} & = & u\frac{PQ}{a+P} - vQ  , \label{eq:predPreyModel}
 \end{eqnarray}
where $(P, Q)$ are the prey and predator populations and $r$, $K$, $s$, $a$, $u$, and $v$ are model parameters. See \cite{Rockwood2006} for model details and the ecological meaning of these parameters.
In addition to these six parameters, we infer the initial conditions $P(0)$ and $Q(0)$ from five noisy observations of both $P$ and $Q$ at times regularly spaced on $[0,50]$.  The observations are perturbed with independent Gaussian observational errors with mean zero and variance $10$. We generate the data using the following ``true'' parameter values:
  \begin{equation}
\left[ P^\ast(0), Q^\ast(0), r^\ast, K^\ast, s^\ast, a^\ast, u^\ast, v^\ast \right]^T = \left[50, 5, 0.6, 100, 1.2, 25, 0.5, 0.3\right]^T \label{eq:truePredPreyParams}.
 \end{equation}
The MCMC chain is run on a set of parameters $\theta$ that are scaled by these true parameters.

 
The prior for this problem is uniform over the intersection of a hypercube in parameter space,  $[0.001, 50]^8$, and the set of parameters that produce cyclic solutions.  The cyclic solution requirement can be enforced by examining the Jacobian of \eqref{eq:predPreyModel} at its fixed points. A fixed point, denoted by $[P_f,Q_f]$, must satisfy $P_f>0$ and $Q_f>0$ and the Jacobian of the right hand side of (\ref{eq:predPreyModel}) must have eigenvalues with positive real components when evaluated at $[P_f,Q_f]$ \cite{Strogatz2001}.
 
 \begin{figure}
\centering

\newcommand{\groupPlotWidth}{0.2000\textwidth}
\begin{tikzpicture}[scale=0.6]
\begin{groupplot}[group style={every plot/.style={enlargelimits=true},group size=8 by 8,xlabels at=edge bottom,ylabels at=edge left, xticklabels at=edge bottom,yticklabels at=edge left,vertical sep=1pt,horizontal sep=1pt},height=\groupPlotWidth,width=\groupPlotWidth,ticks=none,enlargelimits=false]

\nextgroupplot[title={\large$P(0)$}]
\addplot graphics[xmin=0.737,ymin=1.255,xmax=0.000,ymax=6.133] {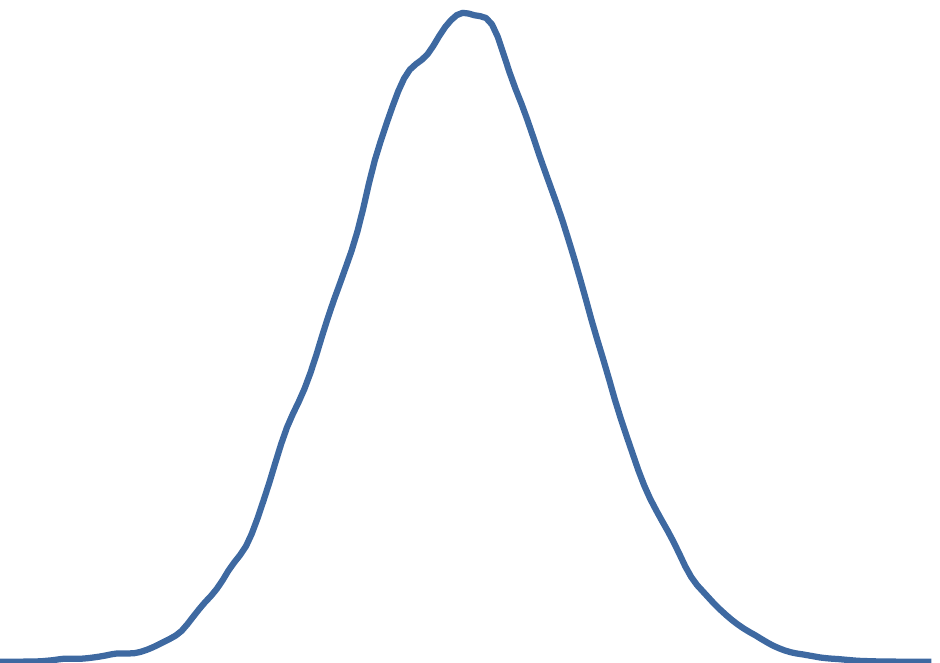};

\nextgroupplot[group/empty plot]
\nextgroupplot[group/empty plot]
\nextgroupplot[group/empty plot]
\nextgroupplot[group/empty plot]
\nextgroupplot[group/empty plot]
\nextgroupplot[group/empty plot]
\nextgroupplot[group/empty plot]

\nextgroupplot
\addplot graphics[xmin=0.737,ymin=1.255,xmax=0.203,ymax=2.628] {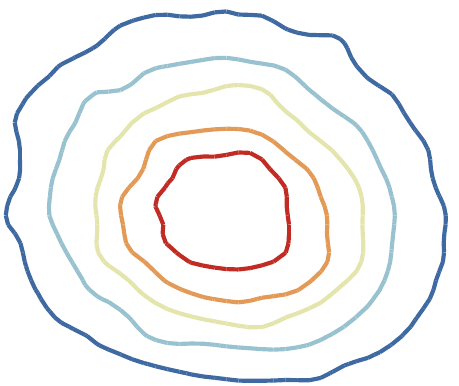};
\nextgroupplot[title={\large$Q(0)$}]
\addplot graphics[xmin=0.203,ymin=2.628,xmax=0.000,ymax=1.226] {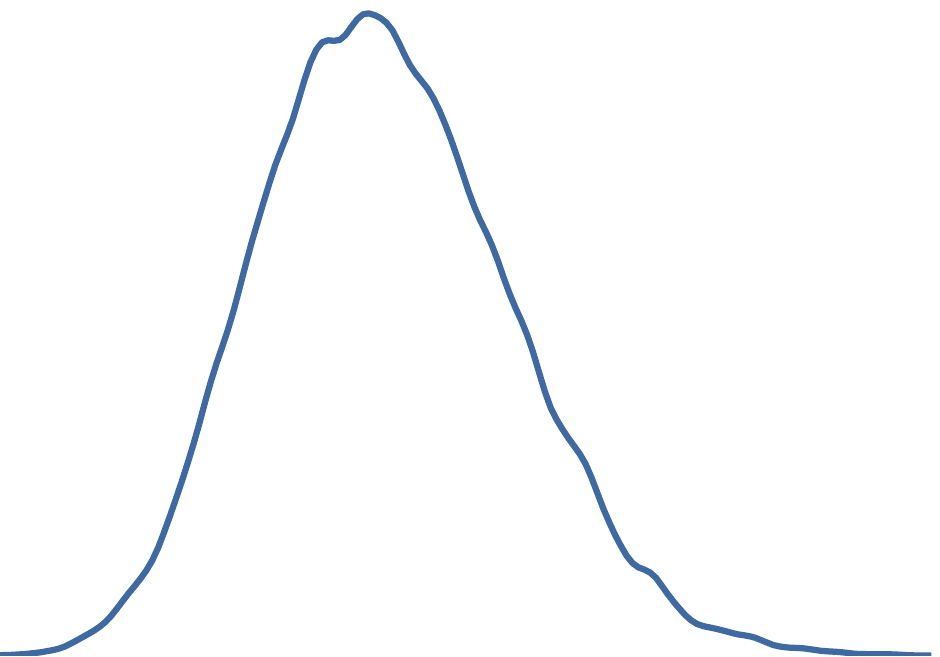};

\nextgroupplot[group/empty plot]
\nextgroupplot[group/empty plot]
\nextgroupplot[group/empty plot]
\nextgroupplot[group/empty plot]
\nextgroupplot[group/empty plot]
\nextgroupplot[group/empty plot]

\nextgroupplot
\addplot graphics[xmin=0.737,ymin=1.255,xmax=0.269,ymax=1.583] {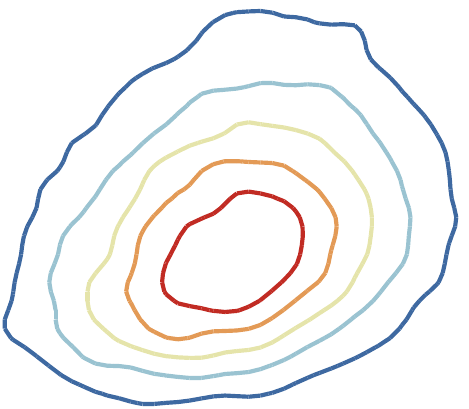};
\nextgroupplot
\addplot graphics[xmin=0.203,ymin=2.628,xmax=0.269,ymax=1.583] {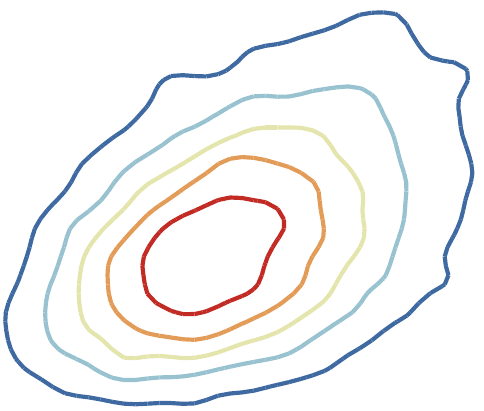};
\nextgroupplot[title={\large$r$}]
\addplot graphics[xmin=0.269,ymin=1.583,xmax=0.000,ymax=2.212] {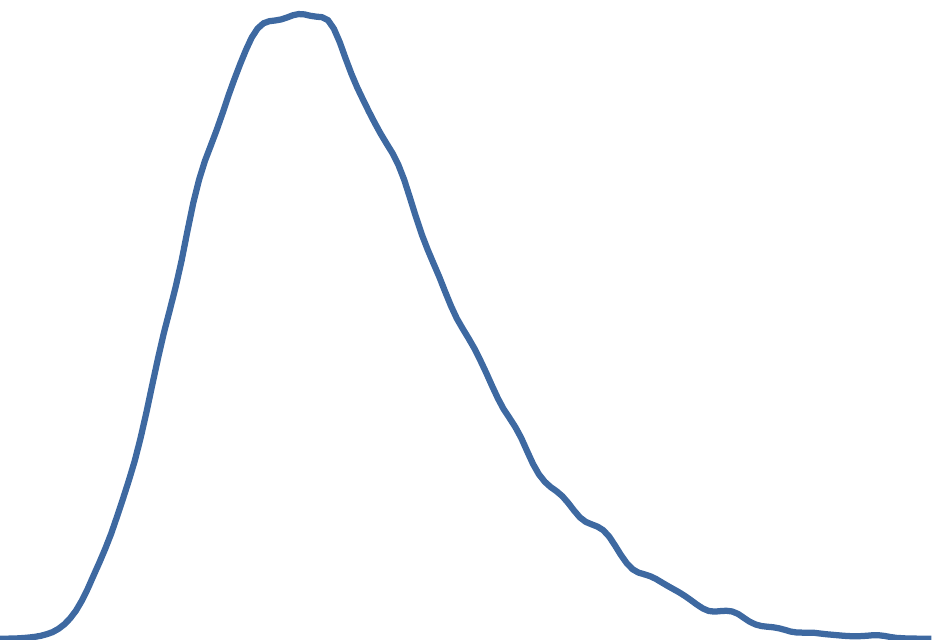};

\nextgroupplot[group/empty plot]
\nextgroupplot[group/empty plot]
\nextgroupplot[group/empty plot]
\nextgroupplot[group/empty plot]
\nextgroupplot[group/empty plot]

\nextgroupplot
\addplot graphics[xmin=0.737,ymin=1.255,xmax=0.895,ymax=3.660] {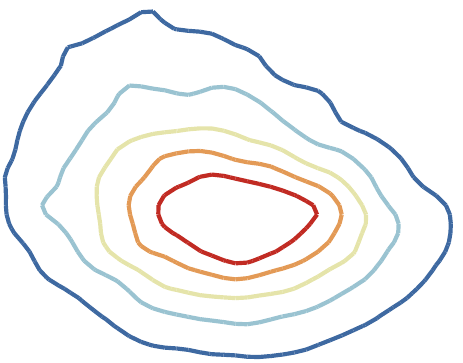};
\nextgroupplot
\addplot graphics[xmin=0.203,ymin=2.628,xmax=0.895,ymax=3.660] {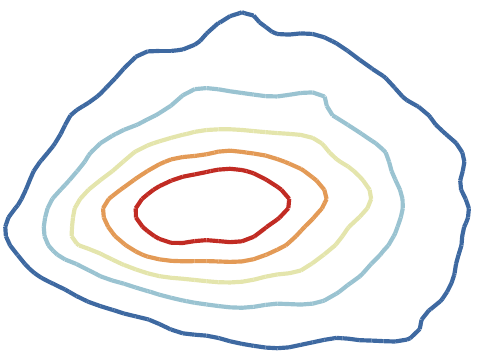};
\nextgroupplot
\addplot graphics[xmin=0.269,ymin=1.583,xmax=0.895,ymax=3.660] {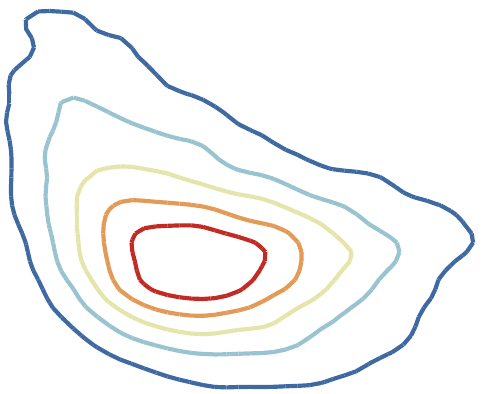};
\nextgroupplot[title={\large$K$}]
\addplot graphics[xmin=0.895,ymin=3.660,xmax=0.000,ymax=1.288] {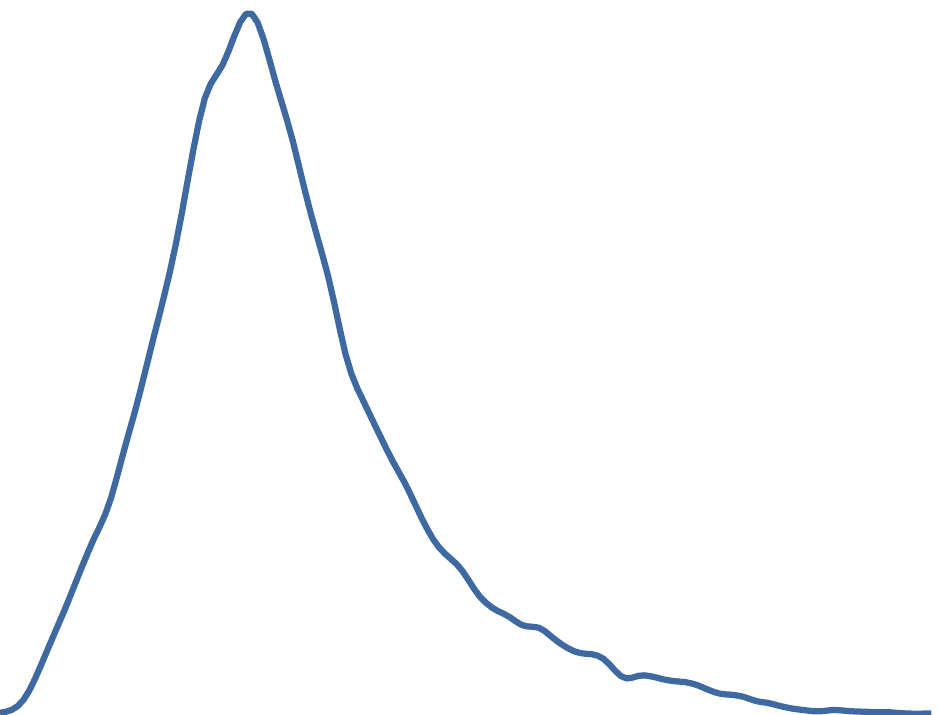};

\nextgroupplot[group/empty plot]
\nextgroupplot[group/empty plot]
\nextgroupplot[group/empty plot]
\nextgroupplot[group/empty plot]

\nextgroupplot
\addplot graphics[xmin=0.737,ymin=1.255,xmax=0.605,ymax=5.573] {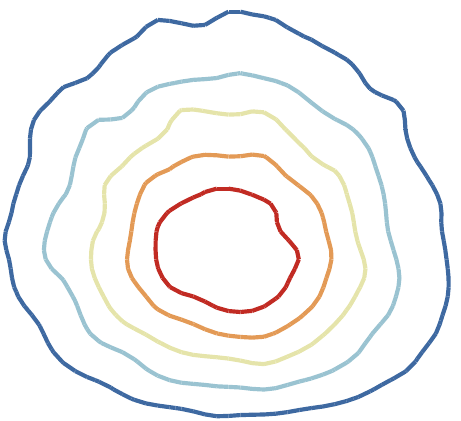};
\nextgroupplot
\addplot graphics[xmin=0.203,ymin=2.628,xmax=0.605,ymax=5.573] {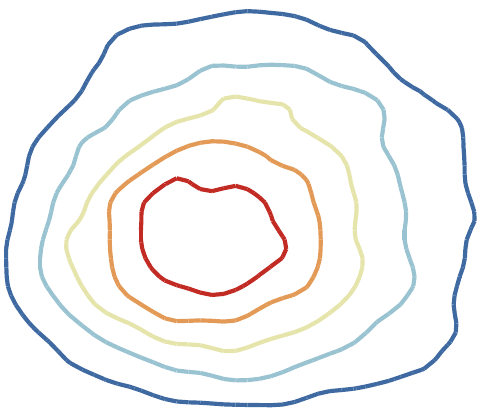};
\nextgroupplot
\addplot graphics[xmin=0.269,ymin=1.583,xmax=0.605,ymax=5.573] {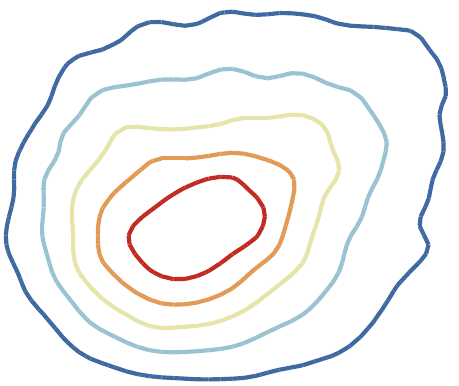};
\nextgroupplot
\addplot graphics[xmin=0.895,ymin=3.660,xmax=0.605,ymax=5.573] {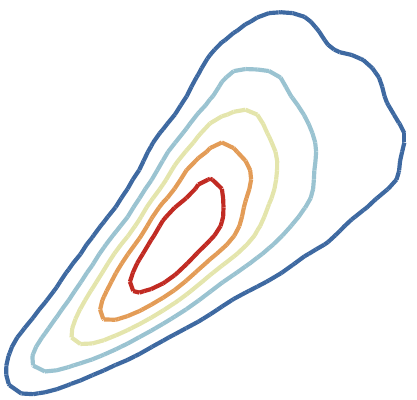};
\nextgroupplot[title={\large$s$}]
\addplot graphics[xmin=0.605,ymin=5.573,xmax=0.000,ymax=0.507] {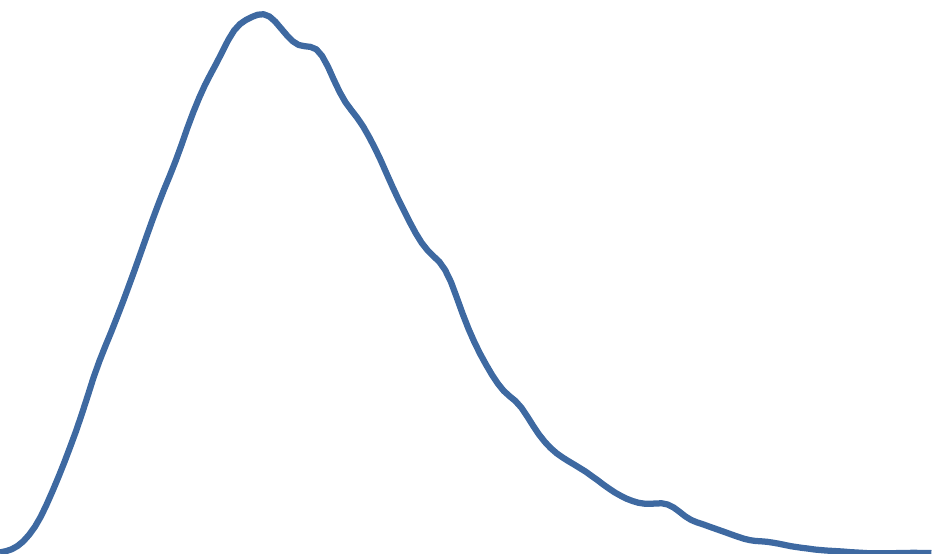};

\nextgroupplot[group/empty plot]
\nextgroupplot[group/empty plot]
\nextgroupplot[group/empty plot]

\nextgroupplot
\addplot graphics[xmin=0.737,ymin=1.255,xmax=0.566,ymax=9.850] {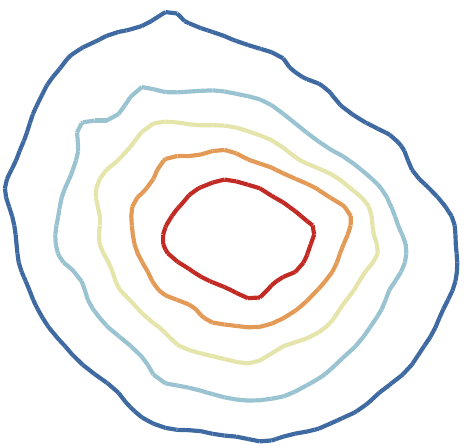};
\nextgroupplot
\addplot graphics[xmin=0.203,ymin=2.628,xmax=0.566,ymax=9.850] {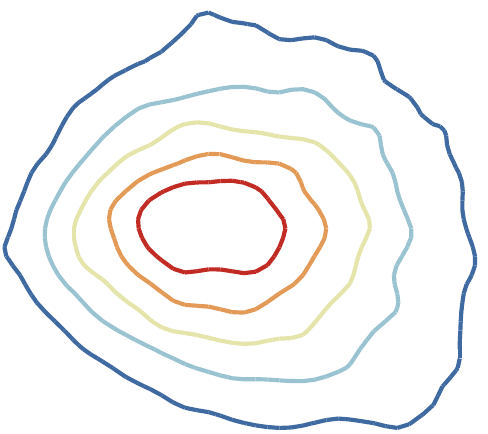};
\nextgroupplot
\addplot graphics[xmin=0.269,ymin=1.583,xmax=0.566,ymax=9.850] {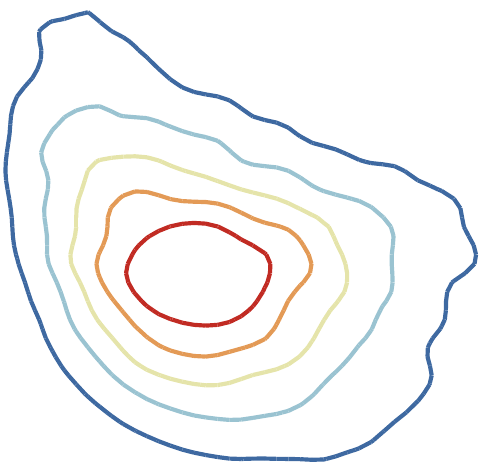};
\nextgroupplot
\addplot graphics[xmin=0.895,ymin=3.660,xmax=0.566,ymax=9.850] {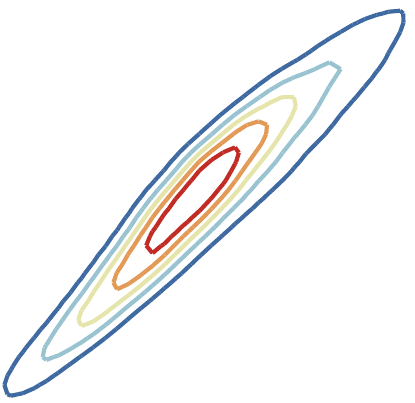};
\nextgroupplot
\addplot graphics[xmin=0.605,ymin=5.573,xmax=0.566,ymax=9.850] {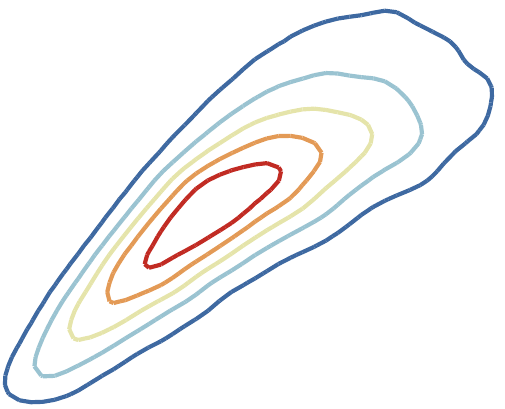};
\nextgroupplot[title={\large$a$}]
\addplot graphics[xmin=0.566,ymin=9.850,xmax=0.000,ymax=0.304] {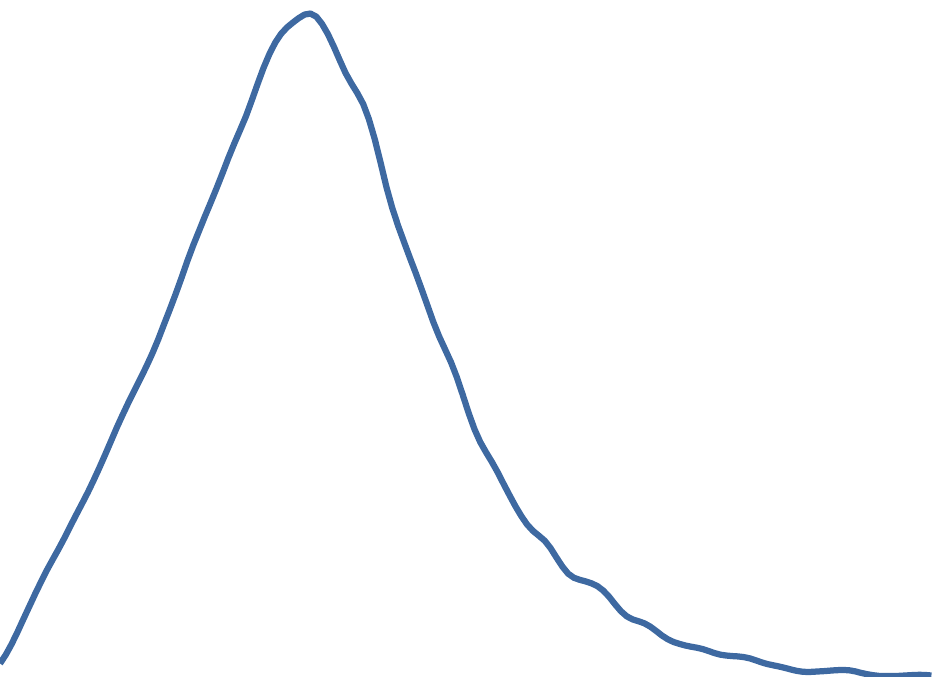};

\nextgroupplot[group/empty plot]
\nextgroupplot[group/empty plot]

\nextgroupplot
\addplot graphics[xmin=0.737,ymin=1.255,xmax=0.134,ymax=2.544] {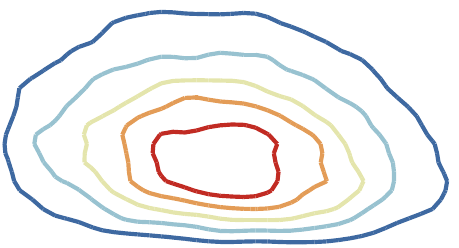};
\nextgroupplot
\addplot graphics[xmin=0.203,ymin=2.628,xmax=0.134,ymax=2.544] {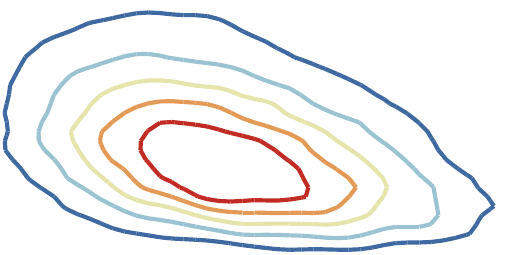};
\nextgroupplot
\addplot graphics[xmin=0.269,ymin=1.583,xmax=0.134,ymax=2.544] {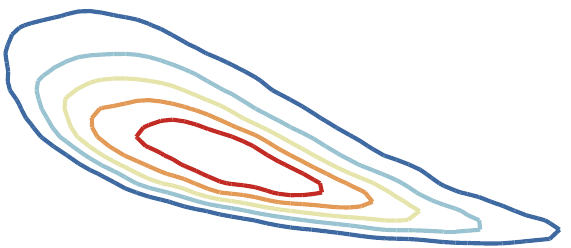};
\nextgroupplot
\addplot graphics[xmin=0.895,ymin=3.660,xmax=0.134,ymax=2.544] {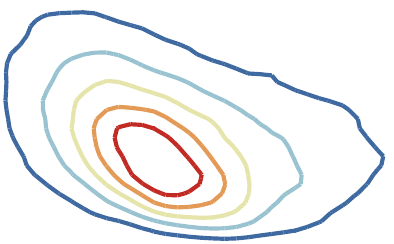};
\nextgroupplot
\addplot graphics[xmin=0.605,ymin=5.573,xmax=0.134,ymax=2.544] {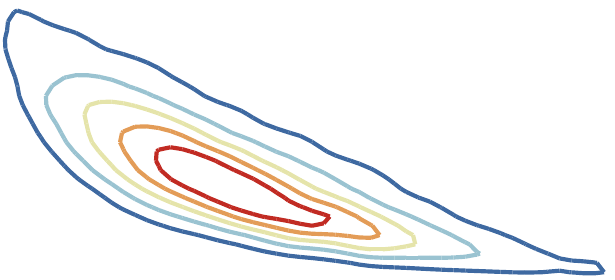};
\nextgroupplot
\addplot graphics[xmin=0.566,ymin=9.850,xmax=0.134,ymax=2.544] {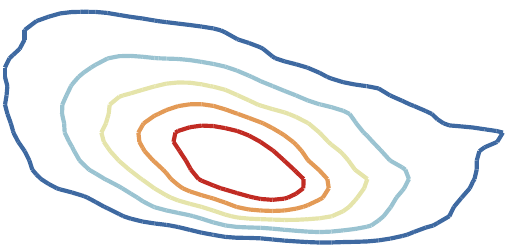};
\nextgroupplot[title={\large$u$}]
\addplot graphics[xmin=0.134,ymin=2.544,xmax=0.000,ymax=1.875] {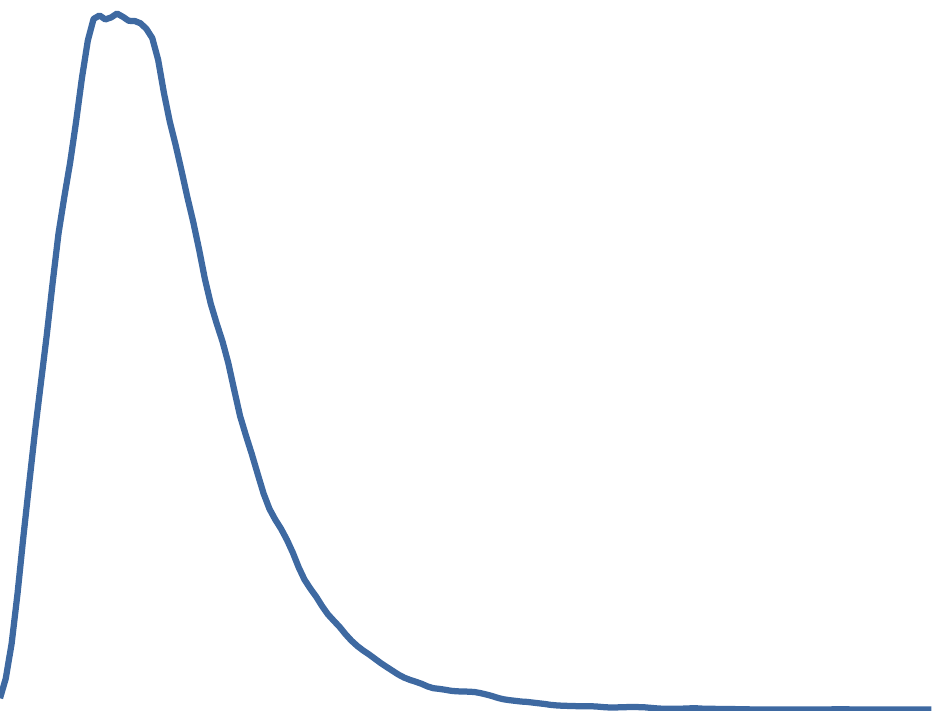};

\nextgroupplot[group/empty plot]

\nextgroupplot
\addplot graphics[xmin=0.737,ymin=1.255,xmax=0.273,ymax=2.028] {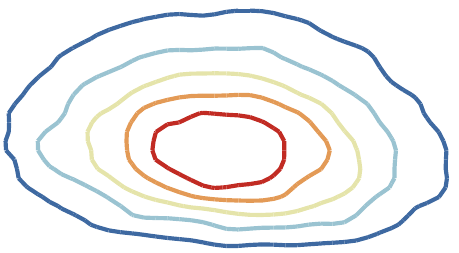};
\nextgroupplot
\addplot graphics[xmin=0.203,ymin=2.628,xmax=0.273,ymax=2.028] {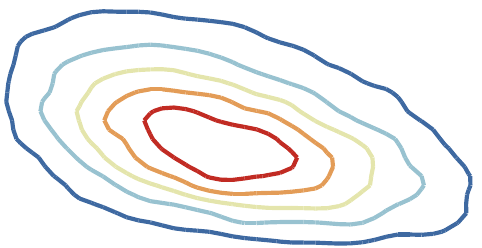};
\nextgroupplot
\addplot graphics[xmin=0.269,ymin=1.583,xmax=0.273,ymax=2.028] {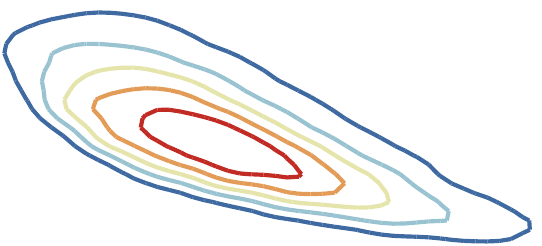};
\nextgroupplot
\addplot graphics[xmin=0.895,ymin=3.660,xmax=0.273,ymax=2.028] {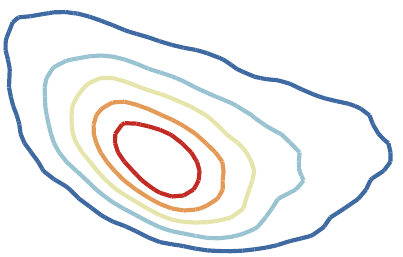};
\nextgroupplot
\addplot graphics[xmin=0.605,ymin=5.573,xmax=0.273,ymax=2.028] {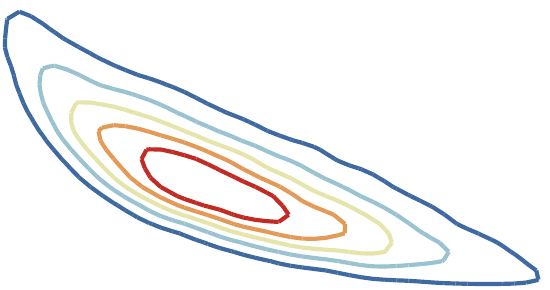};
\nextgroupplot
\addplot graphics[xmin=0.566,ymin=9.850,xmax=0.273,ymax=2.028] {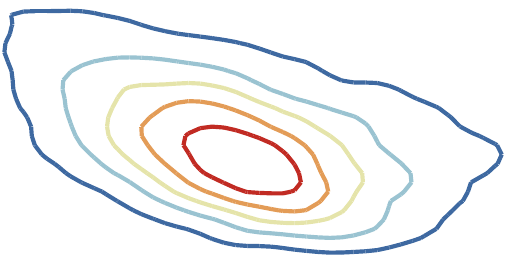};
\nextgroupplot
\addplot graphics[xmin=0.134,ymin=2.544,xmax=0.273,ymax=2.028] {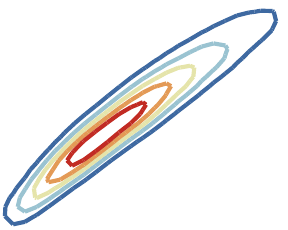};
\nextgroupplot[title={\large$v$}]
\addplot graphics[xmin=0.273,ymin=2.028,xmax=0.000,ymax=2.486] {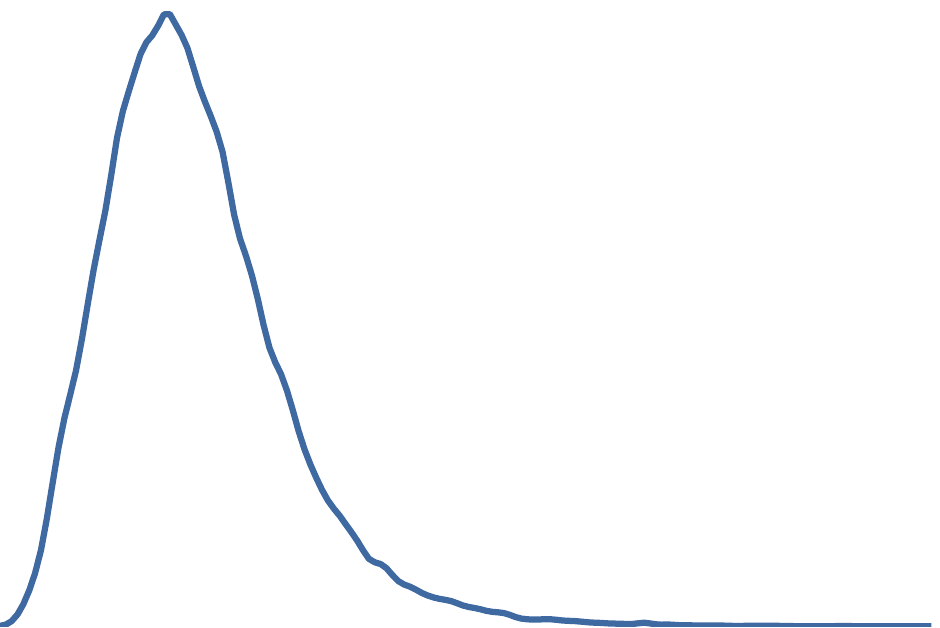};

\end{groupplot}
\end{tikzpicture}

\caption[Predator-prey posterior distribution]{Posterior distribution for the predator-prey inference example.}
\label{fig:predPreyPost}
\end{figure}

The posterior distribution of the parameters is shown in Figure \ref{fig:predPreyPost}. While not as narrow as the BOD posterior, this target distribution is non-Gaussian and its various marginals have changing local correlation structures.  Figure \ref{fig:predPreyTrace} shows trace plots for each algorithm, while Table \ref{tab:predPreyPerf} shows a performance comparison of the samplers. Results are computed for two different chain lengths: chains of $\num{1.2e5}$ s.pdf, with the first $\num{5e4}$ steps discarded as burn-in; and longer chains with $\num{5e5}$ total steps, discarding the first $\num{2e5}$ as burn-in.  The longer chains are intended as a check to validate the perfomance conclusions drawn from shorter chains in the other examples. The transport map algorithms used multivariate Hermite polynomials of total degree three.

For the shorter chains, each algorithm was started at the posterior mode, and 30 independent runs of each sampler were used to generate the results.  The longer chains were started with random initial points taken from the prior, and 100 independent runs of each sampler were performed.  All derivative information was computed by solving the forward sensitivity equations corresponding to (\ref{eq:predPreyModel}).  
%
Even though we would expect NUTS to have a large effective sample size on this problem, NUTS was not included here because of the intractable number of gradient evaluations it required. Our initial tests indicated that roughly 40 days would be required to run our full numerical comparison with NUTS.

\begin{figure}
\centering

\begin{tikzpicture}

\begin{groupplot}[group style={every plot/.style={enlargelimits=false},group size=3 by 2, xlabels at=edge bottom,ylabels at=edge left, xticklabels at=edge bottom,yticklabels at=edge left,vertical sep=3pt,horizontal sep=3pt},enlargelimits=false,axis on top=true, xtick={0,50000,100000}, scaled x ticks=false,
    xticklabels={$0$,$5e4$,$1e5$}, xmin=200000,xmax=250000,ymin=0.25,ymax=3.0,height=3.5cm,width=5cm,no markers,every axis title/.style={below right,at={(0,1)}}, xlabel={MCMC Step}, ylabel={$P(0)$}]
    \nextgroupplot[title=DRAM]
    \addplot graphics[xmin=200000,ymin=0.25,xmax=250000,ymax=2.5] {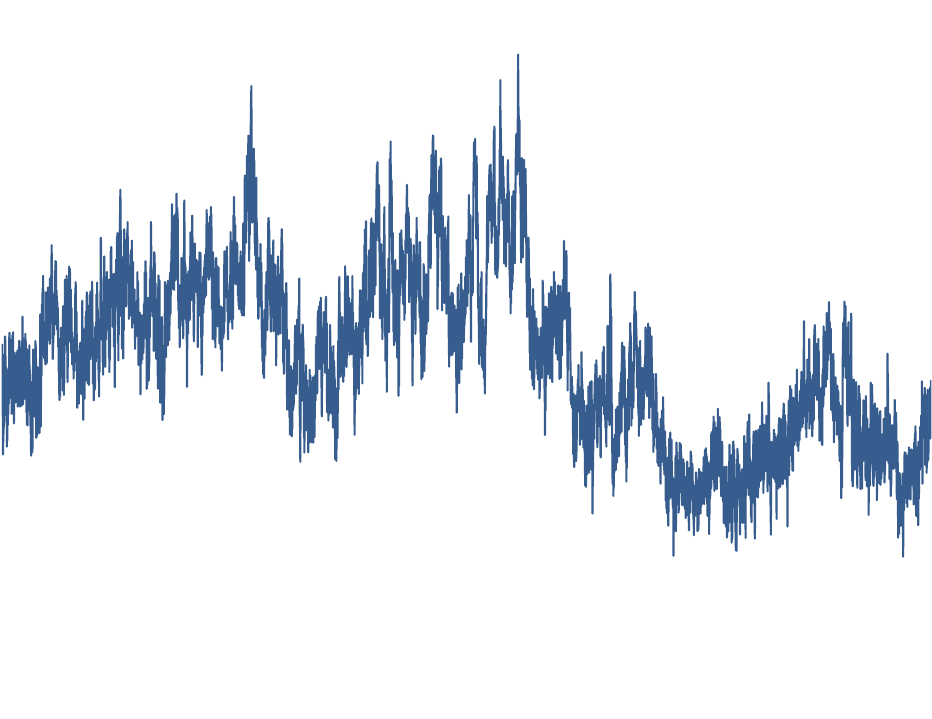};
    \nextgroupplot[title=sMMALA]
    \addplot graphics[xmin=200000,ymin=0.25,xmax=250000,ymax=2.5] {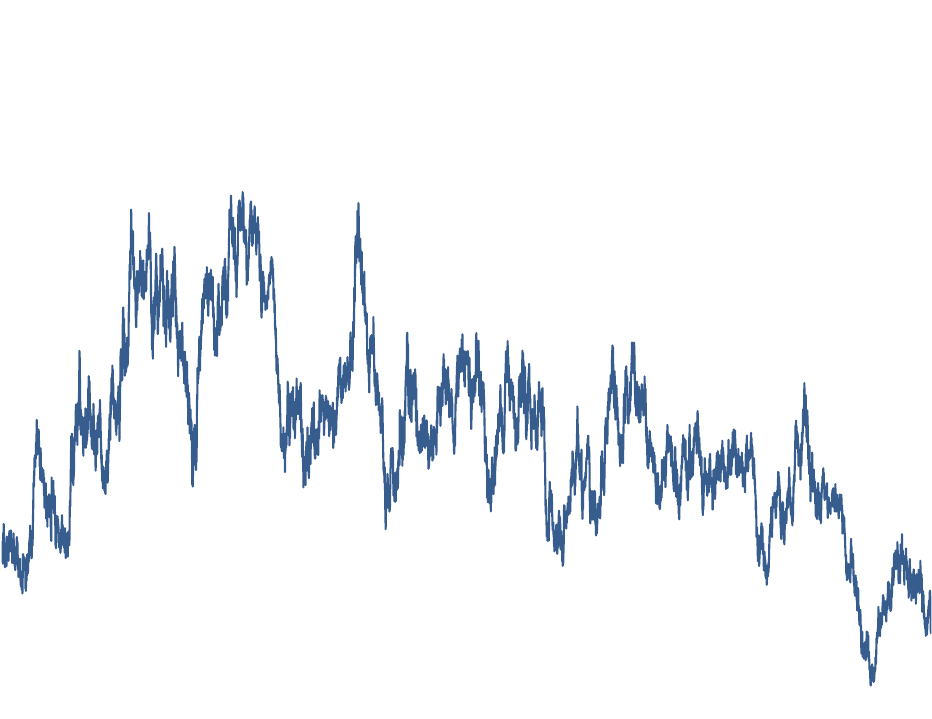};
    \nextgroupplot[title=AMALA]
    \addplot graphics[xmin=200000,ymin=0.25,xmax=250000,ymax=2.5] {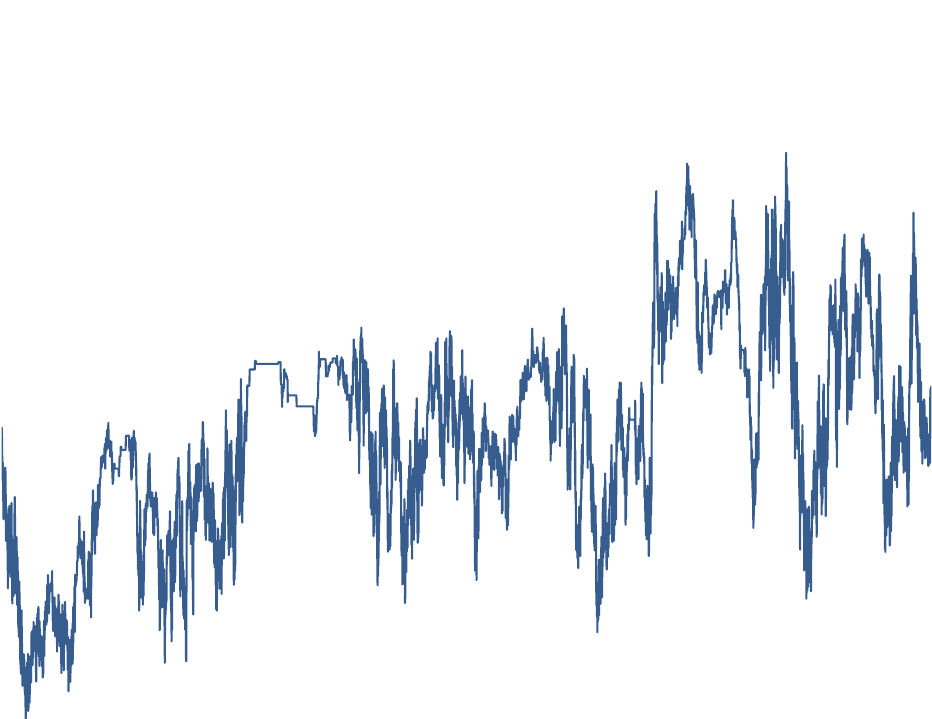};
    \nextgroupplot[title=TMDRG]
    \addplot graphics[xmin=200000,ymin=0.25,xmax=250000,ymax=2.5] {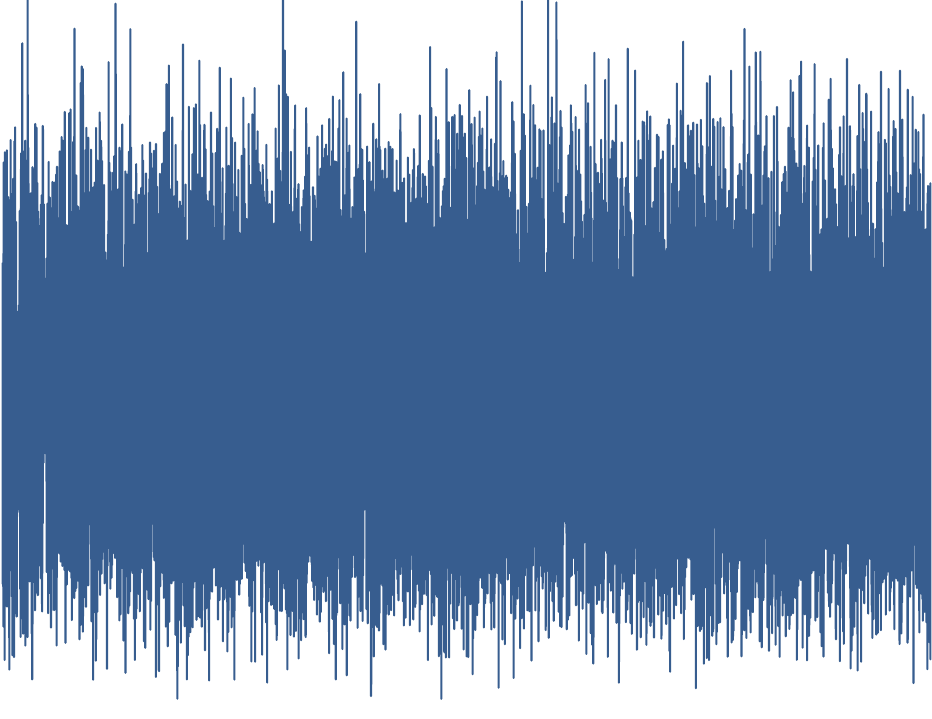};
    \nextgroupplot[title=TMDRL]
    \addplot graphics[xmin=200000,ymin=0.25,xmax=250000,ymax=2.5] {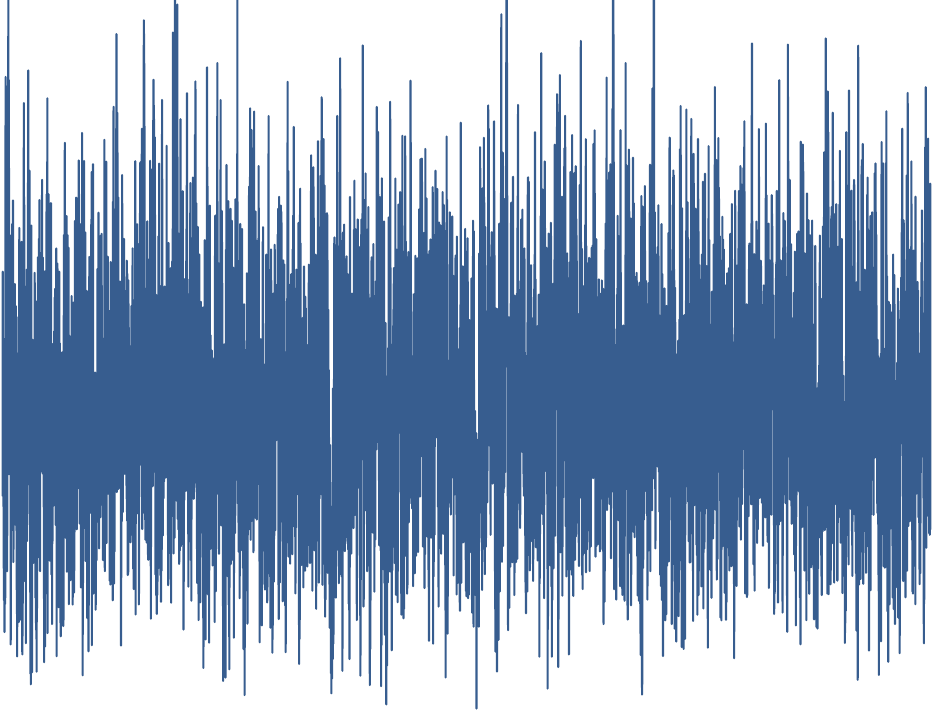};
    \nextgroupplot[title=TMRWM]
    \addplot graphics[xmin=200000,ymin=0.25,xmax=250000,ymax=2.5] {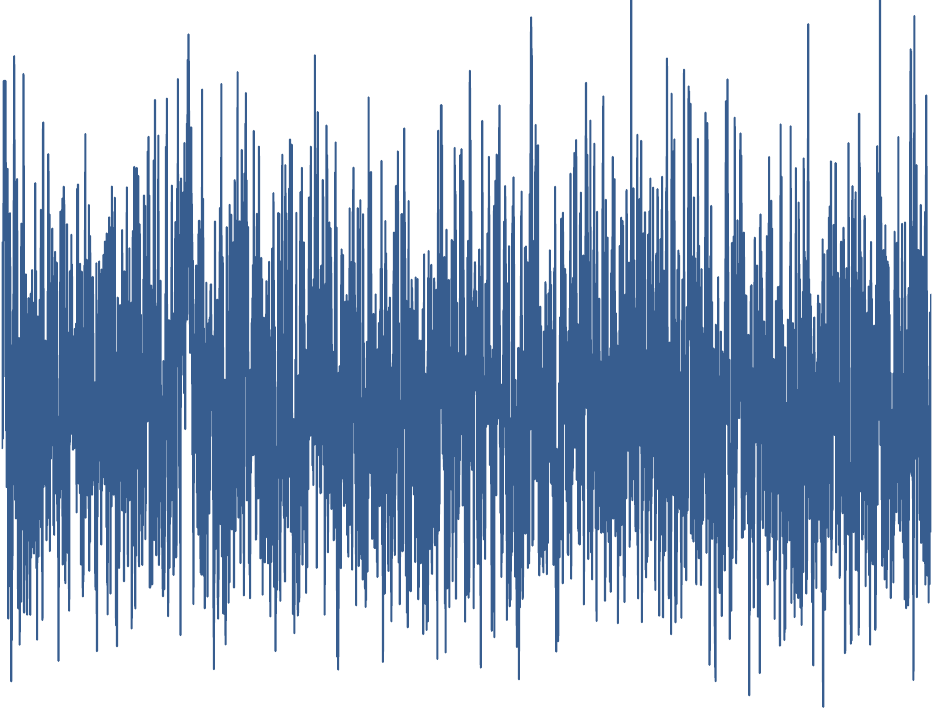};
    
\end{groupplot}
\end{tikzpicture}

\caption{Trace of MCMC chains for the parameter $P(0)$ on the predator-prey problem.  These plots show the $\num{5e4}$ s.pdf occurring just after $\num{2e5}$ burn in steps, for a realization of the long-chain cases. The map-accelerated approaches show significantly better mixing.}
\label{fig:predPreyTrace}
\end{figure}

\begin{table}
\caption[Predator-prey MCMC performance comparison]{\label{tab:predPreyPerf} Performance of MCMC samplers on the predator-prey parameter inference problem. Column headings are as described in Table~\ref{tab:bodPerf1}. The ``long'' results use a chain of $5\times10^5$ total s.pdf, while the ``short'' results use chains of length $\num{5e4}$.  The long and short chains were generated on different platforms, so the timing results should not be compared directly.  Also, because the chains are different lengths, the raw ESS values should also not be compared directly.  The relative results are normalized by DRAM-Short values for short chains and by DRAM-Long values for long chains.}
\centering
\small
\fbox{\begin{tabular}{lc|rr|rrr|D{.}{.}{4}D{.}{.}{2}}
Method & \begin{minipage}{1cm}Chain Length\end{minipage} & $\tau_{\max}$ & $\sigma_\tau$ & ESS & ESS/sec & ESS/eval &  \begin{minipage}{0.75cm}\centering Rel.\\[-0.15cm] ESS/sec\end{minipage} & \begin{minipage}{0.75cm}\centering Rel.\\[-0.15cm] ESS/eval\end{minipage} \\\hline
\multirow{ 2}{*}{DRAM} & 5e4 & 4131.5 & 2613.7 & 8 & 7.0e-06 & 1.7e-04 & 1.0 & 1.0\\\cline{2-9}
 & 5e5 &6673.8 & 5950.5 & 22 & 1.6e-05 & 2.7e-05 &  1.0 &  1.0 \\\hline
\multirow{ 2}{*}{sMMALA} &  5e4 & 1913.4 & 521.8 & 18 & 2.9e-06 & 3.7e-04 & 0.43 & 2.2\\\cline{2-9}
 & 5e5 &6365.7 & 3508.8 & 23 & 4.5e-06 & 2.2e-05  & 0.28 & 0.81\\\hline
\multirow{ 2}{*}{AMALA} &  5e4 & 1244.6 & 858.3 & 26 & 3.9e-06 & 5.4e-04   & 0.56 & 3.2\\\cline{2-9}
 & 5e5 &4323.8 & 3611.8 & 34 & 5.9e-06 & 2.6e-05 & 0.37 & 0.95\\\hline\hline
\multirow{ 2}{*}{TM+DRG} &  5e4 & 27.3 & 26.3 & 1280 & 1.4e-04 & 2.6e-02 & 20 & 150\\\cline{2-9}
& 5e5 &18.0 & 19.5 & 8344 & 9.3e-04 & 1.2e-02 & 59 & 420\\\hline
\multirow{ 2}{*}{TM+DRL} &  5e4 & 32.8 & 16.7 & 1067 & 1.2e-04 & 2.1e-02 & 17 & 130\\\cline{2-9}
& 5e5 &24.7 &  7.5 & 6081 & 6.7e-04 & 8.3e-03 & 42 & 300\\\hline
\multirow{ 2}{*}{TM+RWM} &  5e4 & 42.9 & 21.3 & 790 & 9.2e-05 & 1.6e-02 & 13 & 93\\\cline{2-9}
& 5e5 &32.7 & 15.6 & 4585 & 5.4e-04 & 1.1e-02 & 34 & 390\\\hline
\end{tabular}}

\end{table}

As in the BOD example, map-accelerated algorithms using independence proposals have dramatically shorter integrated autocorrelation times. For the longer chains, TM+DRG yields an ESS about 380 times larger than that of DRAM. Moreover, in terms of ESS per posterior evaluation, TM+DRG is 420 times more efficient than DRAM.
We also observe good agreement between the longer-chain and shorter-chain results; trends are the same in both cases.
Overall, the gradient-based methods showed relatively poor performance. sMMALA in particular suffers from nearly singular metrics. We found that tuning the step size in sMMALA was difficult.  On the other hand, the derivative-free methods were easier to tune and had much better performance.  Even when normalized by run time, the ESS/sec of TM+DRG is still more than one order of magnitude larger than that of DRAM. While posterior evaluations in this example are not trivially cheap, the ESS/evaluation represents the limiting behavior of the algorithm as evaluations become the dominant cost of an MCMC step; here we see improvements of at least two orders of magnitude over the baseline schemes.  
 
Results for the longer chains are generally more favorable for the transport map approaches.  We believe this is caused by two factors: first, the burn-in is smaller in relative terms for the longer chains, which reduces wasted computational effort; second, the adaptive proposals have more time to accurately characterize the posterior. In longer trace plots, we observe that the adaptation is negligible after approximately $\num{1e5}$ s.pdf, which suggests that the different burn-in lengths dominate the difference between long and short chains. 



\subsection{Maple sap exudation}\label{sec:perf:maple}
This section presents an inference problem based on the system of differential-algebraic equations introduced in \cite{Ceseri2013} to describe microscale sap dynamics in a maple tree during spring freeze-thaw cycles. The posterior in this 10-dimensional problem is particularly challenging to explore, and helps illustrate aspects of the map adaptation process. The nonlinear forward model has three state variables describing the positions of gas, liquid, and ice interfaces ($s_{gi}(t)$, $s_{iw}(t)$, and $r(t)$) as well as a state variable $U(t)$ representing the volume of melted ice.  These variables are related via the following differential-algebraic equations:
\begin{align}
2\rho_is_{gi}(t)\dot{s}_{gi}(t) &= \frac{\rho_w}{\pi L^f}\dot{U}(t) - 2(\rho_w-\rho_i)s_{iw}(t)\dot{s}_{iw}(t) \label{eq:MapleDae1}\\
\lambda \rho_w\dot{s}_{iw}(t) &= -\kappa(x) \partial_x T(x,t) \quad \text{at} \quad x=s_{iw}(t) \label{eq:MapleDae2}\\
N\dot{U}(t) &= - \frac{KA}{\rho_w g W}\left[p_w^v(t)-p_g^f(t) - RT(R^f,t)c_s^v\right] \label{eq:MapleDae3}\\
r(t)\dot{r}(t) &= - \frac{N\dot{U}(t)}{2\pi L^v}.\label{eq:MapleDae4}
\end{align} 
In addition to the state equations, the model is closed with five algebraic relations:
\begin{align}
p_g^f(t) &=  p_g^f(0)\left(\frac{s_{gi}(0)}{s_{gi}(t)}\right)^2 & & \label{eq:MapleAlg1}\\
p_w^v(t) &= p_g^v(x,t) + \frac{\sigma}{r(t)} &  \text{at } & x=R^f+R^v-r \label{eq:MapleAlg2} \\
p_g^v(t) &= \frac{\rho_g^v(x,t) R T_g^v(x,t)}{M_g}& \text{at } & x=R^f+R^v-r \label{eq:MapleAlg3}\\
 c_g^v(t) &= \frac{H}{M_g} \rho_g^v(x,t) & \text{at } & x=R^f+R^v-r \label{eq:MapleAlg4} \\
 \rho_g^v(x,t) &= \frac{\rho_g^v(x,0)V_g^v(0) - \left.M_g c_g^v(\tilde{t})\left(V^v-V_g^v(\tilde{t})\right)\right|_{\tilde{t}=0}^t}{V_g^v(t)} & \text{at } & x=R^f+R^v-r.\label{eq:MapleAlg5}
\end{align}
In this system, $T(x,t)$ is a transient temperature field, $[\rho_i, \rho_w, \lambda, R, g, \sigma, H, M_g]$ are physical constants, and the parameters $[V^v, V_g^v, N,K,A,W,L^f,L^v,c_s^v]$ are inference targets.  The initial conditions $s_{gi}(0)$, $s_{iw}(0)$, and $r(0)$ are also inference targets.  For additional details on the model and its parameters, see Appendix \ref{sec:maplemodel}.  

We describe the model parameters with a random variable $\trv$ taking values in $\real^{10}$.  As detailed in Appendix \ref{sec:maplemodel}, the components of $\trv$ are scaled and combined to obtain the model parameters and initial conditions.  We choose $\trv$ so that each component of the prior $\td(\trv)$ is independent.  In particular, the prior is given by $\trv_{1:3} \sim U[-1,1]^3$ and $\trv_{4:10} \sim N(0,I)$.  Noisy observations of $p_w^v(t)$ at 100 times equally spaced over $t\in[0,1209600]$ are combined with an independent additive Gaussian error model to define the likelihood function $\td(\trv|d)$.  The additive errors are identically distributed with zero mean and a standard deviation of $1000$ pascals.  

The posterior distribution is illustrated in Figure \ref{fig:maplePost} and the performance of several algorithms is summarized in Table \ref{tab:maplePerf}.  We restricted this study to derivative-free MCMC samplers, due to the complexity of computing derivative information with the maple forward model.  To obtain our performance results, we ran MCMC chains of $\num{2e5}$ s.pdf each, discarding the first $\num{1e5}$ samples as burn-in. As before, the ESS values reported in Table \ref{tab:maplePerf} represent a minimum over all ten components of each chain, calculated after burn-in. Yet the number of evaluations and the run times reported in the table reflect the cost of {all} $\num{2e5}$ steps, \textit{including} burn in. Hence these are conservative numbers that include the computational effort required for adaptation. 50 repetitions of each sampler were used to obtain these performance evaluations. We again used cubic total-degree polynomial maps.

\begin{figure}[h!]
\centering

\tabskip=0pt
\valign{#\cr
  \hbox{%
    \begin{subfigure}[b]{0.7\textwidth}
    \centering

\newcommand{\groupPlotWidth}{0.3000\textwidth}
\begin{tikzpicture}[scale=0.5]
\begin{groupplot}[group style={every plot/.style={enlargelimits=true},group size=10 by 10,xlabels at=edge bottom,ylabels at=edge left, xticklabels at=edge bottom,yticklabels at=edge left,vertical sep=1pt,horizontal sep=1pt},height=\groupPlotWidth, width=\groupPlotWidth,ticks=none,enlargelimits=false]

\nextgroupplot[title={\huge$\theta_{1}$}]
\addplot graphics[xmin=-0.825,ymin=0.584,xmax=0.000,ymax=1.451] {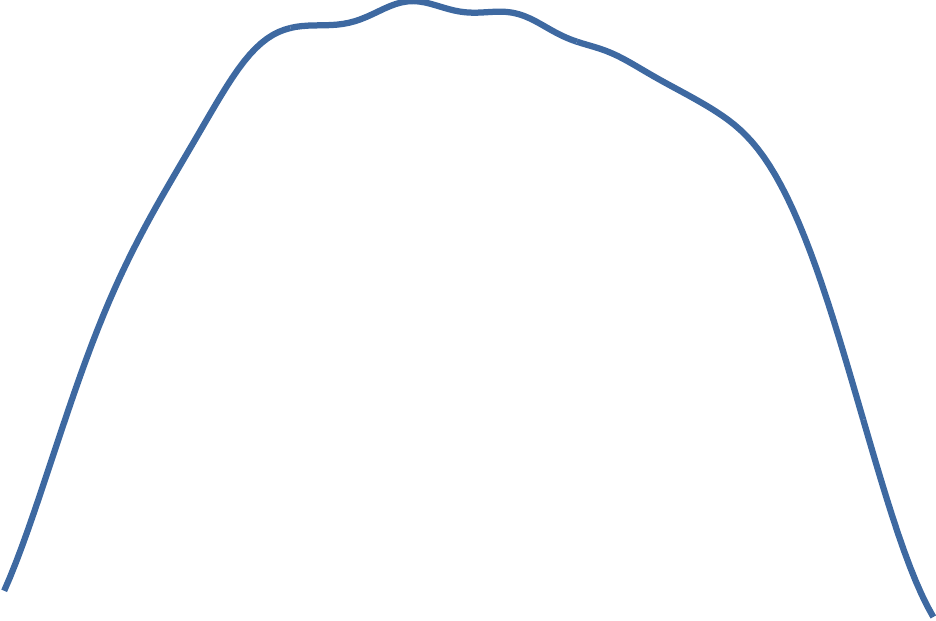};

\nextgroupplot[group/empty plot]
\nextgroupplot[group/empty plot]
\nextgroupplot[group/empty plot]
\nextgroupplot[group/empty plot]
\nextgroupplot[group/empty plot]
\nextgroupplot[group/empty plot]
\nextgroupplot[group/empty plot]
\nextgroupplot[group/empty plot]
\nextgroupplot[group/empty plot]

\nextgroupplot
\addplot graphics[xmin=-0.825,ymin=0.584,xmax=-0.995,ymax=0.997] {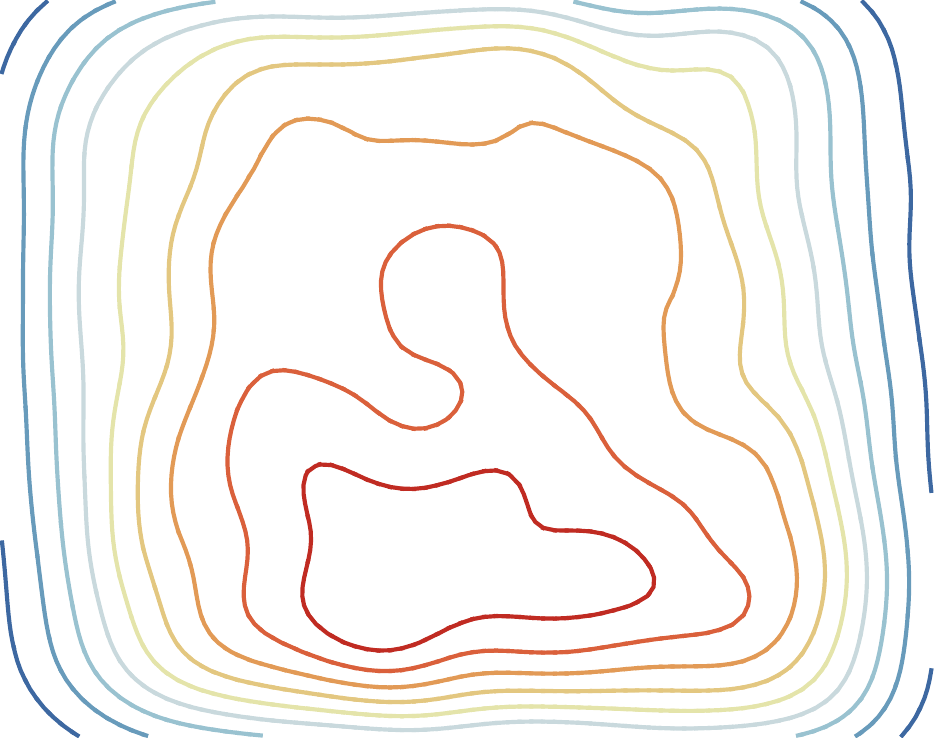};
\nextgroupplot[title={\huge$\theta_{2}$}]
\addplot graphics[xmin=-0.995,ymin=0.997,xmax=0.000,ymax=0.920] {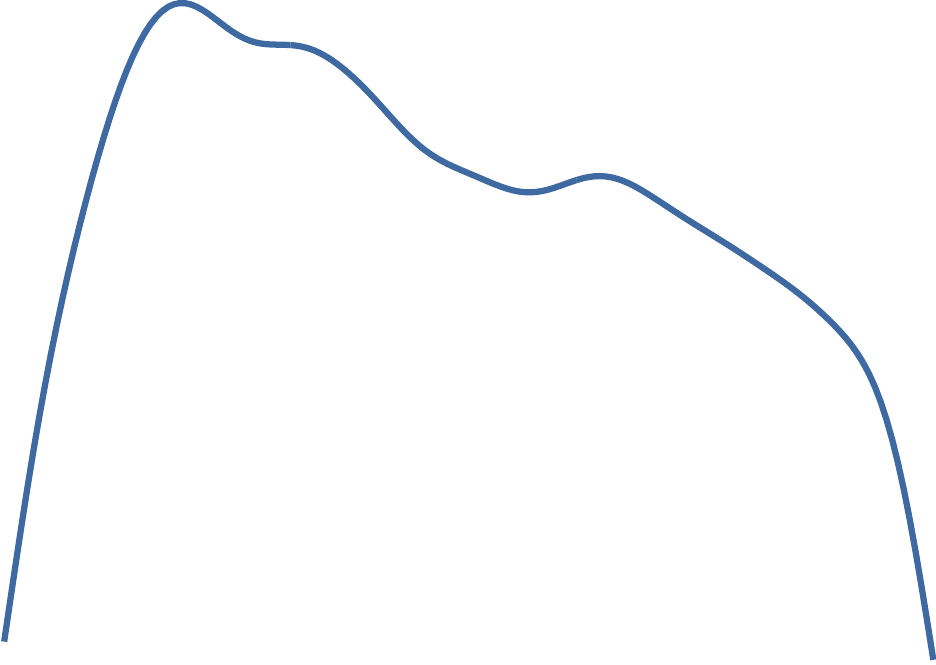};

\nextgroupplot[group/empty plot]
\nextgroupplot[group/empty plot]
\nextgroupplot[group/empty plot]
\nextgroupplot[group/empty plot]
\nextgroupplot[group/empty plot]
\nextgroupplot[group/empty plot]
\nextgroupplot[group/empty plot]
\nextgroupplot[group/empty plot]

\nextgroupplot
\addplot graphics[xmin=-0.825,ymin=0.584,xmax=-0.940,ymax=0.996] {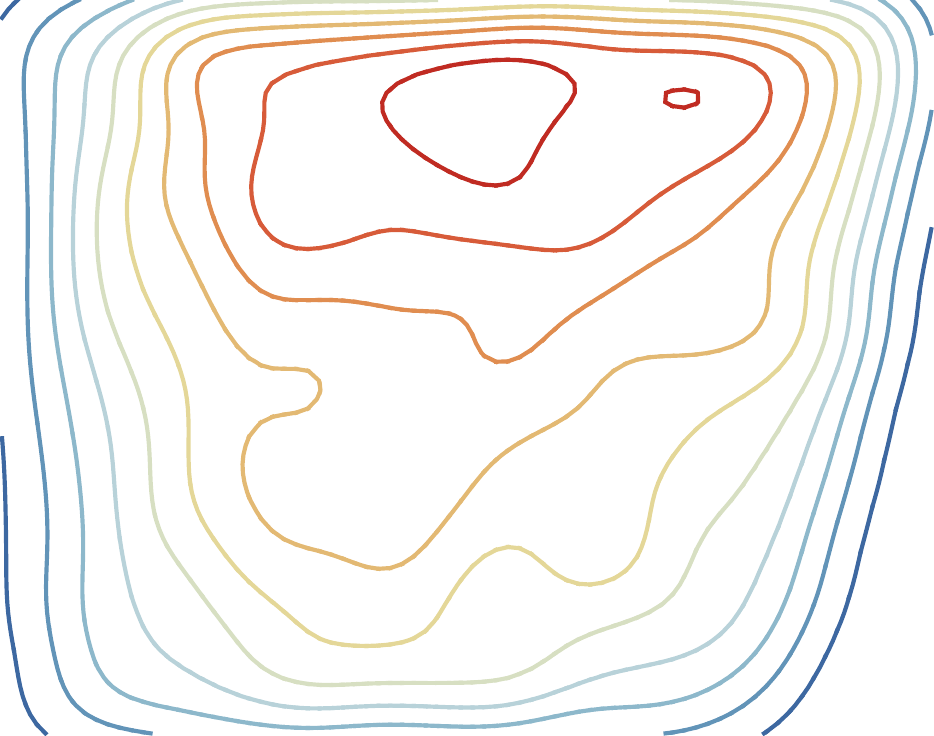};
\nextgroupplot
\addplot graphics[xmin=-0.995,ymin=0.997,xmax=-0.940,ymax=0.996] {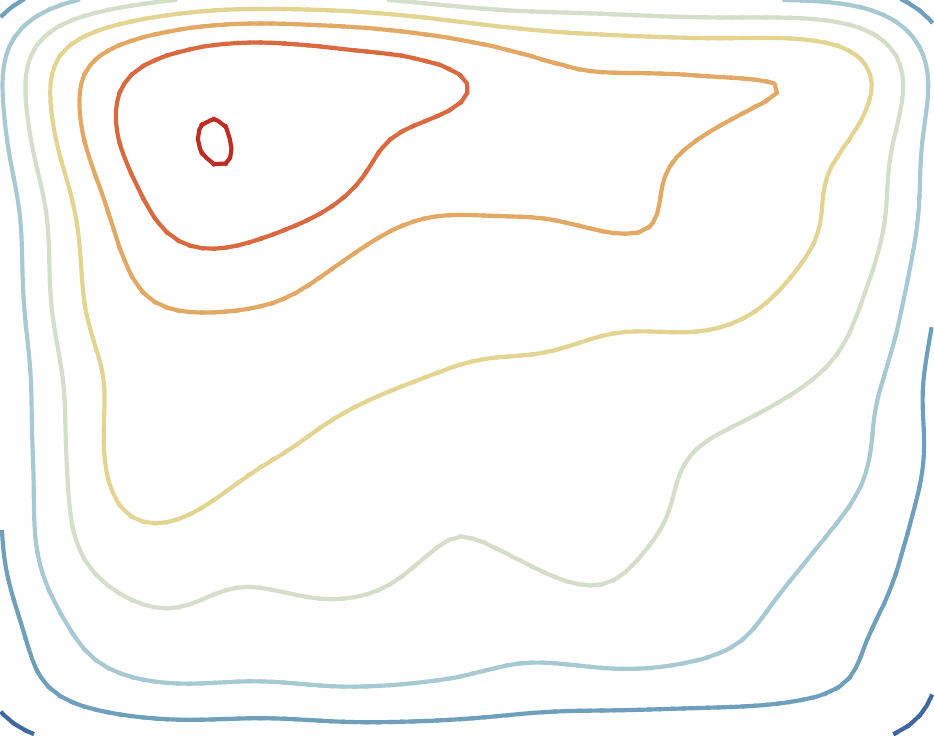};
\nextgroupplot[title={\huge$\theta_{3}$}]
\addplot graphics[xmin=-0.940,ymin=0.996,xmax=0.000,ymax=1.077] {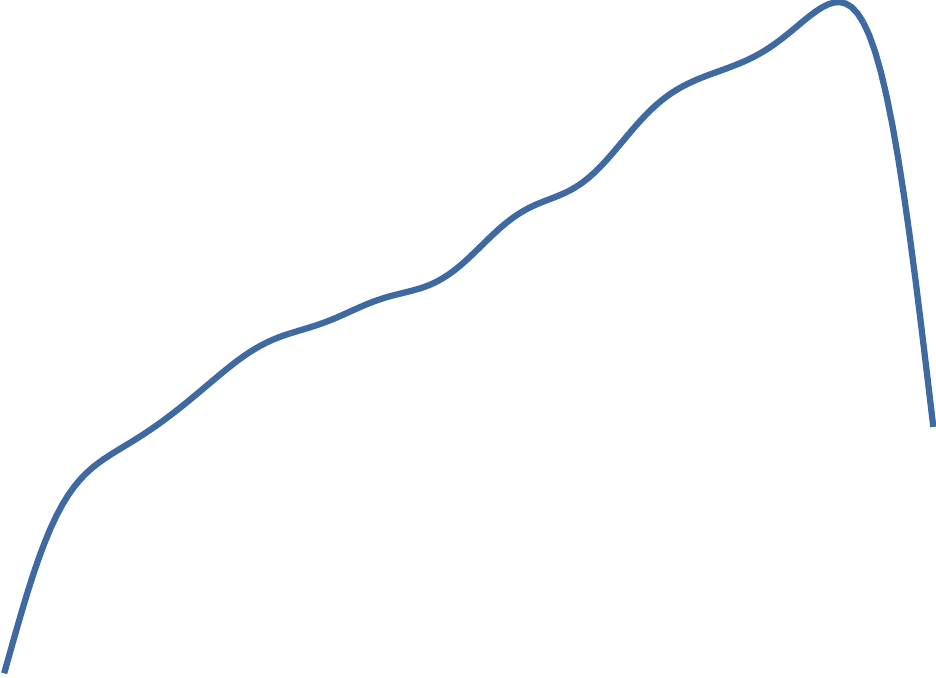};

\nextgroupplot[group/empty plot]
\nextgroupplot[group/empty plot]
\nextgroupplot[group/empty plot]
\nextgroupplot[group/empty plot]
\nextgroupplot[group/empty plot]
\nextgroupplot[group/empty plot]
\nextgroupplot[group/empty plot]

\nextgroupplot
\addplot graphics[xmin=-0.825,ymin=0.584,xmax=-1.089,ymax=1.165] {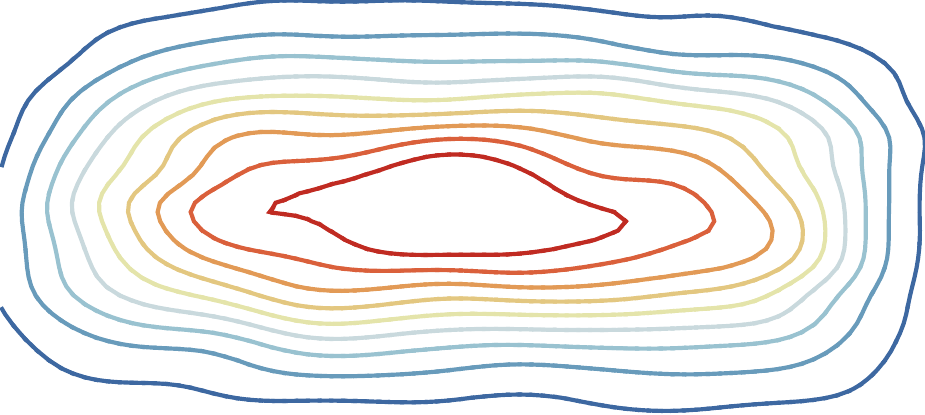};
\nextgroupplot
\addplot graphics[xmin=-0.995,ymin=0.997,xmax=-1.089,ymax=1.165] {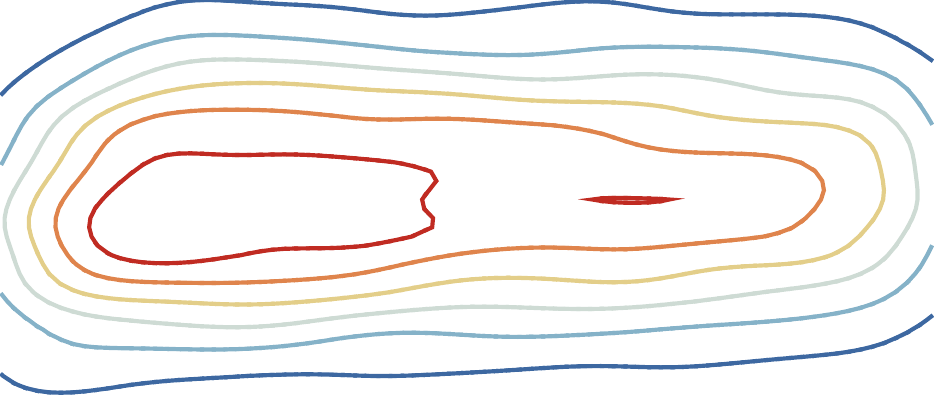};
\nextgroupplot
\addplot graphics[xmin=-0.940,ymin=0.996,xmax=-1.089,ymax=1.165] {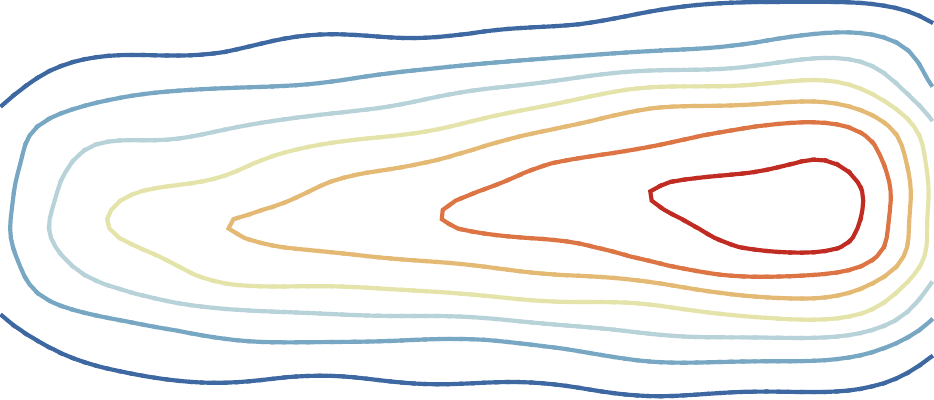};
\nextgroupplot[title={\huge$\theta_{4}$}]
\addplot graphics[xmin=-1.089,ymin=1.165,xmax=0.000,ymax=1.132] {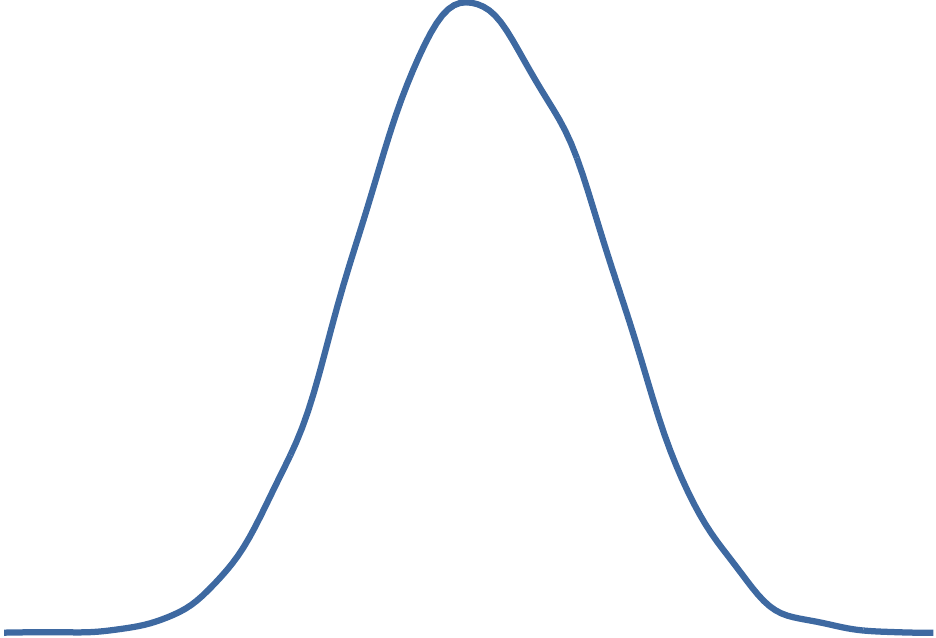};

\nextgroupplot[group/empty plot]
\nextgroupplot[group/empty plot]
\nextgroupplot[group/empty plot]
\nextgroupplot[group/empty plot]
\nextgroupplot[group/empty plot]
\nextgroupplot[group/empty plot]

\nextgroupplot
\addplot graphics[xmin=-0.825,ymin=0.584,xmax=-1.972,ymax=1.994] {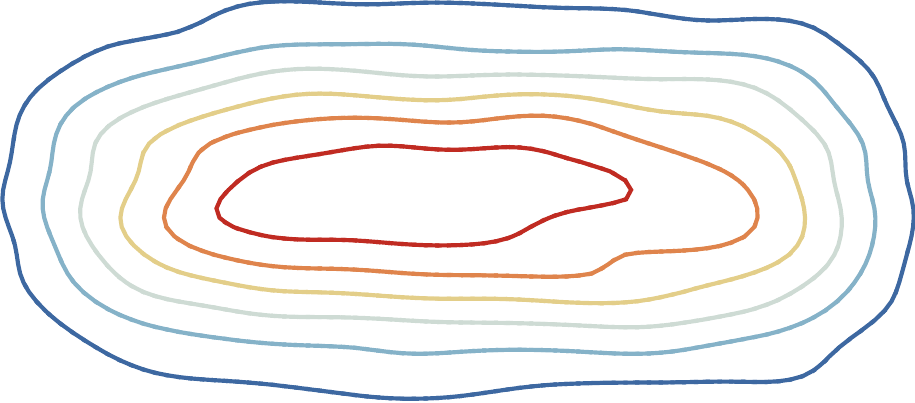};
\nextgroupplot
\addplot graphics[xmin=-0.995,ymin=0.997,xmax=-1.972,ymax=1.994] {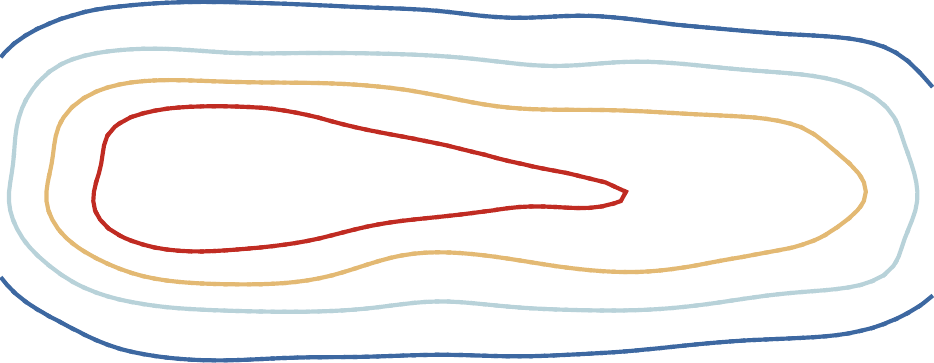};
\nextgroupplot
\addplot graphics[xmin=-0.940,ymin=0.996,xmax=-1.972,ymax=1.994] {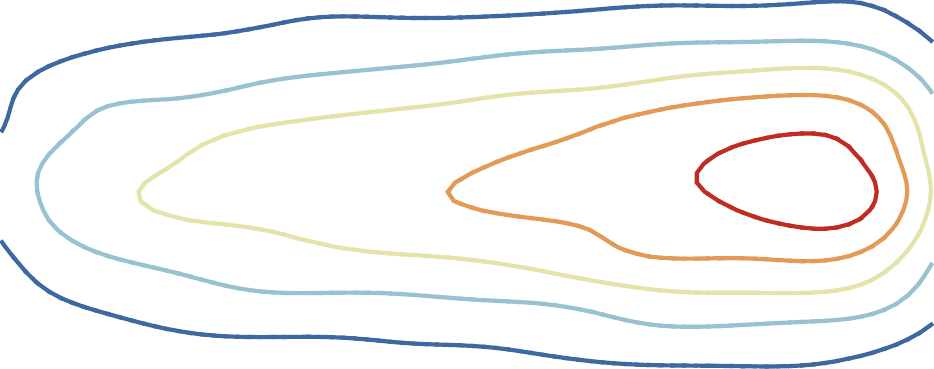};
\nextgroupplot
\addplot graphics[xmin=-1.089,ymin=1.165,xmax=-1.972,ymax=1.994] {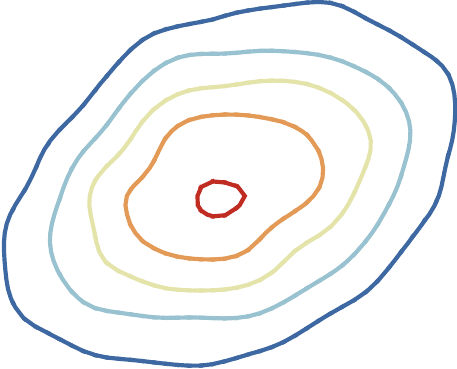};
\nextgroupplot[title={\huge$\theta_{5}$}]
\addplot graphics[xmin=-1.972,ymin=1.994,xmax=0.000,ymax=0.834] {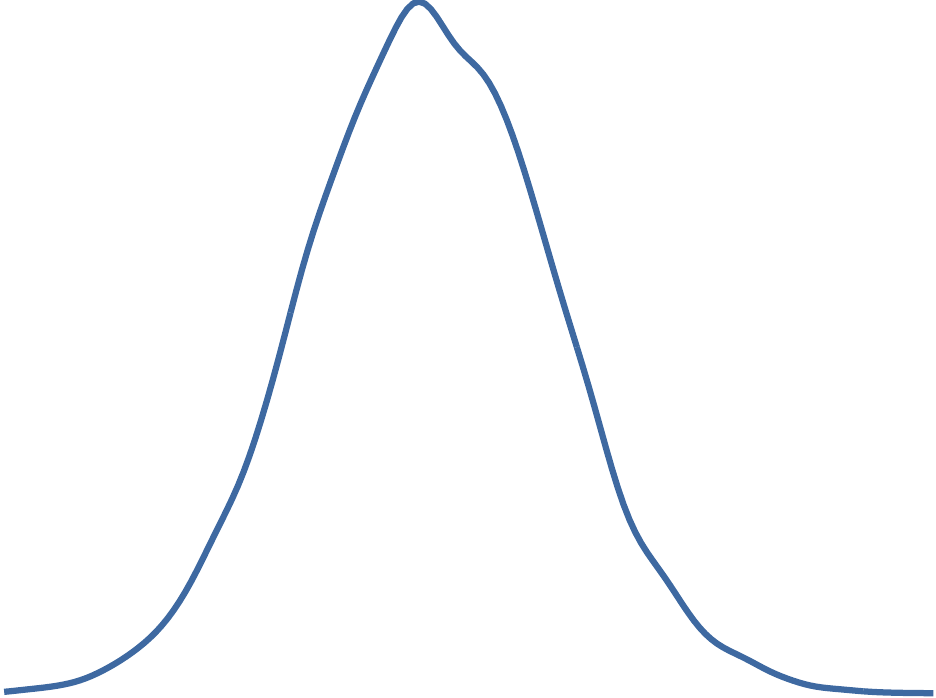};

\nextgroupplot[group/empty plot]
\nextgroupplot[group/empty plot]
\nextgroupplot[group/empty plot]
\nextgroupplot[group/empty plot]
\nextgroupplot[group/empty plot]

\nextgroupplot
\addplot graphics[xmin=-0.825,ymin=0.584,xmax=-0.585,ymax=0.378] {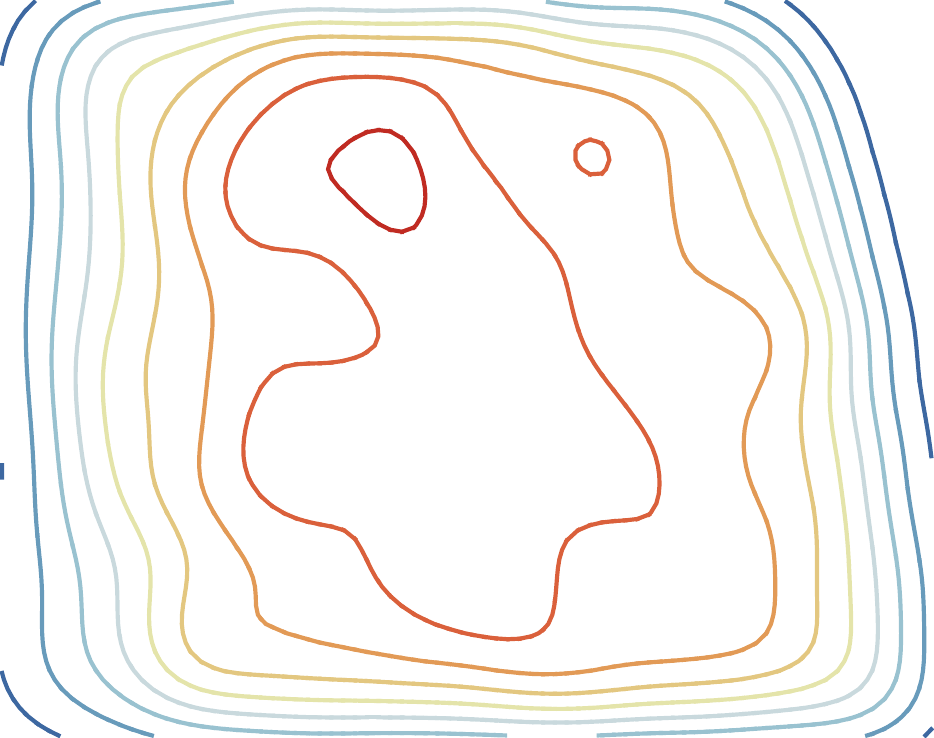};
\nextgroupplot
\addplot graphics[xmin=-0.995,ymin=0.997,xmax=-0.585,ymax=0.378] {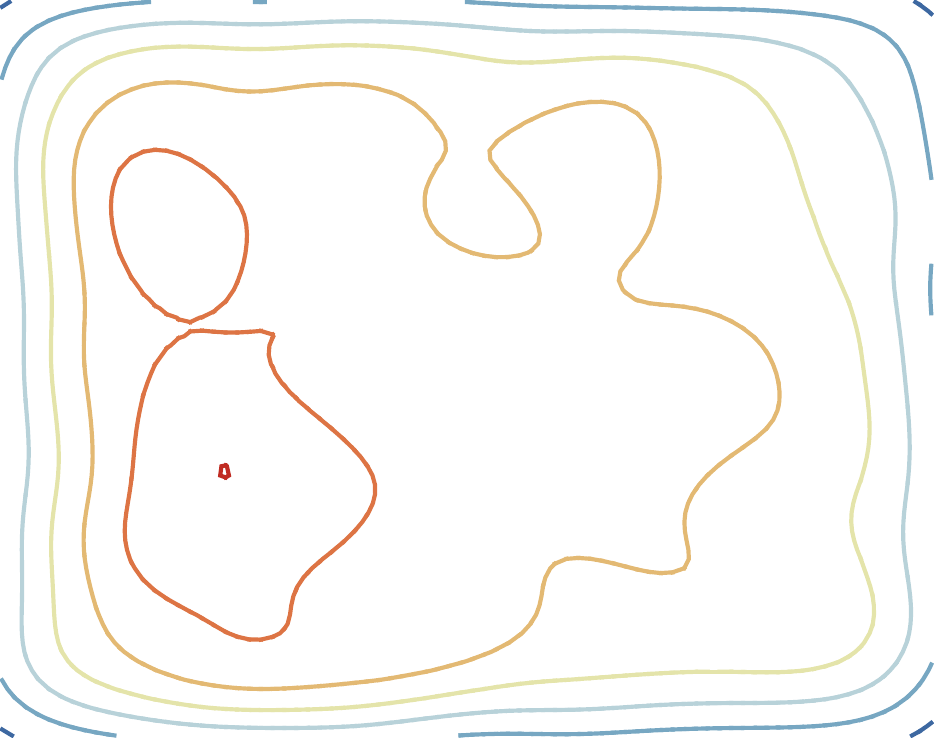};
\nextgroupplot
\addplot graphics[xmin=-0.940,ymin=0.996,xmax=-0.585,ymax=0.378] {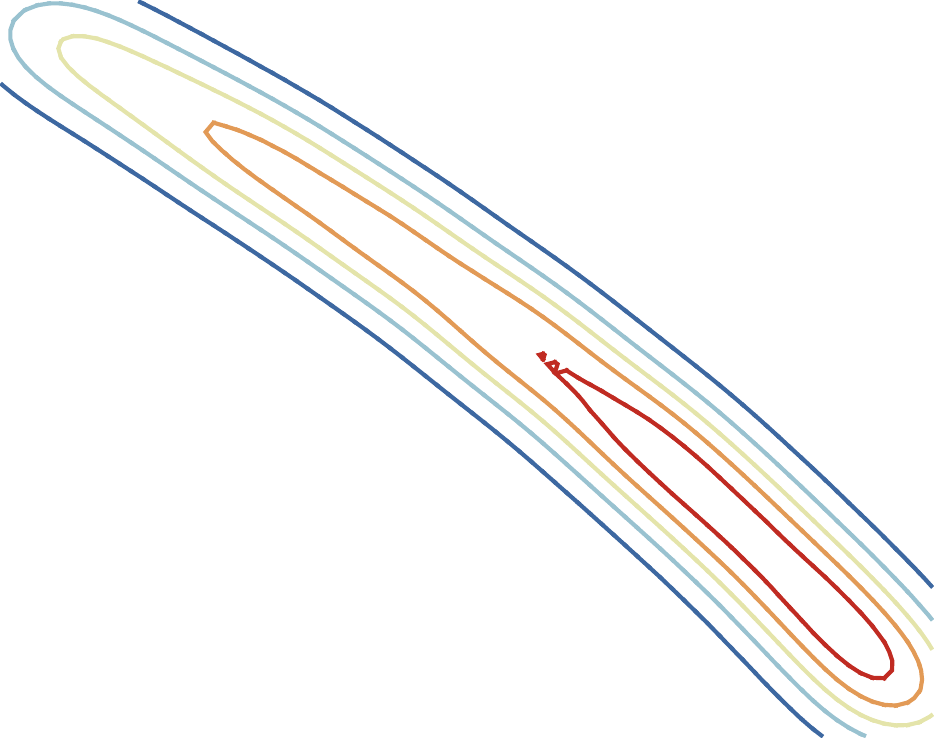};
\nextgroupplot
\addplot graphics[xmin=-1.089,ymin=1.165,xmax=-0.585,ymax=0.378] {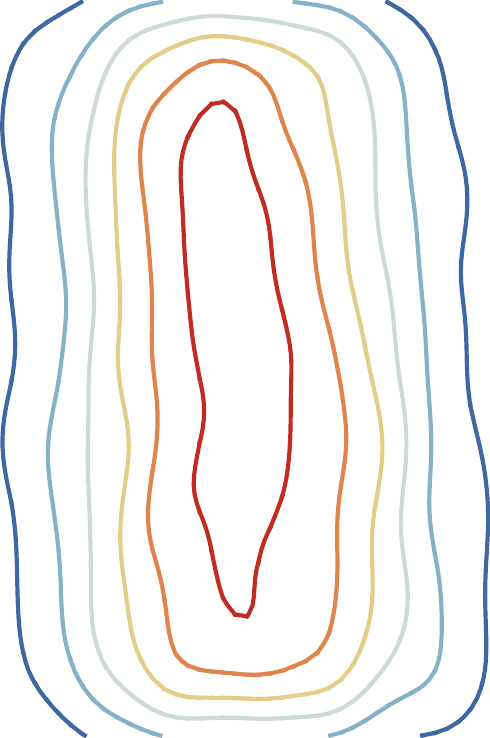};
\nextgroupplot
\addplot graphics[xmin=-1.972,ymin=1.994,xmax=-0.585,ymax=0.378] {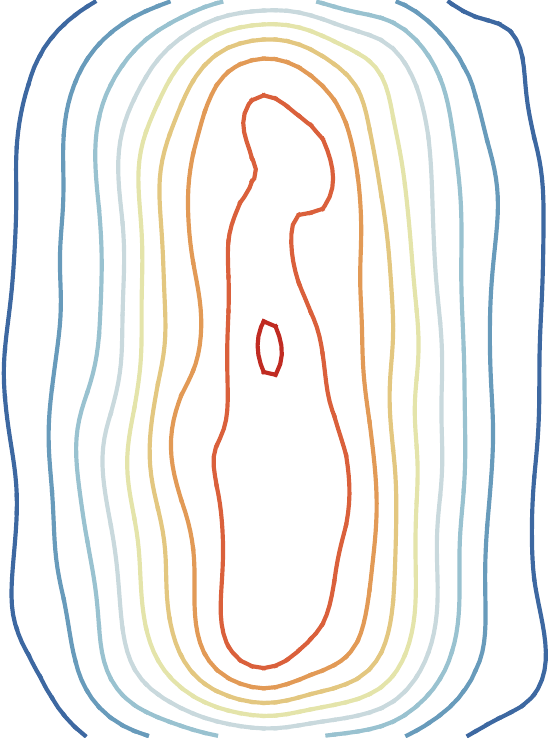};
\nextgroupplot[title={\huge$\theta_{6}$}]
\addplot graphics[xmin=-0.585,ymin=0.378,xmax=0.000,ymax=2.168] {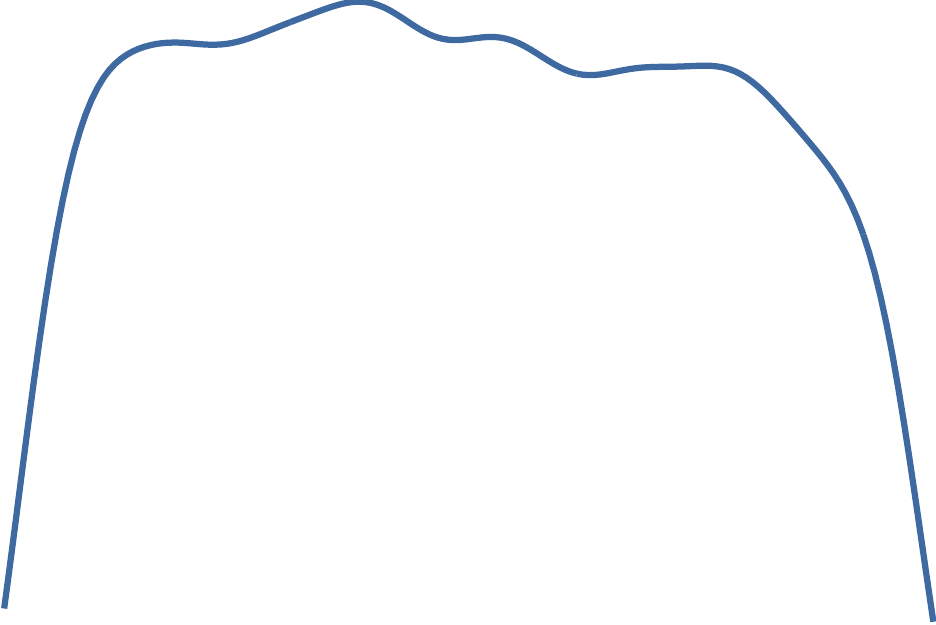};

\nextgroupplot[group/empty plot]
\nextgroupplot[group/empty plot]
\nextgroupplot[group/empty plot]
\nextgroupplot[group/empty plot]

\nextgroupplot
\addplot graphics[xmin=-0.825,ymin=0.584,xmax=-0.421,ymax=1.011] {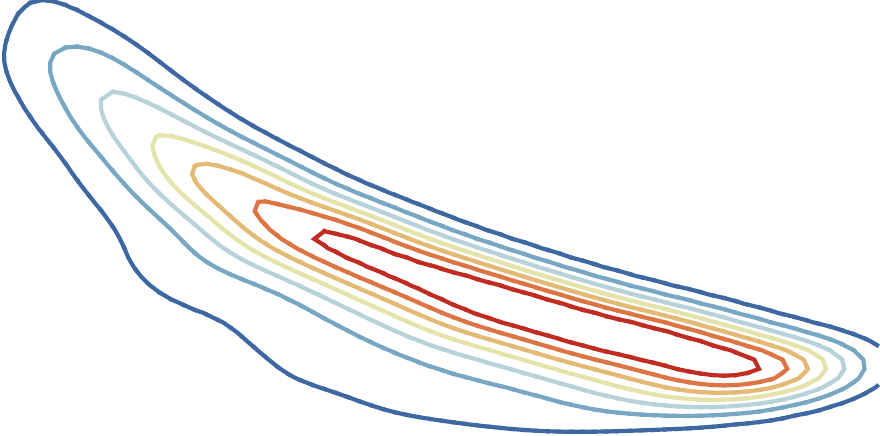};
\nextgroupplot
\addplot graphics[xmin=-0.995,ymin=0.997,xmax=-0.421,ymax=1.011] {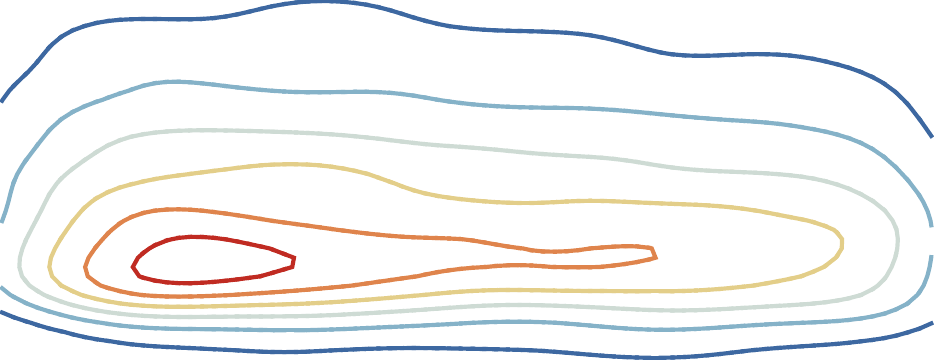};
\nextgroupplot
\addplot graphics[xmin=-0.940,ymin=0.996,xmax=-0.421,ymax=1.011] {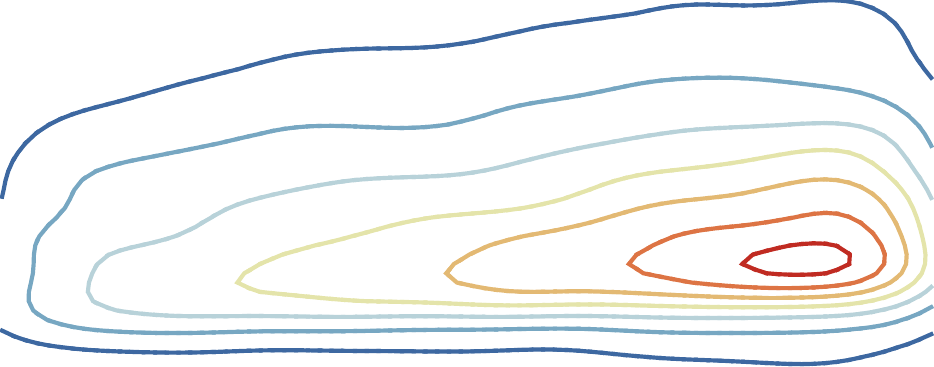};
\nextgroupplot
\addplot graphics[xmin=-1.089,ymin=1.165,xmax=-0.421,ymax=1.011] {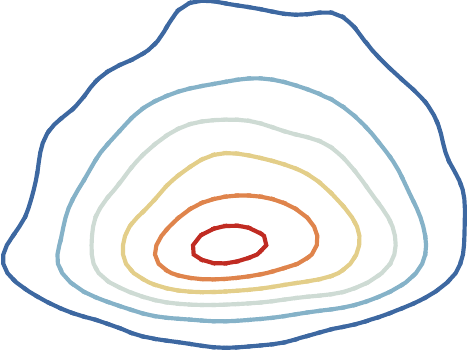};
\nextgroupplot
\addplot graphics[xmin=-1.972,ymin=1.994,xmax=-0.421,ymax=1.011] {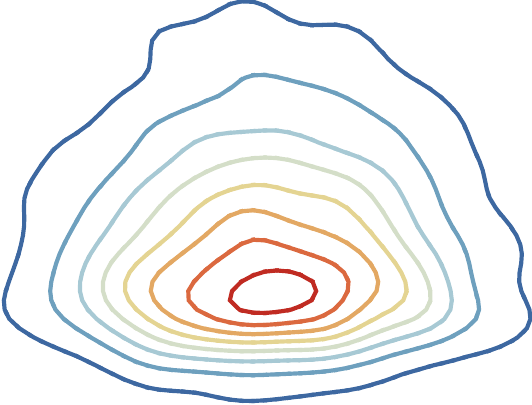};
\nextgroupplot
\addplot graphics[xmin=-0.585,ymin=0.378,xmax=-0.421,ymax=1.011] {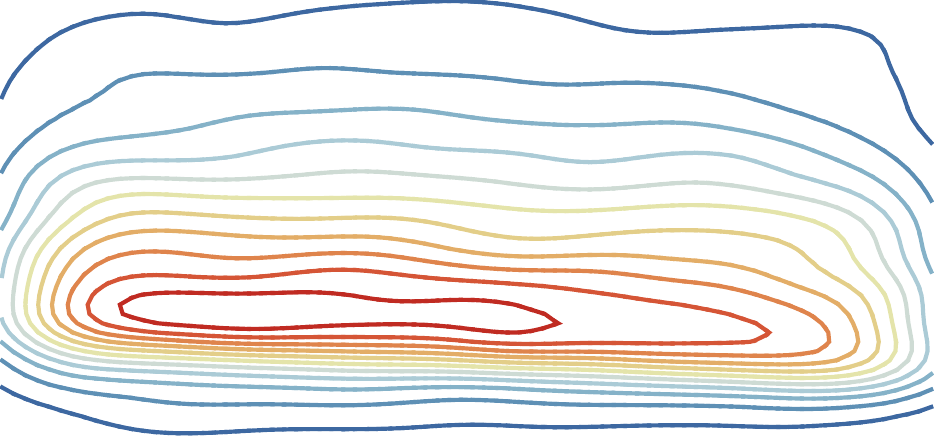};
\nextgroupplot[title={\huge$\theta_{7}$}]
\addplot graphics[xmin=-0.421,ymin=1.011,xmax=0.000,ymax=1.730] {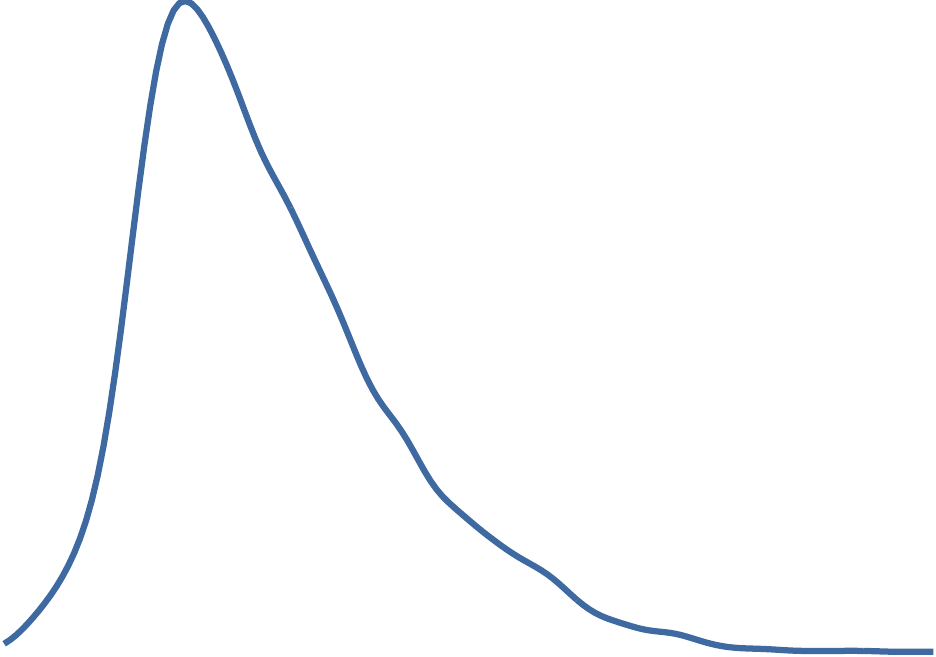};

\nextgroupplot[group/empty plot]
\nextgroupplot[group/empty plot]
\nextgroupplot[group/empty plot]

\nextgroupplot
\addplot graphics[xmin=-0.825,ymin=0.584,xmax=-2.078,ymax=1.456] {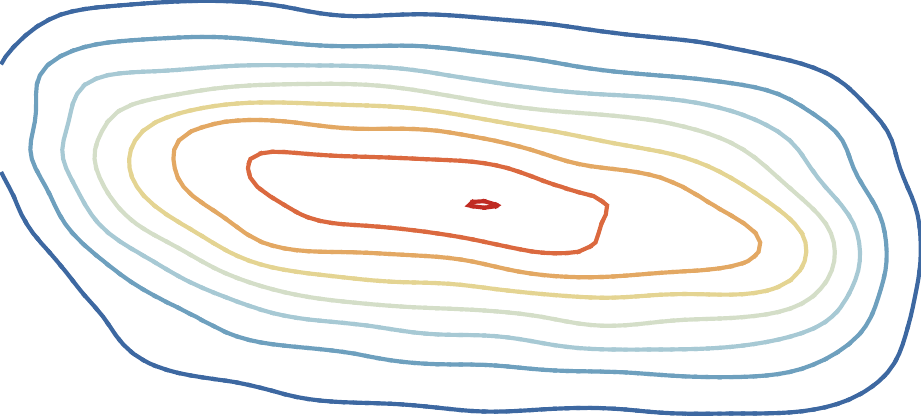};
\nextgroupplot
\addplot graphics[xmin=-0.995,ymin=0.997,xmax=-2.078,ymax=1.456] {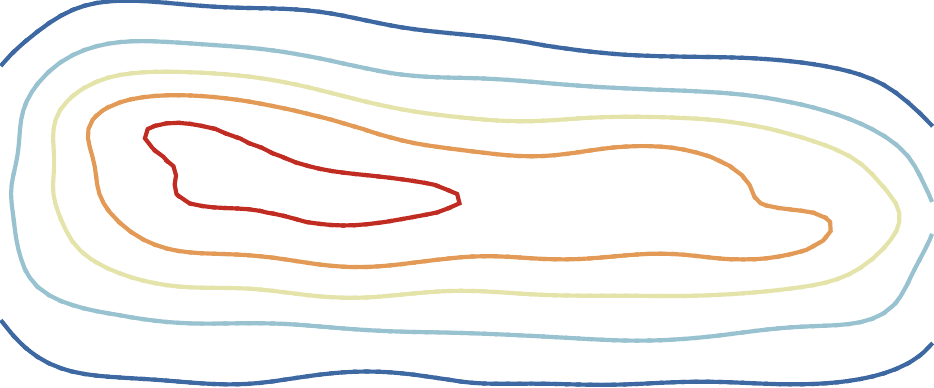};
\nextgroupplot
\addplot graphics[xmin=-0.940,ymin=0.996,xmax=-2.078,ymax=1.456] {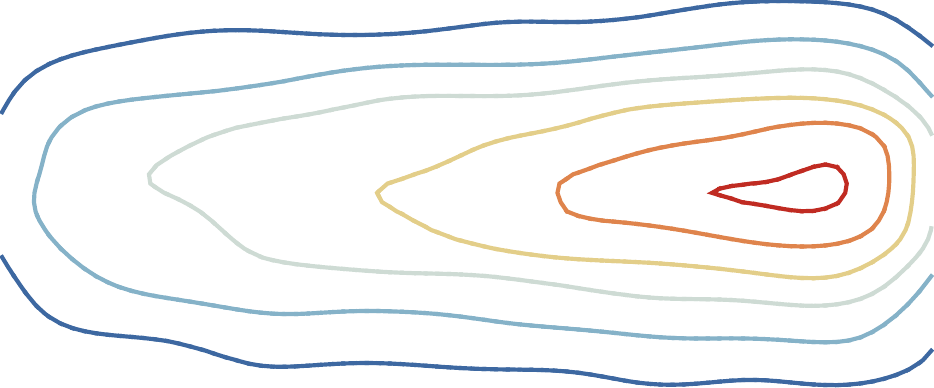};
\nextgroupplot
\addplot graphics[xmin=-1.089,ymin=1.165,xmax=-2.078,ymax=1.456] {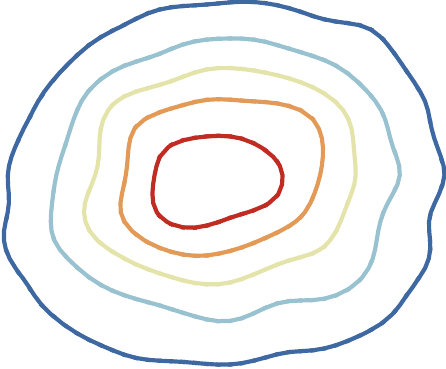};
\nextgroupplot
\addplot graphics[xmin=-1.972,ymin=1.994,xmax=-2.078,ymax=1.456] {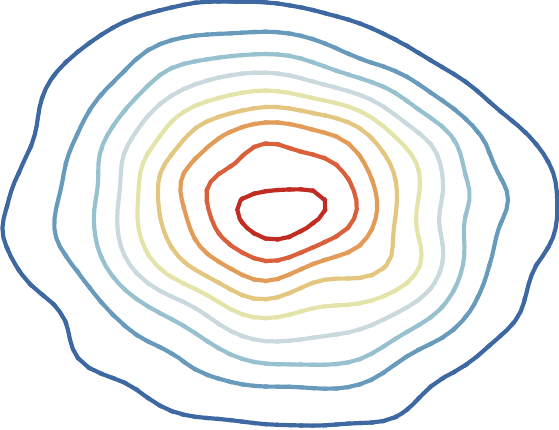};
\nextgroupplot
\addplot graphics[xmin=-0.585,ymin=0.378,xmax=-2.078,ymax=1.456] {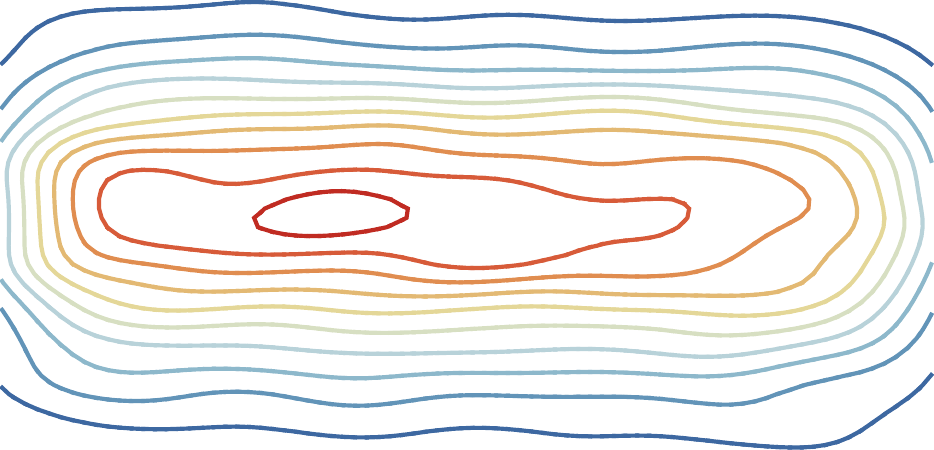};
\nextgroupplot
\addplot graphics[xmin=-0.421,ymin=1.011,xmax=-2.078,ymax=1.456] {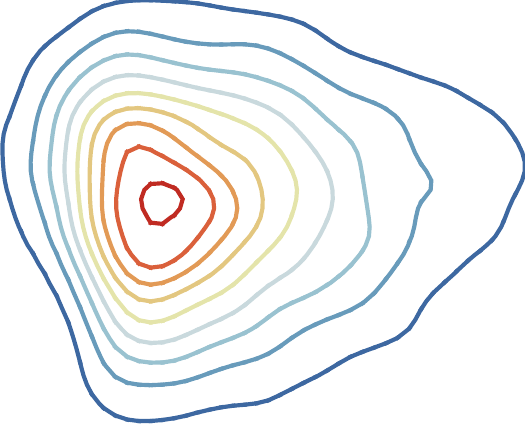};
\nextgroupplot[title={\huge$\theta_{8}$}]
\addplot graphics[xmin=-2.078,ymin=1.456,xmax=0.000,ymax=0.920] {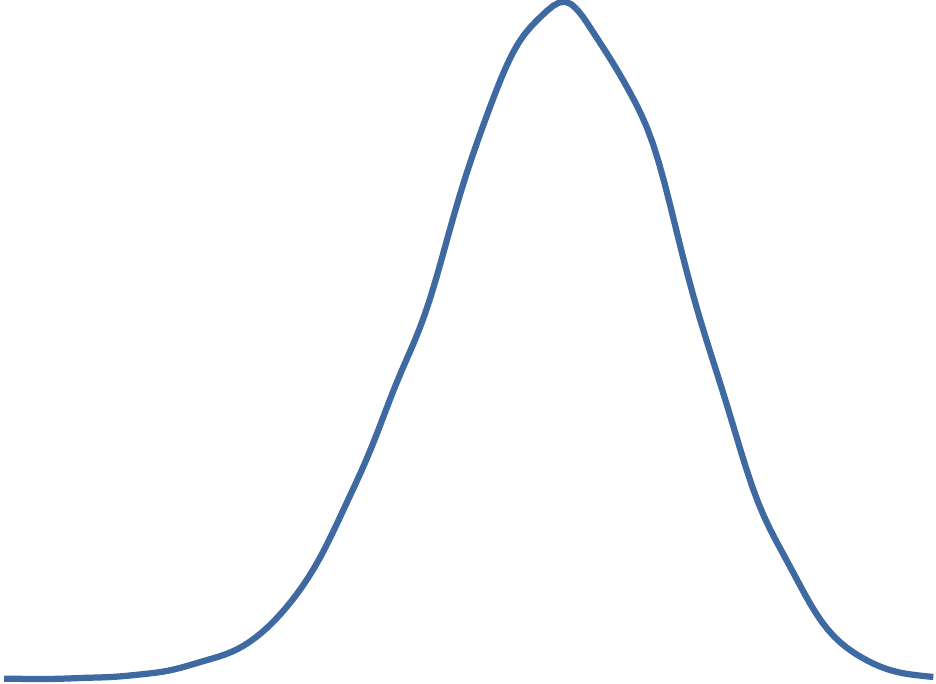};

\nextgroupplot[group/empty plot]
\nextgroupplot[group/empty plot]

\nextgroupplot
\addplot graphics[xmin=-0.825,ymin=0.584,xmax=-2.830,ymax=2.885] {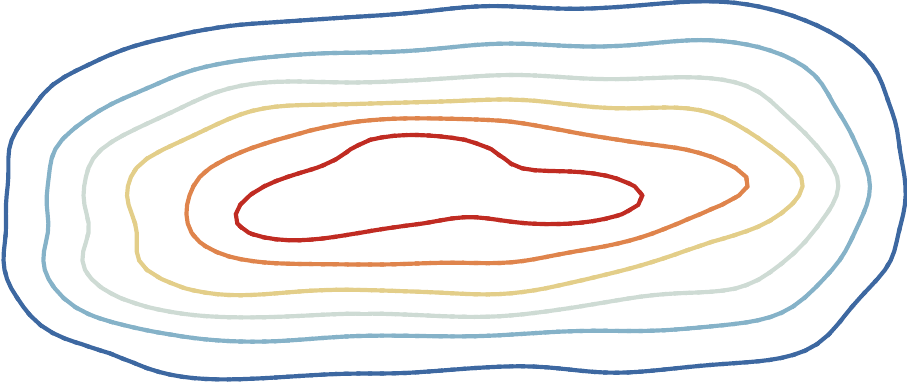};
\nextgroupplot
\addplot graphics[xmin=-0.995,ymin=0.997,xmax=-2.830,ymax=2.885] {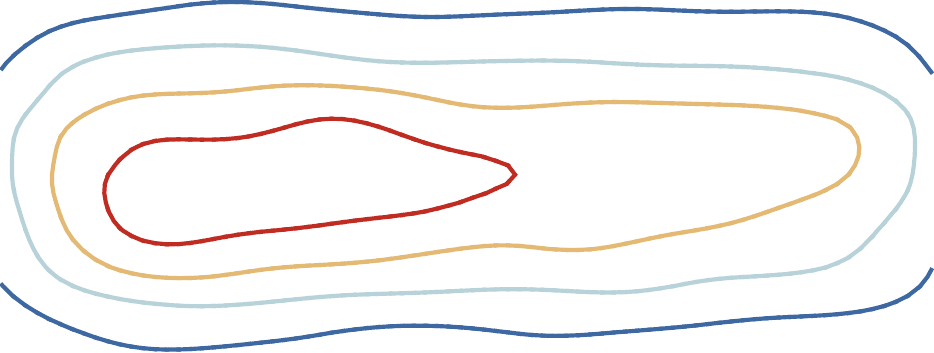};
\nextgroupplot
\addplot graphics[xmin=-0.940,ymin=0.996,xmax=-2.830,ymax=2.885] {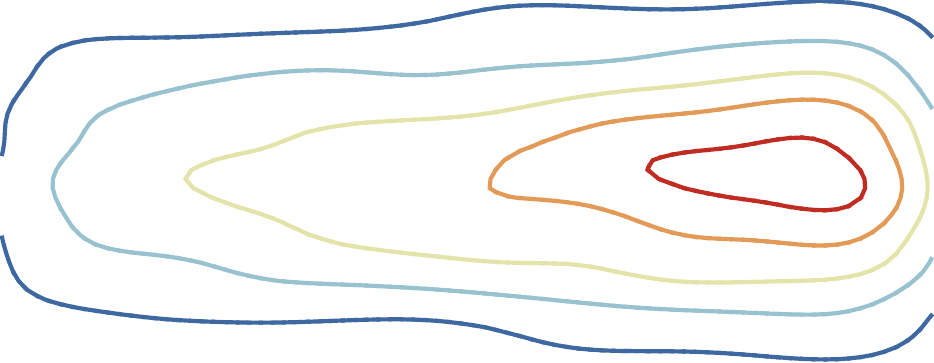};
\nextgroupplot
\addplot graphics[xmin=-1.089,ymin=1.165,xmax=-2.830,ymax=2.885] {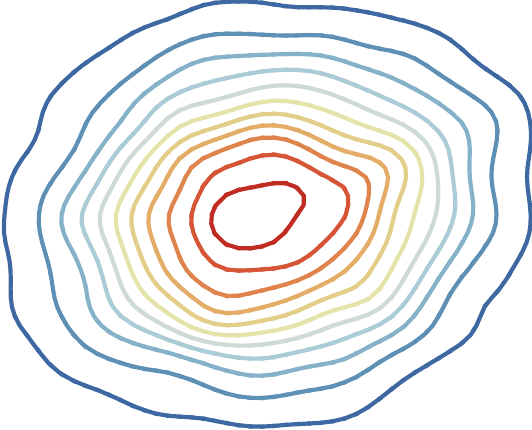};
\nextgroupplot
\addplot graphics[xmin=-1.972,ymin=1.994,xmax=-2.830,ymax=2.885] {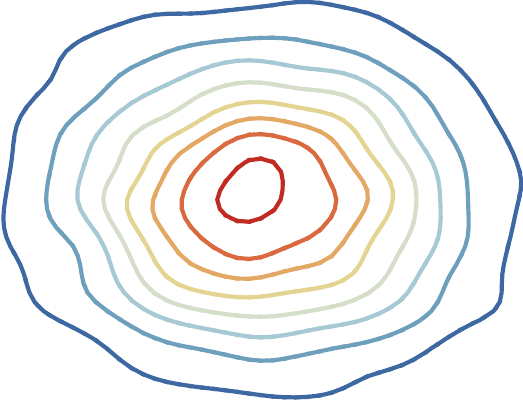};
\nextgroupplot
\addplot graphics[xmin=-0.585,ymin=0.378,xmax=-2.830,ymax=2.885] {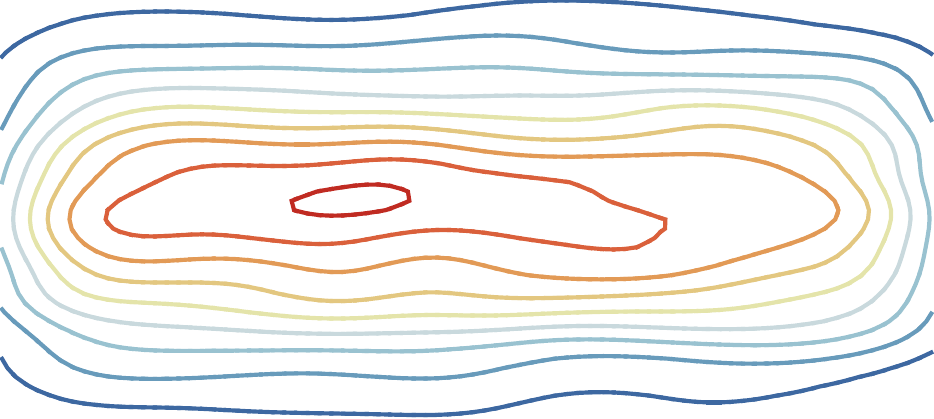};
\nextgroupplot
\addplot graphics[xmin=-0.421,ymin=1.011,xmax=-2.830,ymax=2.885] {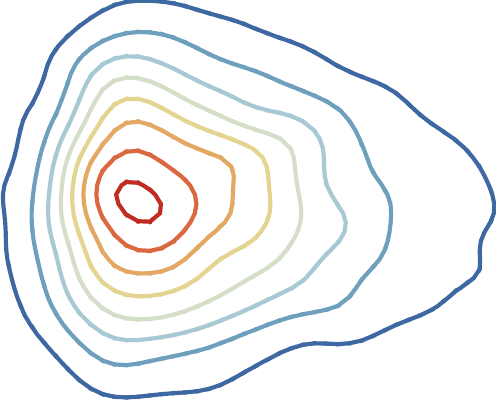};
\nextgroupplot
\addplot graphics[xmin=-2.078,ymin=1.456,xmax=-2.830,ymax=2.885] {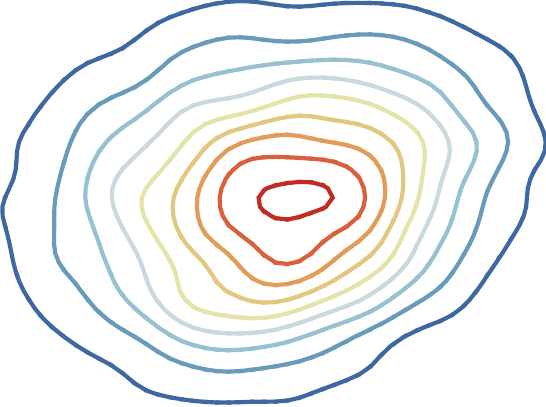};
\nextgroupplot[title={\huge$\theta_{9}$}]
\addplot graphics[xmin=-2.830,ymin=2.885,xmax=0.000,ymax=0.579] {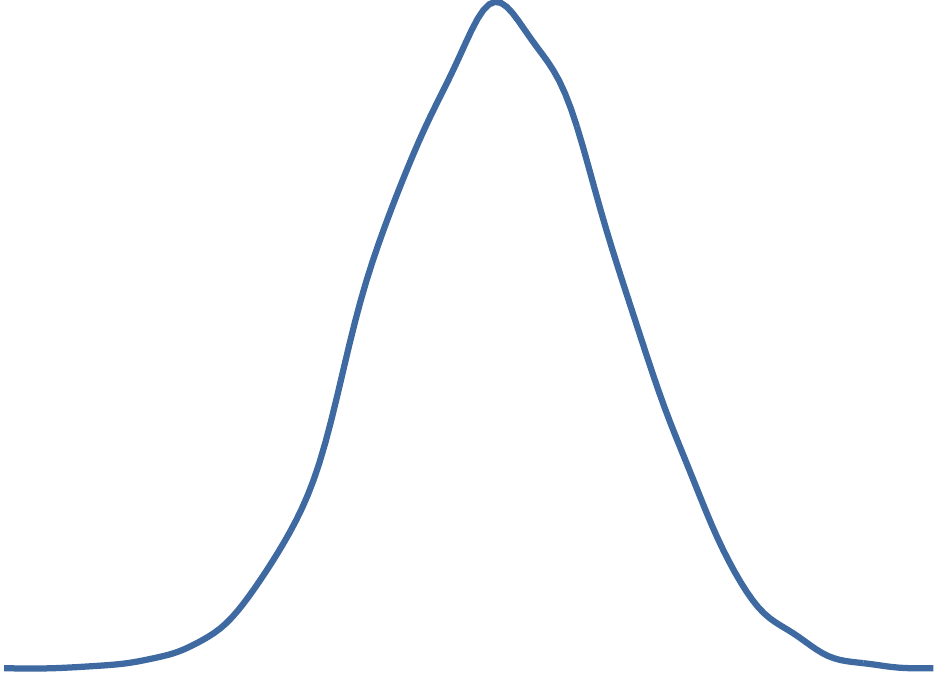};

\nextgroupplot[group/empty plot]

\nextgroupplot
\addplot graphics[xmin=-0.825,ymin=0.584,xmax=-3.191,ymax=2.990] {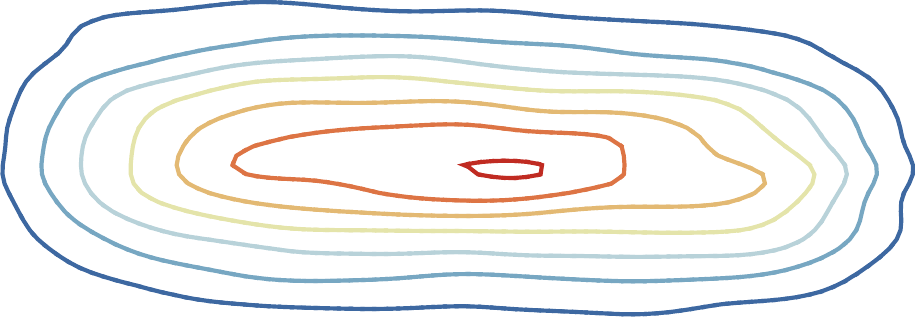};
\nextgroupplot
\addplot graphics[xmin=-0.995,ymin=0.997,xmax=-3.191,ymax=2.990] {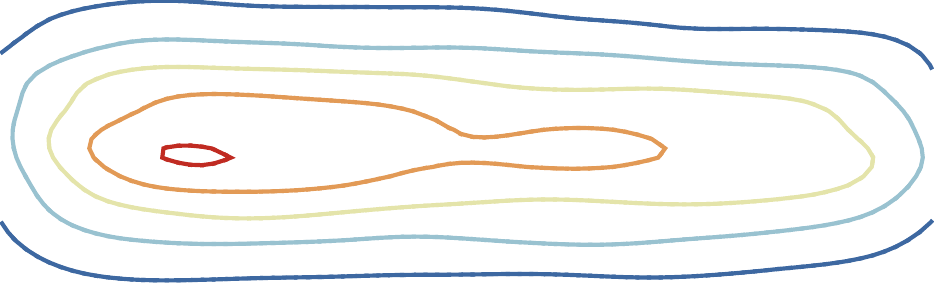};
\nextgroupplot
\addplot graphics[xmin=-0.940,ymin=0.996,xmax=-3.191,ymax=2.990] {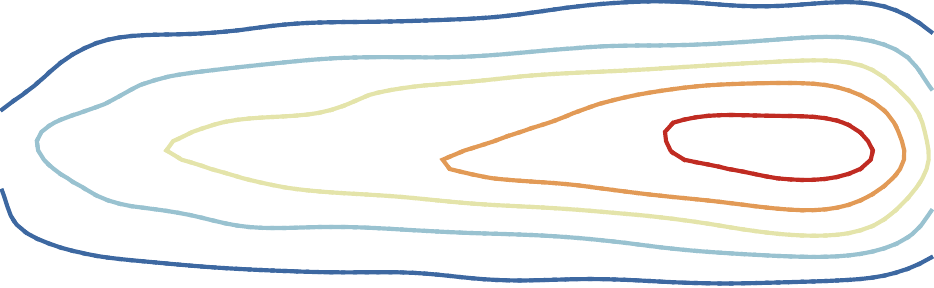};
\nextgroupplot
\addplot graphics[xmin=-1.089,ymin=1.165,xmax=-3.191,ymax=2.990] {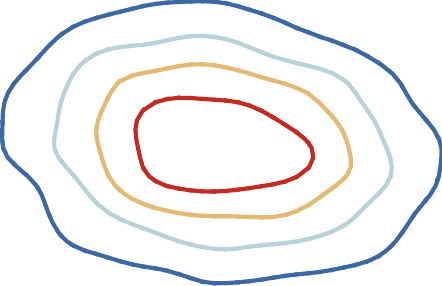};
\nextgroupplot
\addplot graphics[xmin=-1.972,ymin=1.994,xmax=-3.191,ymax=2.990] {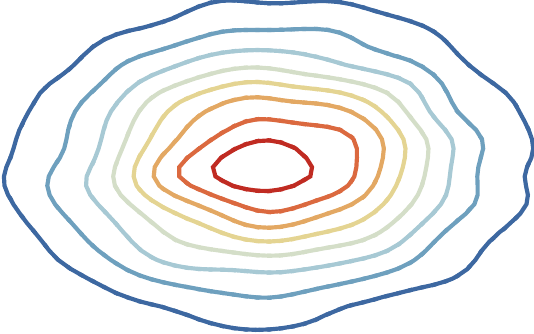};
\nextgroupplot
\addplot graphics[xmin=-0.585,ymin=0.378,xmax=-3.191,ymax=2.990] {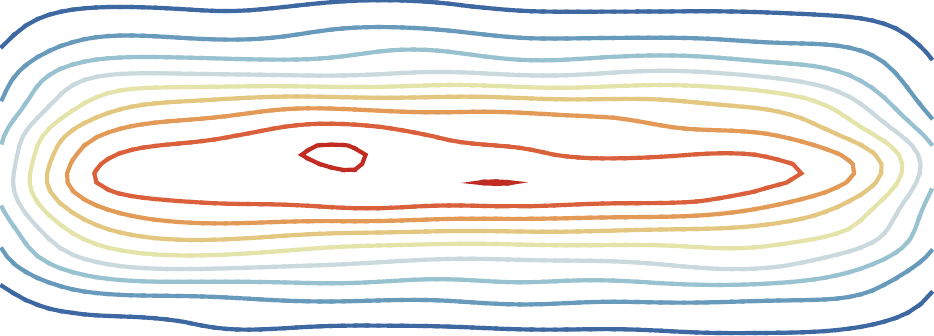};
\nextgroupplot
\addplot graphics[xmin=-0.421,ymin=1.011,xmax=-3.191,ymax=2.990] {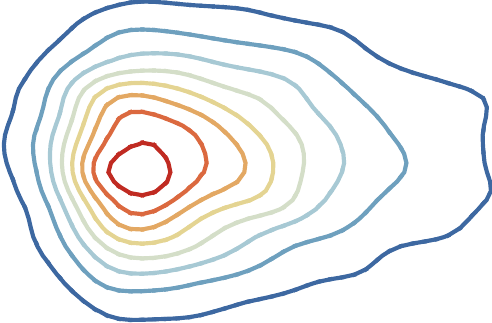};
\nextgroupplot
\addplot graphics[xmin=-2.078,ymin=1.456,xmax=-3.191,ymax=2.990] {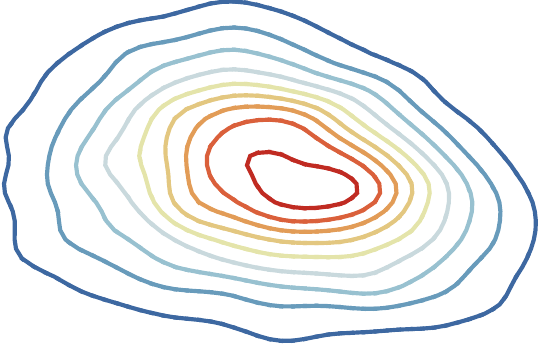};
\nextgroupplot
\addplot graphics[xmin=-2.830,ymin=2.885,xmax=-3.191,ymax=2.990] {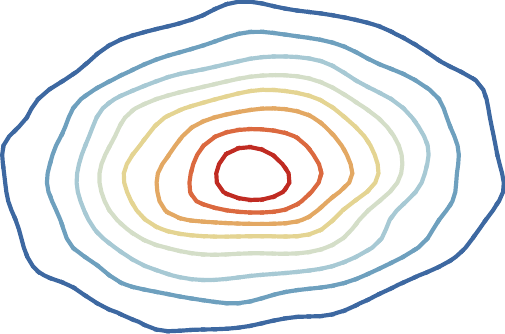};
\nextgroupplot[title={\huge$\theta_{10}$}]
\addplot graphics[xmin=-3.191,ymin=2.990,xmax=0.000,ymax=0.543] {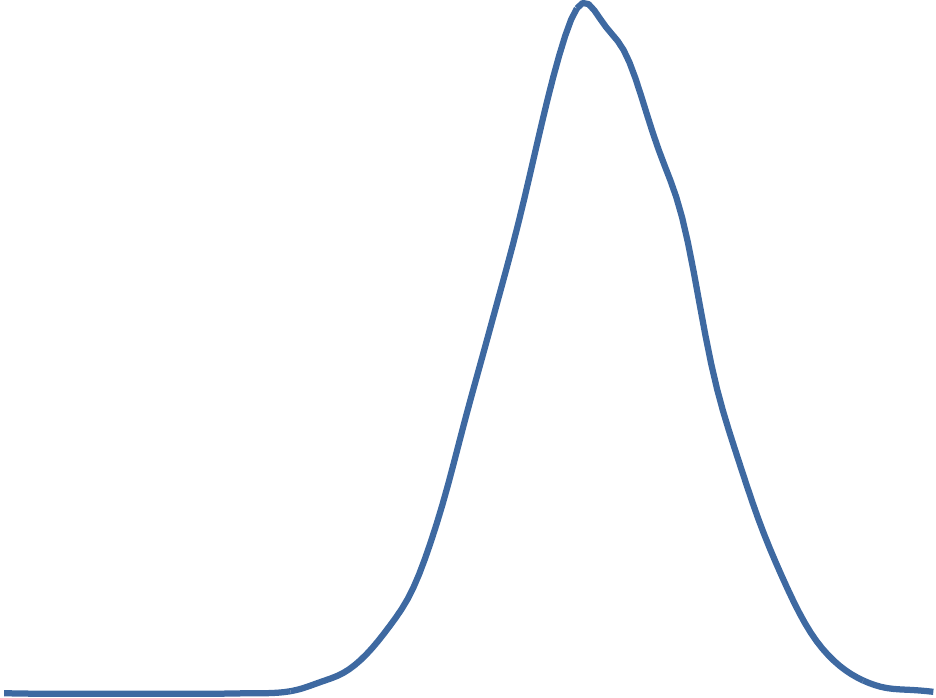};

\end{groupplot}
\end{tikzpicture}

    \caption{Posterior distribution (single and pairwise posterior marginals).}
    \label{fig:maplePost:kde}
    \end{subfigure}%
  }\cr
  \noalign{\hfill}
  \hbox{%
    \begin{subfigure}{.29\textwidth}
    \centering
    \hspace{-1cm}

\begin{tikzpicture}
\begin{axis}[width=\textwidth, height=\textwidth, xmin=-1, xmax=0.600000, ymin=-0.650000, ymax=2.5, xlabel={$\trv_1$}, ylabel={$\trv_7$}, axis on top=true]
\addplot graphics[xmin=-1,ymin=-0.65,xmax=0.600000,ymax=2.5] {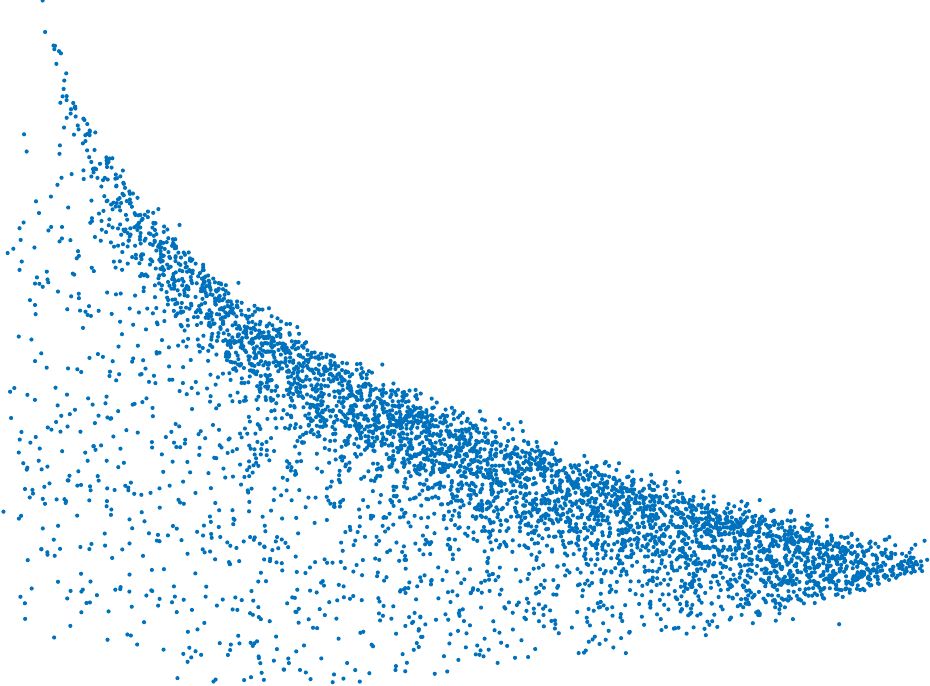};
\end{axis}
\end{tikzpicture}

    \vspace{-0.1cm}
    \caption{Scatter plot of $\theta_1$ and $\theta_7$.}
    \label{fig:maplePost:17}
    \end{subfigure}%
  }\vfill
  \hbox{%
    \begin{subfigure}{.29\textwidth}
    \centering
   \hspace{-1cm}

\begin{tikzpicture}
\begin{axis}[width=\textwidth, height=\textwidth, xmin=-1.00000, xmax=1.000000, ymin=-0.600000, ymax=0.400000, xlabel={$\trv_3$}, ylabel={$\trv_6$}, axis on top=true]
\addplot graphics[xmin=-1,ymin=-0.600000,xmax=1.000000,ymax=0.400000] {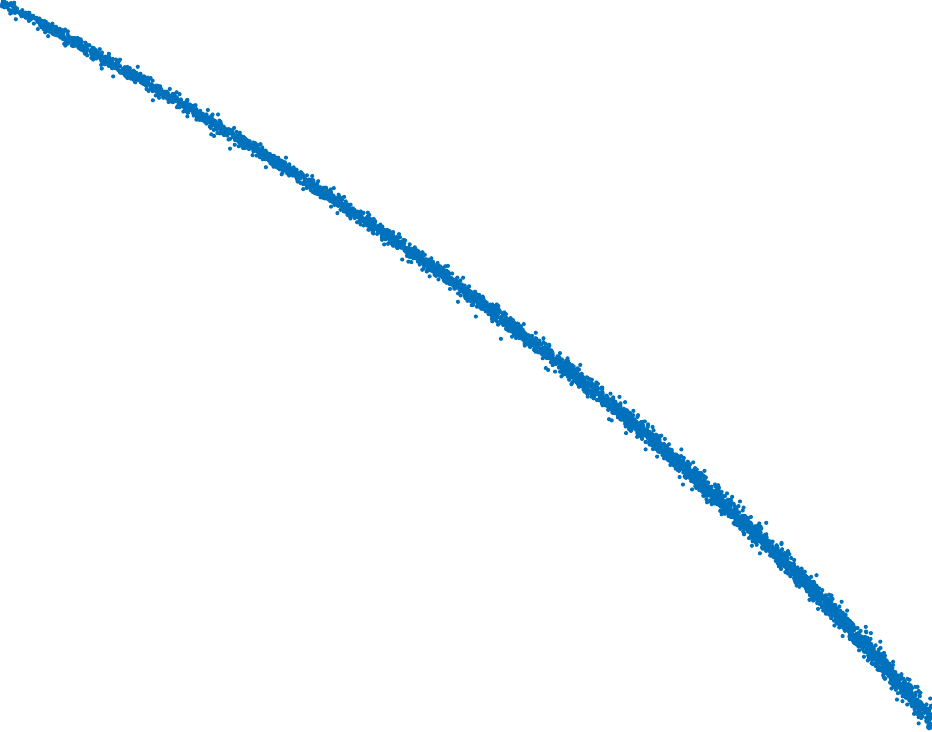};
\end{axis}
\end{tikzpicture}

    \vspace{-0.1cm}
    \caption{Scatter plot of $\theta_3$ and $\theta_6$.}
    \label{fig:maplePost:36}
    \end{subfigure}%
  }\cr
}

%
\caption{Posterior distribution of the maple sap exudation example. The kernel density estimates in Figure \ref{fig:maplePost:kde} misrepresent some sharp edges and narrow regions of the posterior, as illustrated in the scatter plots of Figures \ref{fig:maplePost:17} and \ref{fig:maplePost:36}.}
\label{fig:maplePost}
\end{figure}

Many of the two-dimensional marginal plots in Figure \ref{fig:maplePost} are close to Gaussian; however, the complicated relationships between $(\trv_1,\trv_7)$ and between $(\trv_3,\trv_6)$ yield a difficult posterior for MCMC methods.  The very tight and curved joint distribution shown in Figure \ref{fig:maplePost:36} is particularly challenging to capture and sample.  At the early stages of adaptation, both DRAM and the transport map proposals are nearly isotropic and require very small s.pdf to have a nonzero acceptance rate.  As the methods adapt, however, the proposals begin to capture the strong correlation between $\trv_3$ and $\trv_6$ and larger steps can be employed.  The nonlinear dependencies are much better captured by the transport map proposals, resulting in the order-of-magnitude performance gains shown in Table \ref{tab:maplePerf}.

\begin{table}
\caption[Maple MCMC performance comparison]{\label{tab:maplePerf} Performance of MCMC samplers on the maple parameter inference problem. Column headings are as described in Table~\ref{tab:bodPerf1}.}
\centering

\small


\fbox{\begin{tabular}{l|rr|rrr|D{.}{.}{4}D{.}{.}{2}}
Method & $\tau_{\max}$ & $\sigma_\tau$ & ESS & ESS/sec & ESS/eval &  \begin{minipage}{0.75cm}\centering Rel.\\[-0.15cm] ESS/sec\end{minipage} & \begin{minipage}{0.75cm}\centering Rel.\\[-0.15cm] ESS/eval\end{minipage} \\\hline
DRAM &  2571.4 & 1410.0 & 19 & 2.2e-06 & 5.6e-05 & 1.0 & 1.0\\\hline
TM+DRG &  1144.2 & 494.8 & 43 & 4.4e-06 & 1.2e-04 & 2.0 & 2.1\\
TM+DRL &  460.1 & 170.0 & 108 & 1.2e-05 & 3.3e-04  & 5.4 & 5.9\\
TM+RWM &  1129.7 & 775.9 & 44 & 8.0e-06 & 8.9e-04 & 3.7 & 15.8\\
\end{tabular}}
\end{table}

In contrast with the previous two examples, the TM+DRG method is not the top performers in this comparison.  The previous examples had simpler target distributions where the transport map could capture nearly all of the problem structure, allowing the independence proposal in TM+DRG to efficiently explore the parameter space.  The maple model's posterior, however, is much more challenging and cannot be entirely characterized with a cubic map; thus, the global proposals are less effective.  In this example, TM+DRL is the best-performing variant of the algorithm because it uses only local proposals and is not as sensitive to map deficiencies.


With challenging target distributions like this one, small initial proposal s.pdf are needed to begin sampling. However, small initial steps do not adequately explore the parameter space, yielding an inaccurate finite-sample approximation to the KL divergence in  \eqref{eq:regopt}. Without the regularization term in \eqref{eq:regopt}, one may then obtain transport maps that place too much probability mass on the relatively small region explored by the initial chain; in this sense, the proposal ``collapses.''  A sufficiently large regularization term prevents this collapse, but can also result in a slower adaptation process. We started adapting the map after $\num{5e3}$ steps of the chain and found that $k_R = \num{2e-5}$ was sufficiently large to prevent proposal collapse when the starting isotropic random-walk proposal was tuned to have a 1\% acceptance rate.  However, when the initial proposal was shrunk to obtain a 30\% acceptance rate, we needed a much larger value of $k_R \approx \num{1e-2}$.

\section{Conclusions}
\label{sec:conc}

%

We have introduced a new MCMC approach that uses transport maps to accelerate sampling from challenging target distributions. Our approach adaptively constructs nonlinear transport maps from MCMC samples, via the solution of a convex and separable optimization problem. From one perspective, the resulting maps transform the target to a reference distribution that is increasingly Gaussian and isotropic, and hence easier to sample. From a complementary perspective, the maps transform simple proposal mechanisms into non-Gaussian proposals on the target. Our maps are by construction invertible and continuously differentiable functions between the reference and target spaces, and hence they allow broad flexibility in choosing reference-space MCMC proposals. Yet building the maps themselves requires no derivative information from the target distribution. 

The efficiency of our approach is primarily a result of capturing nonlinear dependencies and non-Gaussian structure in the posterior and, when possible, exploiting this knowledge with global independence proposals (e.g., TM+DRG).  Of course, sequentially updating the transport map introduces an additional computational cost, which may become important in simple problems. As shown in the BOD example, however, our methods can be more efficient on strongly non-Gaussian problems, even when the target density is trivial to evaluate. On more complex posteriors, as in the ODE and DAE examples of Section~\ref{sec:perf:predprey} and \ref{sec:perf:maple}, the efficiency gains can be even more significant, both in terms of effective sample size per posterior evaluation and effective sample size per unit of wallclock time. It is also important to point out that our current implementation does not exploit the many levels of parallelism afforded by the map construction algorithm: solution of the optimization problem \eqref{eq:regopt} can made embarrassingly parallel over parameter dimensions, and additional parallelism can be introduced over samples.\footnote{Our implementation is freely available in MUQ \cite{MUQ}. This work used commit 7417f35 from MUQ's git repository.}


%

While the present work used polynomials to represent the transport map, this is not an essential aspect of the framework. In fact, the optimization problem for the map coefficients in \eqref{eq:regopt} will be unchanged for any map representation that is linear in the coefficients; we have experimented with other bases, e.g., radial basis functions, to good effect. 
%
Performance can be enhanced by an appropriate choice of basis, however, and future work will explore adaptive basis selection strategies. These will be particularly important for extending the transport map approach to higher-dimensional problems, where a more parsimonious choice of basis (versus the total-order bases used here) will be required. Recent results on the sparsity of triangular transports, and the decomposability of more general transports, may be useful in this regard \cite{Spantini2017}. Other methods for approximating the map, perhaps even nonparametric approaches, may also useful. We also note that the transport map defines a Riemannian metric on the parameter space, locally given by $(\dimap(\trv))^\top (\dimap(\trv))$. This suggests links between map-accelerated sampling and differential geometric MCMC methods, which we plan to explore.

\section*{Acknowledgments}
The authors would also like to thank F.\ Augustin, B.\ Calderhead, T.\ Cui, M.\ Girolami, T.\ Moselhy, A.\ Solonen, and A.\ Spantini for many helpful comments and suggestions.

\appendix
\section{ESS Calculation Details}\label{sec:esscalc}
Here we describe the calculation of the maximum integrated autocorrelation time $\tau_{\text{max}}$ used throughout our results.  Assume we are given $M$ independent MCMC chains on an $n$-dimensional parameter space.   Then, let $\tau_{i,j}$ be the integrated autocorrelation time of dimension $j$ on chain $i$.  This value is computed by applying the Fourier transform method from \cite{Wolff2004} to each dimension of each chain independently.  We then define $\tau_{\text{max}}$ as 
\begin{equation}
\tau_{\text{max}} = \max_{j\in\{1,\ldots,n\}} \left[ \underset{i\in\{1,\ldots, M\}}{\text{median}}\left( \tau_{i,j} \right) \right],
\end{equation}
where the median is taken over the chains and the maximum (worst case) is taken over dimensions.  

Effective sample size (ESS) is calculated similarly.  Let $\text{ESS}_{i,j} = \frac{K}{2\tau_{i,j}}$, where $K$ is the number of post-burn-in samples in each chain.  The reported ESS is then given by 
\begin{equation}
\text{ESS} = \min_{j\in\{1,\ldots, n\}} \left[ \underset{i\in\{1,\ldots, M\}}{\text{median}} \left( \frac{K}{2\tau_{i,j}}\right) \right].
\end{equation}
Note that while ESS uses only the samples produced after the burn-in period, normalized values of ESS reported in Section \ref{sec:perf} (e.g., ESS per function evaluation and ESS per second of wallclock time) use \emph{all} function evaluations or computational time in evaluating the denominator. Thus the cost of burn-in is reflected in these normalized performance metrics.
\section{Proof of ergodicity}\label{sec:detailedConvergence}
Section \ref{sec:convergence} of the paper provides an overview of the
convergence properties of our map-accelerated MCMC algorithm.  In this
appendix, we include some of the associated technical analysis.  In
particular, we provide detailed proofs of Lemma~\ref{lem:propBnds} and
Lemma~\ref{lem:dimAdapt}. The remaining results needed for Theorem \ref{thm:ergodic} are direct extensions of the proof of Lemma 6.1 in \cite{Atchade2006}.

\subsection{Bounded target proposal}
The goal of this section is to prove Lemma \ref{lem:propBnds} by finding two zero-mean Gaussian densities that bound the map-induced target-space proposal density $q_{\trv,\bar{\mapp}}$.  We assume throughout this appendix that the target density $\pi(\trv)$ is finite, continuous, and super-exponentially light. (See \eqref{eq:superLight} for the definition of super-exponentially light.)  We also assume that the reference proposal density $q_\rrv(\rrv^\prime | r)$ is a Gaussian random walk with a location-dependent bounded drift term $m(\rrv)$ and fixed covariance $\Sigma$.  Such a proposal takes the form
\begin{equation}
q_\rrv(\rrv^\prime | r) = N(\rrv+m(\rrv), \Sigma).
\end{equation}
Given this proposal density, we can follow \cite{Atchade2006} and show that there exist two zero-mean Gaussian densities $g_1$ and $g_2$, as well as two scalars $k_1$ and $k_2$, such that $0<k_1<k_2<\infty$ and 
\begin{equation}
k_1 g_1(\rrv^\prime - \rrv) \leq q_\rrv(\rrv^\prime | \rrv) \leq k_2 g_2(\rrv^\prime - \rrv)\label{eq:propBounds}.
\end{equation}
Now, we will use the bi-Lipschitz condition in \eqref{eq:dUB} to bound the target space proposal $q_{\trv,\bar{\mapp}}$ as required by Lemma \ref{lem:propBnds}.

\begin{proof}[Proof of Lemma \ref{lem:propBnds}]
The following s.pdf yield an upper bound:
\begin{eqnarray}
q_{\trv,\bar{\mapp}}(\trv^\prime \vert \trv ) & = & q_\rrv(\imap(\trv^\prime) \vert \imap(\trv)) |\det \dimap(\trv^\prime)|\label{eq:tqUB1}  \nonumber \\
& \leq & q_\rrv(\imap(\trv^\prime) \vert \imap(\trv))  \lambda_{\text{max}}^\pd\label{eq:tqUB2}  \nonumber \\
& \leq & k_2g_2(\imap(\trv^\prime) - \imap(\trv)) \lambda_{\text{max}}^\pd\label{eq:tqUB3}  \nonumber \\
& \leq & \left(k_2 \lambda_{\text{max}}^\pd\right)g_2\left( \lambda_{\text{min}}\left(\trv^\prime-\trv\right)\right)\label{eq:tqUB4} \nonumber \\
& = & k_{U}g_U\left(\trv^\prime-\trv\right)\label{eq:tqUB},
\end{eqnarray}
where $g_U$ is another zero-mean Gaussian.  Moving from the second line to the third line above is a consequence of \eqref{eq:dUB}. Moving from the third line to the fourth line uses the lower bound in \eqref{eq:dUB} and the fact that $g_2$ is a Gaussian with zero mean, which implies that $g_2(x_1)>g_2(x_2)$ when $\|x_1\|<\|x_2\|$.  Notice that $k_U$ does not depend on the particular coefficients of the map $\imap$; it only depends on the Lipschitz constant in \eqref{eq:dUB}. A similar process can be used to obtain the following lower bound:
\begin{eqnarray}
q_{\trv,\bar{\mapp}}(\trv^\prime \vert \trv ) & = & q_\rrv(\imap(\trv^\prime) \vert \imap(\trv)) |\det\dimap(\trv^\prime)|\nonumber \\
& \geq & q_\rrv(\imap(\trv^\prime) \vert \imap(\trv))  \lambda_{\text{min}}^\pd \nonumber \\
& \geq & k_1g_1(\imap(\trv^\prime) - \imap(\trv)) \lambda_{\text{min}}^\pd \nonumber \\
& \geq & \left(k_1 \lambda_{\text{min}}^\pd \right)g_1\left( \lambda_{\text{max}}\left(\trv^\prime-\trv\right)\right)\nonumber \\
& = & k_{L}g_L\left(\trv^\prime-\trv\right)\label{eq:tqLB}.
\end{eqnarray}
Lemma \ref{lem:propBnds} follows directly from (\ref{eq:tqUB}) and (\ref{eq:tqLB}).
\end{proof}

\subsection{SSAGE}

With (\ref{eq:tqUB}) and (\ref{eq:tqLB}) in hand, the proof of Lemma 6.1 in  \cite{Atchade2006} yields Lemma \ref{lem:minor}: the minorization component of the SSAGE condition.  Thus, to show SSAGE, we only need to establish Lemma \ref{lem:drift}.  Our proof of Lemma \ref{lem:drift} is built on the intermediate Lemmas \ref{lem:part1} and \ref{lem:part2} provided below and on the proof of Lemma 6.2 in \cite{Atchade2006}.

For the arguments below, we will use the Metropolis-Hastings transition kernel given by
 \begin{equation*}
P_{\bar{\mapp}}(x,dy) = \alpha_{\bar{\mapp}}(x,y)q_{\trv,{\bar{\mapp}}}(y\vert x)dy + r_{\bar{\mapp}}(x)\delta_x(dy),
\end{equation*}
where
 \begin{equation*}
r_{\bar{\mapp}}(x) = 1-\int \alpha_{\bar{\mapp}}(x,y)q_{\trv,{\bar{\mapp}}}(y\vert x)dy,
\end{equation*}
and $\alpha$ is the Metropolis-Hastings acceptance probability given by
 \begin{equation*}
\alpha_{\bar{\mapp}}(x,y) = \min\left\{1, \frac{\td(y) q_{\trv,{\bar{\mapp}}}(x \vert  y)}{\td(x) q_{\trv,{\bar{\mapp}}}(y \vert x)}
\right\} .
\end{equation*}
We will also use the set of guaranteed acceptance, given by
\begin{equation*}
A_{\bar{\mapp}}(x) = \left\{ y\in \real^\pd : \pi(y)q_{\theta,{\bar{\mapp}}}(x\vert y)\geq \pi(x)q_{\theta,{\bar{\mapp}}}(y\vert x)\right\},
\end{equation*}
and the set of possible rejection, simply defined as the complement of the set above:
\begin{equation*}
R_{\bar{\mapp}}(x) = A_{\bar{\mapp}}(x)^C .
\end{equation*}

 \begin{lemma}\label{lem:part1}
Let $V(x) = c_V\td^{-\alpha}(x)$ be a drift function defined by some $\alpha\in (0,1)$.  The constant $c_V= \sup_{x} \pi^\alpha(x)$ is chosen so that $\inf_x V(x) = 1$.  Then the following holds:
\begin{equation}
\limsup_{\|x\|\rightarrow \infty} \, \sup_{\bar{\mapp}} \, \frac{\int_{\real^\pd} V(y) P_{\bar{\mapp}}(x,dy)}{V(x)} < \limsup_{\|x\|\rightarrow \infty} \, \sup_{\bar{\mapp}} \,  \int_{R_{\bar{\mapp}}(x)}q_{\trv,{\bar{\mapp}}}(y |x) dy \label{eq:part1Goal}.
\end{equation}
 \end{lemma}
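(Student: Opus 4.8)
\textbf{Proof plan for Lemma \ref{lem:part1}.}

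The plan is to decompose the Metropolis--Hastings expectation $\int V(y) P_{\bar{\mapp}}(x,dy)$ according to the partition of the target space into the guaranteed-acceptance set $A_{\bar{\mapp}}(x)$ and its complement $R_{\bar{\mapp}}(x)$, and then to control each piece separately. Writing $P_{\bar{\mapp}}(x,dy) = \alpha_{\bar{\mapp}}(x,y)q_{\trv,\bar{\mapp}}(y|x)\,dy + r_{\bar{\mapp}}(x)\delta_x(dy)$, I would first split off the holding-mass term $r_{\bar{\mapp}}(x)V(x)$, which contributes exactly $r_{\bar{\mapp}}(x)$ to the ratio $\int V(y)P_{\bar{\mapp}}(x,dy)/V(x)$, and observe that $r_{\bar{\mapp}}(x) = \int_{R_{\bar{\mapp}}(x)}\bigl(1-\alpha_{\bar{\mapp}}(x,y)\bigr)q_{\trv,\bar{\mapp}}(y|x)\,dy$. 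On $A_{\bar{\mapp}}(x)$ we have $\alpha_{\bar{\mapp}}(x,y)=1$, so that portion of the integral is $\int_{A_{\bar{\mapp}}(x)} \frac{V(y)}{V(x)} q_{\trv,\bar{\mapp}}(y|x)\,dy$; on $R_{\bar{\mapp}}(x)$ the acceptance ratio satisfies $\alpha_{\bar{\mapp}}(x,y) = \td(y)q_{\trv,\bar{\mapp}}(x|y)/\bigl(\td(x)q_{\trv,\bar{\mapp}}(y|x)\bigr)$, and since $V(y)/V(x) = (\td(x)/\td(y))^{\alpha}$ with $\alpha\in(0,1)$, the product $\frac{V(y)}{V(x)}\alpha_{\bar{\mapp}}(x,y)$ can be bounded by $\bigl(\td(x)/\td(y)\bigr)^{\alpha-1}\,q_{\trv,\bar{\mapp}}(x|y)/q_{\trv,\bar{\mapp}}(y|x)$, and on $R_{\bar{\mapp}}(x)$ we have $\td(y)/\td(x) \le q_{\trv,\bar{\mapp}}(y|x)/q_{\trv,\bar{\mapp}}(x|y)$.

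Assembling these pieces, the ratio $\int V(y)P_{\bar{\mapp}}(x,dy)/V(x)$ is bounded above by
\[
\int_{A_{\bar{\mapp}}(x)} \frac{V(y)}{V(x)} q_{\trv,\bar{\mapp}}(y|x)\,dy \;+\; \int_{R_{\bar{\mapp}}(x)}\!\Bigl[\tfrac{V(y)}{V(x)}\alpha_{\bar{\mapp}}(x,y) + 1 - \alpha_{\bar{\mapp}}(x,y)\Bigr] q_{\trv,\bar{\mapp}}(y|x)\,dy .
\]
The key analytic fact, which follows exactly as in the proof of Proposition~2.1 / Lemma~6.2 of \cite{Atchade2006}, is that because $\td$ is super-exponentially light and $q_{\trv,\bar{\mapp}}$ is sandwiched between two fixed zero-mean Gaussians via Lemma \ref{lem:propBnds} (with constants independent of $\bar{\mapp}$), the integral over $A_{\bar{\mapp}}(x)$ of $\frac{V(y)}{V(x)}q_{\trv,\bar{\mapp}}(y|x)\,dy$ tends to $0$ as $\|x\|\to\infty$, uniformly in $\bar{\mapp}$: on the acceptance set $\td(y)\ge c\,\td(x)$ forces $y$ into a region where $\td$ is comparable to or larger than $\td(x)$, and super-exponential lightness makes this region shrink relative to the Gaussian proposal mass. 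Similarly the term $\int_{R_{\bar{\mapp}}(x)}\frac{V(y)}{V(x)}\alpha_{\bar{\mapp}}(x,y)q_{\trv,\bar{\mapp}}(y|x)\,dy$ vanishes uniformly in the limit, using the bound $\frac{V(y)}{V(x)}\alpha_{\bar{\mapp}}(x,y)\le q_{\trv,\bar{\mapp}}(x|y)/q_{\trv,\bar{\mapp}}(y|x)$ together with the Gaussian envelopes. After these limits are dispatched, the right-hand side reduces to $\limsup_{\|x\|\to\infty}\sup_{\bar{\mapp}}\bigl[\int_{R_{\bar{\mapp}}(x)}(1-\alpha_{\bar{\mapp}}(x,y))q_{\trv,\bar{\mapp}}(y|x)\,dy\bigr]$, which is strictly less than $\limsup_{\|x\|\to\infty}\sup_{\bar{\mapp}}\int_{R_{\bar{\mapp}}(x)}q_{\trv,\bar{\mapp}}(y|x)\,dy$, since the first integrand is strictly smaller than the second wherever $\alpha_{\bar{\mapp}}>0$, and one checks (again via the Gaussian bounds and continuity of $\td$) that the contribution from $\{\alpha_{\bar{\mapp}}>0\}$ does not vanish in the limsup. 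This yields \eqref{eq:part1Goal}.

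The main obstacle I anticipate is making the ``uniformly in $\bar{\mapp}$'' claims fully rigorous: I need the vanishing-in-$\|x\|$ estimates for the $A_{\bar{\mapp}}(x)$ integral and for the rejection-set term involving $\alpha_{\bar{\mapp}}$ to hold with constants that do not depend on the map parameters. This is exactly where Lemma \ref{lem:propBnds} is essential --- it replaces the map-dependent proposal by two \emph{fixed} Gaussian envelopes $k_L g_L$ and $k_U g_U$, so that the geometry of the sets $A_{\bar{\mapp}}(x)$ and $R_{\bar{\mapp}}(x)$, and the relevant tail integrals, can be estimated in terms of $g_L, g_U$ and $\td$ alone. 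Once that substitution is in place, the remaining work is essentially a transcription of the corresponding estimates in \cite{Atchade2006}, adapting their one-proposal argument to the uniform-over-$\Gamma$ setting; I would present the bound chain explicitly and then cite their lemma for the limiting estimates rather than reproving them.
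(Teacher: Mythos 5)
Your proposal follows essentially the same route as the paper's proof: the identical three-term decomposition of $\int_{\real^\pd} V(y)P_{\bar{\mapp}}(x,dy)/V(x)$ over $A_{\bar{\mapp}}(x)$ and $R_{\bar{\mapp}}(x)$, the same use of Lemma \ref{lem:propBnds} together with Lemma 6.2 of \cite{Atchade2006} to send the first two terms to zero uniformly in $\bar{\mapp}$ as $\|x\|\to\infty$, and the same final strict inequality coming from $\td(y)q_{\trv,\bar{\mapp}}(x|y)/\bigl(\td(x)q_{\trv,\bar{\mapp}}(y|x)\bigr)<1$ on $R_{\bar{\mapp}}(x)$. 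One minor slip: on $R_{\bar{\mapp}}(x)$ the product $\tfrac{V(y)}{V(x)}\alpha_{\bar{\mapp}}(x,y)$ equals $(\td(x)/\td(y))^{\alpha-1}\,q_{\trv,\bar{\mapp}}(x|y)/q_{\trv,\bar{\mapp}}(y|x)$ and is bounded by $\bigl(q_{\trv,\bar{\mapp}}(x|y)/q_{\trv,\bar{\mapp}}(y|x)\bigr)^{\alpha}$ rather than by $q_{\trv,\bar{\mapp}}(x|y)/q_{\trv,\bar{\mapp}}(y|x)$ itself, but this does not affect the argument since the Gaussian envelopes of Lemma \ref{lem:propBnds} control that expression in exactly the same way.
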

\begin{proof}
First, we decompose the left-hand side of (\ref{eq:part1Goal}) into
\begin{eqnarray}
\frac{\int_{\real^\pd} V(y) P_{\bar{\mapp}}(x,dy)}{V(x)} &=& \int_{A_{\bar{\mapp}}(x)} \frac{\td^{-\alpha}(y)}{\td^{-\alpha}(x)} q_{\trv,{\bar{\mapp}}}(y |x) dy + \int_{R_{\bar{\mapp}}(x)}  \frac{\td^{-\alpha}(y)}{\td^{-\alpha}(x)} \frac{\td(y) q_{\trv,{\bar{\mapp}}}(x | y)}{\td(x) q_{\trv,{\bar{\mapp}}}(y |x)}q_{\trv,{\bar{\mapp}}}(y |x)dy \nonumber\\
&&+ \int_{R_{\bar{\mapp}}(x)} \left(1-\frac{\td(y) q_{\trv,{\bar{\mapp}}}(x | y)}{\td(x) q_{\trv,{\bar{\mapp}}}(y |x)} \right) q_{\trv,{\bar{\mapp}}}(y |x) dy \label{eq:part1Portions}.
\end{eqnarray}
Following the proof of Lemma 6.2 in \cite{Atchade2006}, we can show that the first two integrals in (\ref{eq:part1Portions}) go to zero as $\|x\|\rightarrow \infty$.  With that, we have
\begin{eqnarray}
\limsup_{\|x\|\rightarrow \infty} \, \sup_{\bar{\mapp}} \, \frac{\int_{\real^\pd} V(y) P_{\bar{\mapp}}(x,dy)}{V(x)}  &=& \limsup_{\|x\|\rightarrow \infty} \, \sup_{\bar{\mapp}} \, \int_{R_{\bar{\mapp}}(x)} \left(1-\frac{\td(y) q_{\trv,{\bar{\mapp}}}(x | y)}{\td(x) q_{\trv,{\bar{\mapp}}}(y |x)} \right) q_{\trv,{\bar{\mapp}}}(y |x) dy \nonumber \\
&<& \limsup_{\|x\|\rightarrow \infty} \, \sup_{\bar{\mapp}} \, \int_{R_{\bar{\mapp}}(x)} q_{\trv,{\bar{\mapp}}}(y |x) dy.
\end{eqnarray}
The inequality results from the fact that  $[\td(y) q_{\trv,{\bar{\mapp}}}(x | y)]/[\td(x) q_{\trv,{\bar{\mapp}}}(y |x)]<1$ when $y \in R_{\bar{\mapp}}(x)$.
\end{proof}
\smallskip

\begin{lemma}\label{lem:part2}
The proposal has a nonzero probability of acceptance, i.e.,
 \begin{equation}
\int_{R_{\bar{\mapp}}(x)}q_{\trv,{\bar{\mapp}}}(y |x) dy < 1. \label{eq:part2Goal}
\end{equation}
 \end{lemma}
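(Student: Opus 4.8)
The plan is to bound the "possible rejection" mass $\int_{R_{\bar{\mapp}}(x)} q_{\trv,\bar{\mapp}}(y|x)\,dy$ strictly away from $1$, uniformly in $\bar{\mapp}\in\Gamma$ and $x\in\real^\pd$. Since $R_{\bar{\mapp}}(x) = A_{\bar{\mapp}}(x)^C$, this is equivalent to showing that the "guaranteed acceptance" set $A_{\bar{\mapp}}(x)$ carries a uniformly positive amount of proposal probability. The natural candidate subset of $A_{\bar{\mapp}}(x)$ to work with is $\{y : \td(y)\geq\td(x)\}$: on this set $\td(y)\geq\td(x)$, and we would like to also control the proposal-density ratio $q_{\trv,\bar{\mapp}}(x|y)/q_{\trv,\bar{\mapp}}(y|x)$ so that $\td(y)q_{\trv,\bar{\mapp}}(x|y)\geq\td(x)q_{\trv,\bar{\mapp}}(y|x)$. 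Here is where Lemma~\ref{lem:propBnds} does the work: the bounds \eqref{eq:qtBounds} give $q_{\trv,\bar{\mapp}}(x|y)\geq k_L g_L(x-y)$ and $q_{\trv,\bar{\mapp}}(y|x)\leq k_U g_U(y-x)$, with $k_L, k_U, g_L, g_U$ all independent of $\bar{\mapp}$. Since $g_L$ and $g_U$ are both zero-mean Gaussians, $g_L(x-y)$ and $g_U(y-x)$ depend only on $\|y-x\|$, so one can find a radius $\epsilon>0$ (depending only on the fixed constants $k_1,k_2,\lambda_{\mathrm{min}},\lambda_{\mathrm{max}}$, not on $x$ or $\bar{\mapp}$) such that $k_L g_L(z)\geq k_U g_U(z)$ for all $\|z\|\leq\epsilon$; concretely this amounts to comparing two Gaussian densities near the origin, which reduces to a one-line inequality on their heights and variances.

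The key steps, in order, are then: (i) define the set $S_\epsilon(x) := \{y : \|y-x\|\leq\epsilon\text{ and }\td(y)\geq\td(x)\}$ and verify $S_\epsilon(x)\subseteq A_{\bar{\mapp}}(x)$ for every $\bar{\mapp}\in\Gamma$, using the previous paragraph's choice of $\epsilon$; (ii) lower-bound $\int_{A_{\bar{\mapp}}(x)}q_{\trv,\bar{\mapp}}(y|x)\,dy \geq \int_{S_\epsilon(x)} k_L g_L(y-x)\,dy$ via Lemma~\ref{lem:propBnds}; (iii) argue that $S_\epsilon(x)$ always has Lebesgue measure bounded below by a positive constant independent of $x$ — because $\td$ is continuous, the superlevel set $\{\td\geq\td(x)\}$ always contains a neighborhood of a point where $\td$ attains a value $\geq\td(x)$, and in fact by continuity a whole ball intersected with $B(x,\epsilon)$ contributes; a cleaner route is to note that $\{y : \td(y)\geq\td(x)\}$ contains (locally around any $y_0$ with $\td(y_0)>\td(x)$) an open set, or simply that the intersection of the half-ball $\{y-x\in H\}$ for a suitable halfspace $H$ with $\{\td(y)\geq\td(x)\}$ has measure at least $c_\epsilon>0$ uniformly; and (iv) conclude $\int_{R_{\bar{\mapp}}(x)}q_{\trv,\bar{\mapp}}(y|x)\,dy = 1 - \int_{A_{\bar{\mapp}}(x)}q_{\trv,\bar{\mapp}}(y|x)\,dy \leq 1 - k_L\,c_\epsilon\,\inf_{\|z\|\leq\epsilon}g_L(z) < 1$.

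The main obstacle is step (iii): establishing a uniform-in-$x$ lower bound on the proposal mass landing in the superlevel set $\{\td\geq\td(x)\}\cap B(x,\epsilon)$. Naively this is false pointwise if $\td(x)$ is near the global maximum and $x$ is far from the mode, since then $\{\td\geq\td(x)\}$ could be a tiny set not near $x$ at all. The resolution, following the structure of \cite{Atchade2006}, is that we do not actually need a bound uniform over \emph{all} $x\in\real^\pd$ for the drift argument — Lemma~\ref{lem:drift} combines Lemma~\ref{lem:part1} and Lemma~\ref{lem:part2}, and Lemma~\ref{lem:part1} already isolates the large-$\|x\|$ regime where $\td(x)$ is small and super-exponential lightness forces most of the local proposal mass into the superlevel set (the proposal, being a bounded-mean Gaussian of fixed covariance, moves "inward" toward higher density with overwhelming probability). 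For moderate $x$ one uses continuity and positivity of $\td$ directly on a compact set. So in practice I would state Lemma~\ref{lem:part2} as a pointwise strict inequality (which is elementary: $R_{\bar{\mapp}}(x)$ cannot have full proposal measure because its complement contains at least the nonempty set $S_\epsilon(x)$, which has strictly positive Lebesgue measure whenever $\td$ is continuous and $\td(x)$ is not the essential supremum — and if $\td(x)$ equals the sup, then $A_{\bar{\mapp}}(x)$ is almost everything), and defer the uniformity entirely to the combination with Lemma~\ref{lem:part1} inside the proof of Lemma~\ref{lem:drift}, exactly as in the cited Atchadé–Rosenthal argument. The proof of Lemma~\ref{lem:part2} itself then reduces to: pick any $y_0$ with $\td(y_0)\geq\td(x)$ (e.g.\ $y_0=x$ itself works if we use $>$ is not required — indeed $x\in A_{\bar{\mapp}}(x)$ since $\td(x)q_{\trv,\bar{\mapp}}(x|x) = \td(x)q_{\trv,\bar{\mapp}}(x|x)$), note $q_{\trv,\bar{\mapp}}(\cdot|x)$ assigns positive measure to every neighborhood of $y_0$ by the Gaussian lower bound in Lemma~\ref{lem:propBnds}, and by continuity of $\td$ and of $q_{\trv,\bar{\mapp}}(\cdot|\cdot)$ a full neighborhood of $y_0$ lies in $A_{\bar{\mapp}}(x)$, forcing $\int_{A_{\bar{\mapp}}(x)}q_{\trv,\bar{\mapp}}(y|x)\,dy>0$ and hence \eqref{eq:part2Goal}.
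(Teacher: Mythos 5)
Your first construction cannot work: the two envelopes in \eqref{eq:qtBounds} sandwich the \emph{same} density, so $k_L g_L(z)\leq q_{\trv,\bar{\mapp}}(z+x\,|\,x)\leq k_U g_U(z)$ forces $k_L g_L(z)\leq k_U g_U(z)$ for every $z$ (and strictly so whenever $\lambda_{\text{min}}<\lambda_{\text{max}}$, $k_1<k_2$). Hence there is no radius $\epsilon$ on which $k_L g_L\geq k_U g_U$, and the set $S_\epsilon(x)$ you build in step (i) is not a subset of $A_{\bar{\mapp}}(x)$ -- the envelope bounds can never certify a proposal-density ratio $\geq 1$. Your fallback is also too weak: $x\in A_{\bar{\mapp}}(x)$ only with equality, and continuity does not put a neighborhood of $x$ inside $A_{\bar{\mapp}}(x)$ (at a mode with a symmetric proposal, a neighborhood of $x$ lies almost entirely in $R_{\bar{\mapp}}(x)$). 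More importantly, retreating to a pointwise statement and ``deferring uniformity to Lemma~\ref{lem:part1}'' inverts the logic of the paper: in the proof of Lemma~\ref{lem:drift}, Lemma~\ref{lem:part1} only bounds the drift ratio by $\limsup_{\|x\|\to\infty}\sup_{\bar{\mapp}}\int_{R_{\bar{\mapp}}(x)}q_{\trv,\bar{\mapp}}(y|x)\,dy$, and it is precisely Lemma~\ref{lem:part2} that must supply the strict inequality $<1$ for this limsup, uniformly over $\bar{\mapp}$ (see \eqref{eq:oneVerify}). A bound that holds merely for each fixed $x$ and $\bar{\mapp}$ does not give this and the drift condition does not follow.

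The paper's argument supplies exactly the two ideas you are missing. First, rather than forcing the proposal ratio above $1$, Lemma~\ref{lem:propBnds} is used only to get a \emph{uniform positive} lower bound $c_0$ on $q_{\trv,\bar{\mapp}}(x|y)/q_{\trv,\bar{\mapp}}(y|x)$ for $y$ in a fixed ball $B(x,R)$, independent of $\bar{\mapp}$; super-exponential lightness then guarantees that, for $\|x\|$ large, the inward point $x_1=x-u\,x/\|x\|$ satisfies $\td(x_1)\geq \td(x)/c_0$, so the density gain \emph{compensates} the worst-case proposal ratio and $x_1\in A_{\bar{\mapp}}(x)$ for every $\bar{\mapp}$. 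Second, the curvature condition \eqref{eq:curveCond} -- which your proposal never invokes -- promotes this single point to a fixed-shape cone $W(x)\subseteq A_{\bar{\mapp}}(x)$ of inward directions, and the lower Gaussian envelope $k_L g_L$ shows this cone receives proposal mass bounded away from zero uniformly in $\bar{\mapp}$ and in large $\|x\|$, yielding $\limsup_{\|x\|\to\infty}\sup_{\bar{\mapp}}\int_{R_{\bar{\mapp}}(x)}q_{\trv,\bar{\mapp}}(y|x)\,dy<1$. Without the $c_0$-compensation step and the curvature condition, your route has no mechanism to produce an acceptance region of uniformly positive proposal mass in the tail regime where it is actually needed.
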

\begin{proof}
A nonzero probability of acceptance occurs if and only if there is a measurable set $W(x)\subset A_{\bar{\mapp}}(x)$.  To show that $W(x)$ exists, consider a small ball of radius $R$ around $x$.  Since $g_L$ and $g_U$ are zero mean and have positive variance, this implies
\begin{equation}
\inf_{y\in B(x,R)} \inf_{\bar{\mapp}} \frac{q_{\trv,{\bar{\mapp}}}(x|y)}{q_{\trv,{\bar{\mapp}}}(y|x)}  \geq  \inf_{y\in B(x,R)} \frac{k_Lg_L(x-y)}{k_Ug_U(y-x)} \geq  c_0 \, ,
\end{equation}
for some $c_0>0$.  Because $\td(x)$ is super-exponentially light, for any $u\in(0,R)$, there exists a radius $r_4$ such that $\| x \| > r_4$ implies
\[
\td\left(x-u\frac{x}{\|x\|}\right) \geq \frac{\td(x)}{c_0}.
\]
Subsequently, for any map coefficients, the acceptance probability for $x_1 = x-u\frac{x}{\|x\|}$ is one, which implies that $x_1\in A_{\bar{\mapp}}(x)$.  Now, define $W(x)$ as
\[
W(x) = \left\{ x_1-a\zeta, \, 0<a<R-u, \, \zeta \in S^{\pd-1},\, \left\|\zeta - \frac{x_1}{\|x_1\|}\right\|<\frac{.pdfilon}{2}\right\},
\]
where $.pdfilon$ is an arbitrarily small scalar and $S^{\pd-1}$ is the unit sphere in $\real^\pd$ dimensions.  
Note that $\left\|\zeta - x_1 / \|x_1\| \right\|<\frac{.pdfilon}{2}$ ensures that $W(x)$ is a cone of points closer to the origin than $x_1$.
Now, using the final paragraph of the proof of Lemma 6.2 in \cite{Atchade2006}, the curvature condition from \eqref{eq:curveCond} ensures that the target density is larger in $W(x)$ than at $x_1$.  Since $x_1$ was accepted, this means that everything in $W(x)$ will also be accepted and that $W(x)\subseteq A_{\bar{\mapp}}(x)$.  Subsequently, we obtain
\begin{eqnarray}
\lim_{\|x\|\rightarrow \infty}\int_{R_{\bar{\mapp}}(x)} q_{\trv,{\bar{\mapp}}}(y |x) dy  &=& \lim_{\|x\|\rightarrow \infty} \left(1 - \int_{A_{\bar{\mapp}}(x)} q_{\trv,{\bar{\mapp}}}(y |x) dy\right)\nonumber\\
& \leq &  \lim_{\|x\|\rightarrow \infty}\left(1 -  \int_{W(x)} q_{\trv,{\bar{\mapp}}}(y |x) dy\right) \nonumber\\
& < &  1 \label{eq:finalOneDrift2},
\end{eqnarray}
where we have used the fact that $W(x)$ is a measurable subset of $A_{\bar{\mapp}}(x)$ for large $x$.
\end{proof}
\smallskip

With Lemmas \ref{lem:part1} and \ref{lem:part2} in hand, we can now proceed to the proof of Lemma \ref{lem:drift} (the drift condition) from the main text.

\begin{proof}[Proof of Lemma \ref{lem:drift}]
%
Recall our choice of drift function: $V(x) = c_V\td^{-\alpha}(x)$ for $\alpha\in (0,1)$.  Using this function and the definitions of $ P_{\bar{\mapp}}$, $R_{\bar{\mapp}}$, and $A_{\bar{\mapp}}$ we can show that
\begin{eqnarray}
\frac{\int_{\real^\pd} V(y) P_{\bar{\mapp}}(x,dy)}{V(x)} &=& \frac{\int_{\real^\pd} \td^{-\alpha}(y) P_{\bar{\mapp}}(x,dy)}{\td^{-\alpha}(x)} \label{eq:simple1} \nonumber \\
&& + \int_{R_{\bar{\mapp}}(x)} \left(1-\frac{\td(y) q_{\trv,{\bar{\mapp}}}(x | y)}{\td(x) q_{\trv,{\bar{\mapp}}}(y |x)} \right) q_{\trv,{\bar{\mapp}}}(y |x) dy \nonumber\\
& = & \int_{R_{\bar{\mapp}}(x)}q_{\trv,{\bar{\mapp}}}(y |x) dy  + \int_{A_{\bar{\mapp}}(x)} \frac{\td^{-\alpha}(y)}{\td^{-\alpha}(x)} q_{\trv,{\bar{\mapp}}}(y |x) dy \nonumber\\
&&+ \int_{R_{\bar{\mapp}}(x)}  \left(\frac{\td^{-\alpha}(y)}{\td^{-\alpha}(x)} -1\right)\frac{\td(y) q_{\trv,{\bar{\mapp}}}(x | y)}{\td(x) q_{\trv,{\bar{\mapp}}}(y |x)}q_{\trv,{\bar{\mapp}}}(y |x)dy\nonumber\\
&&+ \int_{R_{\bar{\mapp}}(x)}  \frac{\td^{-\alpha}(y)}{\td^{-\alpha}(x)} \frac{\td(y) q_{\trv,{\bar{\mapp}}}(x | y)}{\td(x) q_{\trv,{\bar{\mapp}}}(y |x)}q_{\trv,{\bar{\mapp}}}(y |x)dy\nonumber\\
&\leq & 1  + \int_{A_{\bar{\mapp}}(x)} \frac{\td^{-\alpha}(y)}{\td^{-\alpha}(x)} q_{\trv,{\bar{\mapp}}}(y |x) dy \\
&& + \int_{R_{\bar{\mapp}}(x)}  \frac{\td^{-\alpha}(y)}{\td^{-\alpha}(x)} \frac{\td(y) q_{\trv,{\bar{\mapp}}}(x | y)}{\td(x) q_{\trv,{\bar{\mapp}}}(y |x)}q_{\trv,{\bar{\mapp}}}(y |x)dy . \nonumber 
\end{eqnarray}
Within the region of possible rejection $R_{\bar{\mapp}}(x)$, the acceptance rates are all in $[0,1)$, which allows us to further simplify \eqref{eq:simple1} to obtain
\begin{eqnarray}
\frac{\int_{\real^\pd} V(y) P_{\bar{\mapp}}(x,dy)}{V(x)} &\leq& 1 +\int_{A_{\bar{\mapp}}(x)} \frac{\td^{-\alpha}(y)}{\td^{-\alpha}(x)} q_{\trv,{\bar{\mapp}}}(y |x) dy +  \int_{R_{\bar{\mapp}}(x)}  \frac{\td^{-\alpha}(y)}{\td^{-\alpha}(x)}q_{\trv,{\bar{\mapp}}}(y |x)dy\nonumber\\
  &<& 1 +\int_{A_{\bar{\mapp}}(x)} \frac{\td^{-\alpha}(y)}{\td^{-\alpha}(x)} q_{\trv,{\bar{\mapp}}}(y |x) dy +  \int_{R_{\bar{\mapp}}(x)}  \frac{q_{\trv,{\bar{\mapp}}}^{-\alpha}(y |x)}{q_{\trv,{\bar{\mapp}}}^{-\alpha}(x|y)}q_{\trv,{\bar{\mapp}}}(y |x)dy\nonumber\\
  &=& 1 +\int_{A_{\bar{\mapp}}(x)} \frac{\td^{-\alpha}(y)}{\td^{-\alpha}(x)} q_{\trv,{\bar{\mapp}}}(y |x) dy +  \int_{R_{\bar{\mapp}}(x)} q_{\trv,{\bar{\mapp}}}^{1-\alpha}(y |x)q_{\trv,{\bar{\mapp}}}^{\alpha}(x|y)dy\nonumber\\
  &\leq& 1 +\int_{A_{\bar{\mapp}}(x)} \frac{\td^{-\alpha}(y)}{\td^{-\alpha}(x)} q_{\trv,{\bar{\mapp}}}(y |x) dy\nonumber\\
  &\leq& 1 +\int_{A_{\bar{\mapp}}(x)} \frac{\td^{-\alpha}(y)}{\td^{-\alpha}(x)} q_{\trv,{\bar{\mapp}}}(y |x) dy +  k_U^2 \int_{R_{\bar{\mapp}}(x)} g_U(y-x)dy \nonumber\\
  &=& 1+C_R +\int_{A_{\bar{\mapp}}(x)} \frac{\td^{-\alpha}(y)}{\td^{-\alpha}(x)} q_{\trv,{\bar{\mapp}}}(y |x) dy,
\end{eqnarray}
where we have used the density upper bound in (\ref{eq:tqUB}) and $C_R$ is a finite constant.  A similar application of (\ref{eq:tqUB}) over $A_{\bar{\mapp}}(x)$ yields
\begin{eqnarray}
\frac{\int_{\real^\pd} V(y) P_{\bar{\mapp}}(x,dy)}{V(x)} &\leq&1+C_R +\int_{A_{\bar{\mapp}}(x)} \frac{\td^{\alpha}(x)}{\td^{\alpha}(y)} q_{\trv,{\bar{\mapp}}}(y |x) dy\nonumber\\
&\leq& 1+C_R +\int_{A_{\bar{\mapp}}(x)} q_{\trv,{\bar{\mapp}}}^\alpha(x |y) q_{\trv,{\bar{\mapp}}}^{1-\alpha}(y |x)dy\nonumber\\
&\leq& 1+C_R +k_U^2\int_{A_{\bar{\mapp}}(x)} g_U(x-y)dy\nonumber\\
&<& \infty \label{eq:driftFiniteVerify}.
\end{eqnarray}
Using Lemma \ref{lem:part1} and Lemma \ref{lem:part2}, we also have that 
\begin{equation}
\limsup_{\|x\|\rightarrow \infty} \, \sup_{\bar{\mapp}} \, \frac{\int_{\real^\pd} V(y) P_{\bar{\mapp}}(x,dy)}{V(x)} < \limsup_{\|x\|\rightarrow \infty} \, \sup_{\bar{\mapp}} \,  \int_{R_{\bar{\mapp}}(x)}q_{\trv,{\bar{\mapp}}}(y |x) dy < 1.\label{eq:oneVerify}
\end{equation}
From the proof of Lemma 6.2 in \cite{Atchade2006}, which resembles the proofs in \cite{Jarner2000}, Lemma \ref{lem:drift} follows from simultaneously satisfying the bounds \eqref{eq:driftFiniteVerify} and \eqref{eq:oneVerify}.  
 \end{proof}



\subsection{Diminishing adaptation}
In addition to SSAGE and containment, Theorem \ref{thm:ergodic} requires diminishing adaptation (Definition \ref{def:dim}).  The following proof establishes the diminishing adaptation proposed in Lemma \ref{lem:dimAdapt}.
\begin{proof}[Proof of Lemma \ref{lem:dimAdapt}]
The proof of this lemma relies on continuity of the map with respect
to $\bar{\mapp}$ and the convergence of \eqref{eq:regopt} as the number
of samples $K\rightarrow \infty$.  Note that we do not require
\eqref{eq:regopt} (or \eqref{eq:mckl})) to converge to the minimizer of the true KL divergence.

When the MCMC chain is not at an adaptation step, $\bar{\mapp}^{(k+1)}=\bar{\mapp}^{(k)}$.  Thus, to show diminishing adaptation, we need to show that the difference between transition kernels at step $K$ and $K+K_U$ decreases as $K\rightarrow \infty$.  Mathematically, we require
\begin{equation}
\lim_{K\rightarrow \infty} \mathbb{P}\left(\sup_{x \in \real^\pd}\left\| \tk_{\bar{\mapp}^{(K)}}(x,\cdot) - \tk_{\bar{\mapp}^{(K+K_U)}}(x,\cdot)\right\|_{TV} \geq \delta_1\right) = 0
\label{eq:tvinprob} 
\end{equation}
for any $\delta_1>0$.  Because the maps are linear in $\bar{\mapp}$, for a fixed $x$, the mapping from $\bar{\mapp}$ to $\tk_{\bar{\mapp}}(x,A)$ is continuous for any $A$.  Combined with the fact that $q_{\trv,{\bar{\mapp}}}$ is bounded, we have that \eqref{eq:tvinprob} will be satisfied when  
 \begin{equation}
\lim_{K\rightarrow \infty}\mathbb{P}\left( \left\| \mapp_i^{(K+K_U)}  -\mapp_i^{(K)} \right\| \geq \delta\right) = 0, \label{eq:da_goal1}
\end{equation}
for any $\delta>0$ and all $i\in\{1,2,\ldots,\pd\}$ .  We now turn to proving \eqref{eq:da_goal1}.

Recall that $\bar{\mapp}^{(K)}$ is the minimizer of 
\eqref{eq:regopt}, which is based on a $K$-sample Monte Carlo approximation of the KL 
divergence.   To notationally simplify \eqref{eq:regopt}, we will now use the convention that $\log(0)=-\infty$ and define the objective functions $f_i^{(K)}(\mapp_i)$ and $f_i^{(K+K_U)}(\mapp_i)$
as
\begin{eqnarray}
f_i^{(K)}(\mapp_i) & = & \frac{1}{K}g(\mapp_i) +\frac{1}{K}\sum_{k=1}^{K}\left[  \frac12\imap^2_i(\trv^{(k)}; \mapp_i) - \log\left.\frac{\partial \imap_i(\trv;\mapp_i)}{\partial \trv_i}\right|_{\trv^{(k)}}\right] \label{eq:fkk}\\
f_i^{(K+K_U)}(\mapp_i) & = & \frac{1}{K}g(\mapp_i) + \frac{1}{K}\sum_{k=1}^{K+K_U}\left[  \frac12\imap^2_i(\trv^{(k)}; \mapp_i) - \log\left.\frac{\partial \imap_i(\trv;\mapp_i)}{\partial \trv_i}\right|_{\trv^{(k)}}\right].\label{eq:fkku}
\end{eqnarray}
From \eqref{eq:regopt}, it should be clear that 
\begin{eqnarray}
\mapp_i^{(K)} &=& \argmin f_i^{(K)}(\mapp_i)\\
\mapp_i^{(K+K_U)} &=& \argmin f_i^{(K+K_U)}(\mapp_i),
\end{eqnarray}
for all $i=\{1,2,...,\pd\}$.\footnote{Using the factor $\frac{1}{K}$
  in both \eqref{eq:fkk} and \eqref{eq:fkku} is intentional.
  Multiplying the objective in \eqref{eq:regopt} by any positive
  scalar will not affect the solution and the common value of $\frac{1}{K}$ used here simplifies the results later on.}
Combining these expressions, we have
\begin{equation}
f_i^{(K+K_U)}(\mapp_i) =  f_i^{(K)}(\mapp_i) + \frac{1}{K}\sum_{k=K+1}^{K+K_U}\left[  \frac12\imap^2_i(\trv^{(k)}; \mapp_i) - \log\left.\frac{\partial \imap_i(\trv;\mapp_i)}{\partial \trv_i}\right|_{\trv^{(k)}}\right].\label{eq:da_parts1}
\end{equation}
From Markov's inequality, we then have
\begin{equation}
\mathbb{P}\left[ \left| f_i^{(K+K_U)}(\mapp_i) -  f_i^{(K)}(\mapp_i) \right| \geq \delta_2 \right] \leq \frac{1}{K \delta_2}\mathbb{E}\left[\left| \sum_{k=K+1}^{K+K_U}\left(  \frac12\imap^2_i(\trv^{(k)}; \mapp_i) - \log\left.\frac{\partial \imap_i(\trv;\mapp_i)}{\partial \trv_i}\right|_{\trv^{(k)}}\right)\right| \right],\label{eq:func_markov}
\end{equation}
for any $\delta_2>0$ and all $\gamma_i$.  Notice that the expectation on the right hand side of this expression is finite because the map is bi-Lipschitz \eqref{eq:dUB}, the proposal density is bounded by Gaussian densities (see Lemma \ref{lem:propBnds}), and the map is linear for large $\|\theta\|$ (see \eqref{eq:ubMap}).  Thus, 
\begin{equation}
\lim_{K\rightarrow \infty} \mathbb{P}\left[ \left| f_i^{(K+K_U)}(\mapp_i) -  f_i^{(K)}(\mapp_i) \right| \geq \delta_2 \right]  = 0 \quad \forall \gamma_i. \label{eq:funcConv}
\end{equation}

We now show that this implies the convergence of $\| \mapp_i^{(K+K_U)}-\mapp_i^{(K)} \|$.  First, consider a set $\mathcal{C}^{(K)}$ that depends on $\delta_2$ and takes the form
\begin{equation}
\mathcal{C}^{(K)} = \left\{ \mapp_i \, : \, f_i^{(K)}(\mapp_i)-\delta_2 \leq f_i^{(K)}(\mapp_i^{(K)})+\delta_2 \right\} \label{eq:ck_conv}.
\end{equation}
By definition, $\mathcal{C}^{(K)}$ will always contain $\mapp_i^{(K)}$.    Recall that $f_i^{(K)}$ is convex and admits a unique global minimizer.  Thus, as $\delta_2\rightarrow 0$, the set $\mathcal{C}^{(K)}$ will collapse on $\mapp_i^{(K)}$ and the maximum distance between any two points in $\mathcal{C}^{(K)}$ will go to zero.  This implies that for any $\delta>0$, there exists a $\delta_2$ such that 
\begin{equation}
 \sup_{\mapp_i, \mapp_i^\prime \in \mathcal{C}^{(K)}} \|\mapp_i - \mapp_i^\prime\|  < \delta. \label{eq:ck_bounds}
\end{equation}
We will now combine this expression with \eqref{eq:funcConv}.  Notice that for any $\delta_2>0$, \eqref{eq:funcConv} implies that
\begin{equation}
\lim_{K\rightarrow \infty} \mathbb{P}\left( \mapp_i^{(K+K_U)} \in \mathcal{C}^{(K)} \right) = 1.
\end{equation}
Combining this result with \eqref{eq:ck_bounds} yields 
\begin{equation}
\lim_{K\rightarrow \infty} \mathbb{P}\left( \left\| \mapp_i^{(K+K_U)}
    -  \mapp_i^{(K)} \right\| \geq \delta \right) = 0 \, ,
\end{equation}
which is the desired condition in \eqref{eq:da_goal1}.
\end{proof}

\section{Maple exudation model details}\label{sec:maplemodel}
The forward model in Section \ref{sec:perf:maple} is a complicated system of differential-algebraic equations describing maple sap dynamics.  Here we give a minimal description of the model. Interested readers should consult the original derivation in \cite{Ceseri2013}. 

In addition to the differential algebraic system defined by \eqref{eq:MapleDae1}--\eqref{eq:MapleAlg5}, the volumes $V^v$ and $V_g^v(t)$ are given by
\begin{eqnarray*}
V^v &=& \pi (R^v)^2L^v\\
V_g^v(t) &=& \pi r(t)^2L^v\\
N&=& \frac{2\pi(R^f+R^v+W)}{2R^f+W}.
\end{eqnarray*}
The system is solved using MUQ~\cite{MUQ}, which in turn links to SUNDIALS~\cite{Sundials}.  The initial conditions for the state variables $s_{gi}$, $s_{iw}$, and $r(t)$ are derived from a steady state solution. We put $U(0)=0$.

The temperature field is assumed to be quasi-steady and is defined by the heat equation
\begin{align}
\partial_{x}\left(\kappa(x) \partial_x T(x,t)\right) & =  0 & \text{for } & x\in(\,s_{iw}(t),\, R^f+2R^v\,) \label{eq:Temperature} \\
T(x,t) & = 0 & \text{at } & x=s_{iw}(t) \nonumber \\ 
\kappa_w \partial_x  T(x,t)&= h(T_a(t)-T(x,t)) & \text{at } &
                                                              x=R^f+2R^v, \nonumber
\end{align}
where $T_a(t)$ is a transient temperature forcing at the edge of the computational domain ($x=R^f+2R^v$), $h=10$ is a heat transfer coefficient, and the thermal conductivity is defined piecewise as
\begin{equation*}
\kappa(x) = \left\{\begin{array}{cl}
 \kappa_w & x\in[\, s_{iw}(t),\, R_f+R_v-r(t)\,) \\
 \kappa_g & x\in[\, R_f+R_v-r(t),\, R_f+R_v+r(t)\,) \\
 \kappa_w & x\in[\, R_f+R_v+r(t),\, R_f+2R_v\,]\end{array}\right.  ,
\end{equation*}
where $\kappa_w$ is the thermal conductivity of water and  $\kappa_g$ is the thermal conductivity of air.  At any particular time, it is straightforward to solve \eqref{eq:Temperature} analytically, yielding a piecewise linear temperature field.

\begin{table}
\centering
\caption{Relationship between inference targets $\theta$ and model parameters for the maple problem.}
\begin{tabular}{|c|c|}\hline
Model variable & Transformation from $\theta$ \\\hline
$s_{gi}(0)$ & $(0.5\theta_2+ 0.5)\exp\left[0.2\log(\bar{R}^f)\theta_7 + \log(\bar{R}^f)\right]$\\
$s_{iw}(0)$ & $\exp\left[0.2\log(\bar{R}^f)\theta_7 + \log(\bar{R}^f)\right]$\\ 
$r(0)$ & $(0.5\theta_2+ 0.5)\exp\left[0.2\log(\bar{R}^v)\theta_8 + \log(\bar{R}^v)\right]$\\
$p_g^f(0)$ & $\num{50e3}\theta_3+\num{150e3}$\\
$K$ & $0.2\log(\bar{K})\theta_4 + \log(\bar{K})$\\
$W$ & $0.2\log(\bar{W})\theta_5 + \log(\bar{W})$ \\
$c_s^v$ & $0.2\log(\bar{c}_s^v)\theta_6 + \log(\bar{c}_s^v)$\\
$R^f$ & $0.2\log(\bar{R}^f)\theta_7 + \log(\bar{R}^f)$\\
$R^v$ & $0.2\log(\bar{R}^v)\theta_8 + \log(\bar{R}^v)$\\
$L^f$ & $0.2\log(\bar{L}^f)\theta_9 + \log(\bar{L}^f) $\\
$L^v$ & $0.2\log(\bar{L}^v)\theta_{10} + \log(\bar{L}^v)$\\\hline
\end{tabular}
\label{tab:MapleTransforms}
\end{table}

The inference parameters $\trv$ are related to the model parameters in \eqref{eq:MapleDae1}--\eqref{eq:MapleAlg5} using the transformations in Table \ref{tab:MapleTransforms}; variables with an overbar are default parameters taken from \cite{Ceseri2013} and shown in Table \ref{tab:MapleDefaults}.  Values for the remaining physical constants are listed in Table \ref{tab:MapleConstants}.

\begin{table}
\centering
\caption{Default values used to generate synthetic data and to scale the inference parameters  in the maple problem.}
\begin{tabular}{|c|c|c|p{7cm}|}\hline
Symbol & Value & Units & Description\\\hline
$\bar{s}_{gi}(0)$  & $0.7\bar{R}^f$ & $\si{\meter}$ & Initial location of gas-ice interface in fiber.\\
$\bar{s}_{iw}(0)$  &  $\bar{R}^f$ & $\si{\meter}$ & Initial location of ice-water interface in fiber.\\
$\bar{r}(0)$  &  $0.3\bar{R}^v$ & $\si{\meter}$ & Initial radius of vessel gass bubble.\\
$\bar{p}_g^f(0)$  &  $\num{200e3}$ & $\si{\pascal}$ & Initial gas pressure in fiber. \\
$\bar{K}$  & $\num{1.98e-14}$ & $\si{\meter\per\second}$ & Hydraulic conductivity of fiber-vessel wall.\\
$\bar{W} $ & $\num{3.64e-6}$ & $\si{\meter}$ & Thickness of fiber-vessel wall. \\ 
$\bar{c}_s^v$  &  $\num{58.4}$ & $\si{\mole\per\meter\cubed}$ & Sucrose concentration in vessel sap. \\
$\bar{R}^f$  & $\num{3.5e-6}$ & $\si{\meter}$ & Fiber radius. \\
$\bar{R}^v$ & $\num{2e-5}$ & $\si{\meter}$ & Vessel radius. \\
$\bar{L}^f$ & $\num{1e-3}$ & $\si{\meter}$ & Fiber length. \\
$\bar{L}^v$ &  $\num{5e-4}$ & $\si{\meter}$ & Vessel length.\\
\hline
\end{tabular}
\label{tab:MapleDefaults}
\end{table}

\begin{table}
\centering
\caption{Physical constants used in maple exudation model.}
\begin{tabular}{|c|c|c|p{5.5cm}|}\hline
Symbol & Value & Units & Description\\\hline
$\rho_i$ & $917$ & $\si{\kilogram\per\meter\cubed}$ & Density of water ice.\\
$\rho_w$ & $1000$ & $\si{\kilogram\per\meter\cubed}$ & Density of water. \\
$\lambda$ &   $\num{3.34e5}$ & $\si{\joule\per\kilogram}$ & Latent heat of fusion for water. \\
$R$ & $8.314$ & $\si{\joule\per\mole\per\kelvin}$ & Universal gas constant.\\
$g$ & $9.81$ & $\si{\meter\per\second\squared}$ & Acceleration due to gravity. \\
$\sigma$ & $0.0756$ & $\si{\newton\per\meter}$ & Surface tension of water. \\
$H$ & $0.0274$ & -- & Henry's constant for air and water. \\
$M_g$ & $0.0290$ & $\si{\kilogram\per\mole}$ & Molar mass of air.\\
$\kappa_w$ & $0.580$ & $\si{\watt\per\meter\per\kelvin}$ & Thermal conductivity of water.\\
$\kappa_g$ & $0.0243$ & $\si{\watt\per\meter\per\kelvin}$ & Thermal conductivity of air.\\
\hline
\end{tabular}
\label{tab:MapleConstants}
\end{table}

\bibliographystyle{siamplain} 
\bibliography{arxiv_main}

\end{document}